\def\@seccntformat#1{\@ifundefined{#1@cntformat}%
   {\csname the#1\endcsname\quad}  % default
   {\csname #1@cntformat\endcsname}% enable individual control
}
\let\oldappendix\appendix %% save current definition of \appendix
\renewcommand\appendix{%
    \oldappendix
    \newcommand{\section@cntformat}{\appendixname~\thesection\quad}
}
\newtheorem*{theorem*}{Theorem}
\newtheorem*{lemma*}{Lemma}
\newtheorem*{proposition*}{Proposition}
\newcommand{\ket}[1]{|#1\rangle}
\newcommand{\bra}[1]{\langle #1|}
\newcommand{\htriple}[3]{\{ #1 \} #2 \{ #3 \}}
\newcommand{\sem}[1]{[\![ #1]\!]}
\newcommand{\s}[1]{\{#1\}}
\newcommand{\C}{{\mathbb C}}
\newcommand{\gen}[1]{\langle#1\rangle}
\newcommand{\Tr}{\textbf{Tr}}
\newcommand{\WP}{ WP}
\newcommand{\wpd}{wp}
\newcommand{\pt}{pt}
\newlength{\localh}
\newlength{\locald}
\newbox\mybox
\def\mp#1#2{\scalebox{0.8}{\setbox\mybox\hbox{#2}\localh\ht\mybox\locald\dp\mybox\addtolength{\localh}{-\locald}\raisebox{-#1\localh}{\box\mybox}}}
\begin{document}

%%
%% The "title" command has an optional parameter,
%% allowing the author to define a "short title" to be used in page headers.
\title{On the Relative Completeness of Satisfaction-based Quantum Hoare Logic}

%
%\titlerunning{Abbreviated paper title}
% If the paper title is too long for the running head, you can set
% an abbreviated paper title here
%
\author{Xin Sun  \and
Xingchi Su  \and
Xiaoning Bian  \and
Huiwen Wu 
}
\authorrunning{X. Sun et al.}
% First names are abbreviated in the running head.
% If there are more than two authors, 'et al.' is used.
%
\institute{Zhejiang Lab, Hangzhou, China}
\maketitle              % typeset the header of the contribution

\begin{abstract}

Quantum Hoare logic (QHL) is a formal verification tool specifically designed to ensure the correctness of quantum programs. There has been an ongoing challenge to achieve a relatively complete satisfaction-based QHL with while-loop since its inception in 2006. This paper presents a solution by proposing the first relatively complete satisfaction-based QHL with while-loop. The completeness is proved in two steps. First, we establish a semantics and proof system of Hoare triples with quantum programs and deterministic assertions. Then, by utilizing the weakest precondition of deterministic assertion, we construct the weakest preterm calculus of probabilistic expressions. The relative completeness of QHL is then obtained as a consequence of the weakest preterm calculus. Using our QHL, we formally verify the correctness of Deutsch's algorithm and quantum teleportation.

\keywords{ Hoare logic \and quantum program \and  relative completeness \and   weakest precondition}
\end{abstract}

\section{Introduction}

\iffalse
The authors~\cite{d2006quantum} introduce the concept of the weakest precondition for quantum computation and demonstrate that the existence of these preconditions for completely positive maps can be deduced from the Kraus representation theorem~\cite{nielsen2001quantum}.
In~\cite{chadha2006reasoning}, the authors introduce a logic to analyze the states of basic quantum imperative programs, offering a reliable Hoare-style calculus for such programs.
The concept of a quantum predicate introduced in a previous work ~\cite{chadha2006reasoning} is utilized by Feng et al.~\cite{feng2007proof} to examine a basic language fragment that characterizes the quantum aspect of a quantum computer. They expanded the proof guidelines for classical probabilistic loops~\cite{morgan1996proof} to encompass quantum loops.
Ying~\cite{Ying2011Floyd-Hoare} presents an extensive Floyd-Hoare logic that addresses the concepts of partial and total correctness in quantum programs. This logic is considered relatively complete as it incorporates the power of the weakest preconditions and the weakest liberal preconditions in the context of quantum programs.
Additionally, Zhou et al.~\cite{zhou2019applied} develop a practical quantum Hoare logic, a modified version of quantum Hoare logic (QHL), by limiting QHL to specific types of preconditions and postconditions. This simplifies the process of verifying quantum programs and introduces additional rules to assess the robustness of such programs.
\fi

In the realm of computer science, Hoare logic serves as a tool for formally verifying the correctness of computer programs. It was first established by C. A. R. Hoare in his groundbreaking paper \cite{hoare1969axiomatic} published in 1969, and later expanded upon in \cite{Hoare71}. When executing a program, it's important to consider the conditions that must be met before and after executing a command. The precondition describes the property that the command is based on, while the postcondition describes the desired outcome of the command after it has been correctly executed. This is the key idea of Hoare logic which has become one of the most influential tools in the formal verification of programs in the past decades. It has been successfully applied in the analysis of deterministic \cite{hoare1969axiomatic,Hoare71,Winskel93}, non-deterministic \cite{Dijkstra75,Dijkstra76,Apt84}, recursive \cite{Hoare71,FoleyH71,AptBO09}, probabilistic \cite{Ramshaw79,Hartog02,chadha2007reasoning,rand2015vphl}, and quantum programs \cite{Ying2011Floyd-Hoare,LiuZWYLLYZ19,Unruh19,ZhouYY19,DengF22}. A comprehensive review of Hoare logic is referred to Apt, Boer, and Olderog \cite{AptBO09,apt2019fifty}.

Quantum Hoare logic (QHL) is an extension of the classical Hoare logic. It is specifically designed to ensure the correctness of quantum programs. To make QHL work, a programming language that supports quantum commands, such as unitary operations and measurements, as well as other classical commands is required. The assertion language used in QHL describes quantum states, such as unit vectors and density operators, along with the probabilistic distributions over them. Hoare triples are used to formalize how quantum programs change the states by showing the update on assertions. The proof system characterizes the reasoning about Hoare triples so as to expose the deductive relationship between the state changes caused by quantum programs. There has been a lot of research on QHL that can be classified into two approaches: expectation-based and satisfaction-based. Table \ref{tbl:comparison} compares them in terms of different types of programming languages, including the classical, probabilistic, and Quantum programs. Note that, $\mathcal D (\mathcal H)$ and $\mathcal{P(\mathcal{H})}$ in row Quantum-expectation stand for the set of density operators and Hermitian operators over Hilbert space $\mathcal H$ with certain conditions, following the notations in \cite{Ying2011Floyd-Hoare}.

The first QHL was developed by Chadha \textit{et al} \cite{ChadhaMS06}, which is considered to be the first satisfaction-based construction. Kakutani \cite{Kakutani09} established another satisfaction-based QHL for a functional quantum programming language designed by Selinger \cite{Sel2004-qpl}. Unruh \cite{Unruh2019QHL} provided a satisfaction-based QHL to feature ``ghost variables'', which is used to extend the expressive power of assertions. 

Following the expectation-based approach, several different QHLs have been developed by Ying et al in their series of work \cite{Ying2011Floyd-Hoare,Ying2019TowardAutoVerification,ZhouYY19,Feng2021quantum-classical}. They investigate pure quantum language and infinite-dimensional quantum variables, new auxiliary proof rules, quantum-classical language, and restricted assertions, respectively.

\begin{table}
    \caption{Comparison of the basic notions in different language paradigms.}
    \label{tbl:comparison}
    \centering
\scalebox{0.96}{
    \begin{tabular}{|c|c|c|c|}
    \hline
                    & State                 & Semantic space of   & Satisfiability of\\
                    &                       & assertions          & an assertion $a$\\
    \hline 
    \hline
    \multirow{2}{*}{Classical}      & set of evaluations & subsets of states                 & \multirow{2}{*}{$S \in \sem{a}$} \\
                                    & $S \in \mathbb{S}$    & $\mathbb S \rightarrow \s{0,1}$   & \\
    \hline
    Probabilistic   & (sub)distributions                    & random variables              & \multirow{2}{*}{$\sum_{S \in \mathbb S} \sem{a}(S)\mu(S)$}\\
    (expectation)   & $\mu \in \mathbb S \rightarrow [0,1]$ & $\mathbb S \rightarrow [0,1]$ &\\
    \hline
    Probabilistic   & (sub)distributions                    & subsets of distributions                              & \multirow{2}{*}{$\mu \in \sem{a}$}\\
    (satisfaction)  & $\mu \in \mathbb S \rightarrow [0,1]$ & $ (\mathbb{S} \rightarrow [0,1]) \rightarrow \{0,1\}$ &\\
    \hline
    Quantum         & cq-state                                                      & cq-assertion                                                           & \multirow{2}{*}{$\sum_{S \in \mathbb S} \Tr [\Delta(S) \Theta(S)]$}\\
    (expectation)   & $\Delta \in \mathbb{S} \rightarrow \mathcal{D}(\mathcal{H})$  & $\Theta \in \mathbb{S} \rightarrow \mathcal{P}(\mathcal{H})$ & \\
    \hline
    Quantum         & distributions of cq-state                             & subsets of distributions                                                                & \multirow{2}{*}{$\mu \in \sem{a}$}\\
    (satisfaction)  & $\mu \in (\mathbb{S} \,\times\, \mathcal{H}) \rightarrow[0,1]$  & $((\mathbb{S} \,\times\, \mathcal{H} )\rightarrow[0,1]) \rightarrow \{0,1\}$    & \\
    \hline
    \end{tabular}}

    \normalsize
\end{table}

Several variants of QHL have been verified and  implemented. The QHLProver \cite{LiuZWYLLYZ19} was the first to formalize QHL in a proof assistant and used the verified implementation to verify the correctness of Grover's algorithm. CoqQ \cite{Zhou2023} verifies the soundness of QHL adapted from \cite{Ying2011Floyd-Hoare, Ying2019TowardAutoVerification} in the Coq proof assistant. The work then uses CoqQ to verify several quantum algorithms, including Grover's algorithm and quantum phase estimation.

%There are other work on quantum algorithm reasoning and verification in proof assistants that are not based on QHL \cite{Rand2017,Hietala2021a,Hietala2021b,Chareton2021,Bordg2020,Peng2023}. Note that QBRICKS\cite{Chareton2021} also uses Hoare logic to reason circuit-building programs written in their domain specific language QBRICKS-DSL which does not support measurement. So QBRICKS is not, strictly speaking, a QHL in our definition. The quantum parts of Shor’s factorization algorithm is verified using QBRICKS. \cite{Peng2023} reports the first ``formally certified end-to-end implementation of Shor’s factorization algorithm''.

The problem of establishing the relative completeness of  satisfaction-based QHL with While-loop has remained unresolved since its inception in 2006 \cite{ChadhaMS06}. This paper aims to address this issue by developing the first satisfaction-based QHL with while-loop that is relatively complete. To accomplish this, we follow the same approach as used to demonstrate the relative completeness of satisfaction-based probabilistic Hoare logic (PHL) \cite{SunSBLC24}. Our solution involves two steps. Firstly, we establish a semantics and proof system of Hoare triples with quantum programs and deterministic assertions. Then, using the weakest precondition of deterministic assertion, we construct the weakest preterm calculus of probabilistic expressions. The relative completeness of QHL is then obtained as a consequence of the weakest preterm calculus.

This paper is organized as follows: Section \ref{sec:deterministic} constructs a relatively complete QHL with deterministic assertion. %First, the assertion language, programming language, their semantics, a proof system, and definition of weakest pre-condition are given. Then the relatively completeness using weakest pre-condition and soundness of the QHL are proved.
 Section \ref{sec:probabilistic} extends the result of the QHL with deterministic assertion to the QHL with probabilistic assertion.   Section \ref{sec:conclusion} concludes the paper and discusses possible future work. The Appendix introduces some basics of quantum computing, gives proof details of major theorems, and provides two applications of our QHL: correctness proofs of Deutsch's algorithm and quantum teleportation.

\section{Quantum Hoare logic with deterministic assertion}
\label{sec:deterministic}
 
\subsection{Deterministic expressions and formulas}
 
Let $\mathbb{PV}= \{X,Y,Z,\ldots\}$ be a set of classical program variables. Let $\mathbb{LV}= \{x,y,z,\ldots\}$ be a set of logical variables.  Program variables are those variables that may occur only in programs. They constitute deterministic expressions. Deterministic expressions are classified into arithmetic expression $E$ and Boolean expression $B$. The arithmetic expression consists of integer constants $n \in \mathbb{Z}$, variables from $\mathbb{PV}$ and arithmetic operators $\{+,-,\times,...\}\subseteq \mathbb{Z}\times\mathbb{Z}\rightarrow \mathbb{Z}$. %In contrast, logical variables are used only in assertions which are to be defined in Section~\ref{sec.proofsystemdeter}

\begin{definition}[Arithmetic expressions]
     Given a set of program variables $\mathbb{PV}$, we define the arithmetic expression $E$ as follows:

\begin{center}
    $E:=n\mid X\mid (E\ aop\ E)$.
\end{center}
\end{definition}

\noindent
An arithmetic expression ($E$) can take the form of an integer constant ($n$), a program variable ($X$), or a combination of two arithmetic expressions ($E \ aop \ E$) composed by an arithmetic operation ($aop$). %They intuitively represent integers in programs. 

The Boolean constant set is $\mathbb{B}=\{\top,\bot\}$. We use relational operators ($rop$) {\em i.e.,} $\{>,<,\geq,=, ...\}\subseteq\mathbb{Z}\times\mathbb{Z}\rightarrow \mathbb{B}$ to compare arithmetic expressions ($E$) which represent integers in programs. 

%And logical operators, e.g. $\wedge,\vee,\neg,\rightarrow,...$, can be applied to any Boolean expressions. 

\begin{definition}[Boolean expressions] The Boolean expression is defined as follows:

 \begin{center}
     $B:=\top\mid \bot\mid (E\ rop\ E)\mid \neg B\mid (B\wedge B).$
 \end{center}
 
\end{definition}
 
A Boolean expression is a statement that can only be true or false. In the expression $(E\ rop\ E)$, the truth value is determined by the binary relation $rop$ between two integers. %And any other standard logical operators applied on Boolean expressions, e.g. $B_1 \vee B_2, B_1 \rightarrow B_2,...$, can be expressed by the negation and the conjunction.

The semantics of deterministic expressions are defined on deterministic states, which are mappings $S: \mathbb{PV}\rightarrow\mathbb{Z}$. Let $ \mathbb{S}$ be the set of all deterministic states.
Each state $S \in  \mathbb{S}$ is a description of the value of each program variable. Consequently, the semantics of the arithmetic expressions is $[\![E]\!]:\ \mathbb{S}\rightarrow\mathbb{Z}$ which assigns an integer value to each deterministic state. Similarly, the semantics of Boolean expressions is $[\![B]\!]:\ \mathbb{S}\rightarrow\mathbb{B}$ that maps each state to a Boolean value. 

\begin{definition}[Semantics of deterministic expressions]\label{semanticsdeterministicexpression}
The semantics of deterministic expressions are defined inductively as follows:

\begin{center}
    \begin{tabular}{rll}
      $[\![X]\!]S$ & = & $S(X)$ \\
      $[\![n]\!]S$ & = & $n$ \\
      $[\![E_{1}\ aop\ E_{2}]\!]S$ & = & $[\![E_{1}]\!]S\ aop\ [\![E_{2}]\!]S$\\
      $[\![\top]\!]S$ & = & $\top$\\
      $[\![\bot]\!]S$ & = & $\bot$\\
      $[\![E_{1}\ rop\ E_{2}]\!]S$ & = & $[\![E_{1}]\!]S\ rop\ [\![E_{2}]\!]S$\\
      $[\![\neg B]\!]S$ & = & $\neg[\![B]\!]S$\\
      $[\![B_{1}\wedge  B_{2}]\!]S$ & = & $[\![B_{1}]\!]S\wedge\  [\![B_{2}]\!]S$
    \end{tabular}
\end{center}
\end{definition}

%As mentioned above, the interpretation of an arithmetic expression is an integer. A program variable $X$ on a deterministic state is interpreted as its value on the state. A constant is always itself over any state. An arithmetic expression $E_1\ aop\ E_2$ is mapped to the integer calculated by the operator $aop$ applied on the interpretation of $E_1$ and the interpretation of $E_2$ on the state. The Boolean expressions can be understood similarly. %For example, let $S$ be a state such that $S(X)=1$ and $S(Y)=2$. Then $[\![X\ +\ 1]\!]S=2$ and $[\![(X+2\leq 3) \wedge\ (X+Y=3)  ]\!]S=\top$. 

\subsection{Commands}

We use a  command called \textit{random assignment} that allows programs to exhibit probabilistic behaviors and use quantum operations such as unitary operation and measurement to exhibit quantum computation. Let $\mathbb{QV} =\{q_1,\ldots, q_m\}$ be a finite set of  quantum variables.

\begin{definition}[Syntax of command expressions]\label{syntaxcommandexpression}
    The commands are defined inductively as follows: 
    \vspace{-2mm}
\begin{center}
    $C:=skip\mid X\leftarrow E\mid X\xleftarrow[]{\$}R\mid C_{1};C_{2}\mid if\ B\ then\ C_{1}\ else\ C_{2}  \mid while \mbox{ } B \mbox{ } do \mbox{ } C   \mid $
 
\end{center}

\begin{center}

  % $  q_i  \leftarrow |0\rangle  \mid$
$   U[\overline{q}] \mid X  \leftleftarrows q_i$
\end{center}

\noindent
where $R=\{a_1: r_1,\cdots,a_n: r_n\}$ in which  $\{ r_1,\cdots, r_n\}$ is a set of integers and $a_1, \ldots , a_n$ are real numbers such that $0<a_i <1$ and $a_1+ \ldots+a_n =1$.
\vspace{-2mm}
\end{definition}

The command $skip$ represents a null command that does not do anything. $X\leftarrow E$ is the normal assignment. $X\xleftarrow[]{\$}R$ is the  random assignment which can be read as a value $r_i$ is selected randomly with probability $a_i$ and assigned to $X$. $C_1;C_2$ stands for the sequential execution of  $C_1$ and $C_2$. The next two expressions are the classical conditional choice and loop, respectively. In terms of quantum commands, $ U[\overline{q}]$ applies the unitary operator $U$ to the quantum variables indicated by $   \overline{q}$. The command $ X  \leftleftarrows q_i $ measures the quantum variables $q_i$ using the computational basis and stores the measurement result in the classical variable $X$. The basics of quantum computation can be referred to in Appendix A.

For every quantum variable $q_i$, we assume a two-dimensional Hilbert space $\mathcal{ H}_{i}$. Let $\mathcal{ H}  = \mathcal{ H}_{1} \otimes \ldots \otimes \mathcal{ H}_{m} $ be the Hilbert space of all quantum variables. A quantum state is a unit vector $|u\rangle \in \mathcal{ H}$. We assume that two unit vectors,  which differ only in a globe phase, represent the same quantum state. A pure classical-quantum state (cq-state) is a pair $\theta=(S, |u\rangle ) $ of a classical state and a quantum state. We use $\theta_c$ to denote its classical part and $\theta_q$ to denote its quantum part. That is, we let $\theta_c = S$ and $\theta_q= |u\rangle$.

Let $\mathbb{PS}$ be the set of all pure cq-states. A probabilistic cq-state (also called mixed cq-state), denoted by $\Theta$, is a probability sub-distribution on pure cq-states, i.e. $\Theta: \mathbb{PS}\to [0,1]$ such that $\Sigma_{\theta \in\mathbb{PS}} \Theta(\theta)\leq 1$. It is worth noting that we utilize subdistributions considering the cases where some programs may never terminate. We use $D(\mathbb{PS})$ to denote the set of all probabilistic cq-states.

For an arbitrary pure cq-state $ \theta \in\mathbb{PS}$, $\widehat{\theta} $ denotes a special probabilistic cq-state which maps $\theta$ to 1 and maps any other state to 0, {\em i.e.}, a point distribution. 

%Let $D_s(\mathbb{S})=\{\mu_S\mid S\in\mathbb{S}\}$. We call $D_s(\mathbb{S})$ the probability form of the deterministic states.

\begin{definition}[Semantics of command expressions]\label{def.semanticsofcommand}
The semantics of command expressions is a function $[\![C]\!]\in D(\mathbb{PS})\rightarrow D(\mathbb{PS})$. It is defined inductively as follows:

\begin{itemize}

\item  $[\![skip]\!](\Theta)=\Theta$

\item $[\![X\leftarrow E]\!](\widehat{\theta})=    \widehat{\theta'}  $, where $\theta'_c = \theta_c[X \mapsto [\![E]\!]\theta_c ]$, $\theta'_q = \theta_q $.

\item $[\![X\leftarrow E]\!]( \Theta )=    \displaystyle\sum\limits_{\theta \in \mathbb{PS}}  \Theta(\theta)\cdot   \widehat{\theta'}$, where $\widehat{\theta'}= [\![X\leftarrow E]\!](\widehat{\theta})$.

\item $[\![X\xleftarrow{\$}\{a_1:r_{1},...,a_n:r_{n}\}]\!](\Theta)=  \displaystyle\sum_{i=1}^{n} a_i [\![X\leftarrow r_{i}]\!](\Theta)$.

\item $[\![C_{1};\ C_{2}]\!](\Theta)=[\![C_{2}]\!]([\![C_{1}]\!](\Theta))$

\item   $[\![if\ B\ then\ C_{1}\ else\ C_{2}]\!](\Theta)=[\![C_{1}]\!](\downarrow_{B}(\Theta))+[\![C_{2}]\!](\downarrow_{\neg B}(\Theta))$.

\item $[\![   while \mbox{ } B \mbox{ } do \mbox{ } C  ]\!] (\Theta ) =  \displaystyle\sum_{i = 0}^{\infty} \downarrow_{\neg B}  (  (  [\![   C   ]\!] \circ \downarrow_{ B})^i (\Theta ) ) $.

%\item  $[\![  q_i  \leftarrow |0\rangle  ]\!](\widehat{\theta})=   \widehat{\theta'}$, where $\theta'_c =\theta_c $, $\theta_q ' =     \displaystyle\sum\limits_{j=0}^{1}  \theta_j    $, where 
%$ \theta_j  = \frac{ |j\rangle_i\langle 0| \theta_q |0\rangle_i \langle j|}{\Tr( |j\rangle_i\langle 0| \theta_q |0\rangle_i \langle j|)} $ \\
%if $\Tr( |j\rangle_i\langle 0| \theta_q |0\rangle_i \langle j|) \neq 0 $, otherwise $ \theta_j  =   |j\rangle_i\langle 0| \theta_q |0\rangle_i \langle j| $ 

\item $[\![  U[\overline{q}]  ]\!](\widehat{\theta})=\widehat{\theta'}$, where  $\theta'_c=\theta_c$,$ \theta_q' =U_{\overline{q}}\theta_q$.

\item $[\![  U[\overline{q}]  ]\!](\Theta)=\displaystyle\sum\limits_{\theta\in\mathbb{PS}} \Theta(\theta)\cdot [\![  U[\overline{q}]  ]\!](\widehat{\theta})$.

\item $[\![   X  \leftleftarrows q_i    ]\!] (\widehat{\theta}) =  \displaystyle\sum\limits_{j=0}^{1}    p_j \widehat{\theta_j}   $, where $p_j=\Tr( ( |j\rangle \langle j| \otimes I_{-i})   \theta_q    \theta_q^{\dagger}         ) $, $\theta_{j, c}= \theta_c[X \mapsto j ]$, $\theta_{j, q}= \frac{( |j\rangle \langle j| \otimes I_{-i}) \theta_q      }{  \sqrt{p_j}   }  $ if $p_j \neq 0$, otherwise $\theta_{j, q}=  (|j\rangle \langle j| \otimes I_{-i}) \theta_q     $.

\item $[\![   X  \leftleftarrows q_i    ]\!] (\Theta)=\displaystyle\sum\limits_{\theta\in\mathbb{PS}}\Theta(\theta)\cdot [\![  U[\overline{q}]  ]\!](\widehat{\theta})$.

\end{itemize}

\end{definition}

%Intuitively the semantics of a command shows how it turns a deterministic state (in probability form) into a probabilistic state. 

The command $skip$ changes nothing. We write $\theta_c  [X\mapsto[\![E]\!]\theta_c]$ to denote the classical state that assigns the same values to the variables as $\theta_c$, except the variable $X$ is assigned the value $[\![E]\!]\theta_c$.  Here, $\downarrow_{B}(\Theta )$ denotes the distribution $\Theta $ restricted to those pure states that make $B$ true. Formally, $\downarrow_{B}(\Theta )= \Lambda  $ with $\Lambda(\theta)= \Theta(\theta)$ if $[\![B]\!]\theta_c =\top$ and $\Lambda( \theta )=0$ otherwise. So the condition choice can split any $\Theta $ into two parts {\em with respect to} a boolean expression $B$, one of which is the sub-distribution of states that satisfy $B$ and then execute the command $C_{1}$, the remaining execute the command $C_{2}$.

%A quantum measurement $M$ is described by a collection  $\{ M_1, \ldots, M_m \}$ of measurement operators, where the indices $\{1,\ldots,m\}$  refer to the measurement outcomes. It is required that the measurement operators satisfy the completeness equation $\displaystyle\sum\limits_{i=0}^{m}   M_i^{\dagger} M_i =I $.

\subsection{Proof system with deterministic assertions}\label{sec.proofsystemdeter}

Now we define deterministic assertions based on deterministic expressions. %Formulas can be treated as assertions of states. %Deterministic formulas describe the deterministic aspects of states. 

\begin{definition}[Syntax of deterministic assertions]\label{def.syntaxdeterministicformulas}
The deterministic assertions are defined by the following BNF:
\begin{center}
$\phi:= \top\mid \bot  \mid P_j^0 \mid P_j^1  \mid  (e\ rop\ e) \mid \neg\phi\mid (\phi\wedge  \phi)\mid \forall x\phi  \mid  [U_{\overline{q}}] \phi \mid   [ \mathfrak{P}_j^0] \phi \mid  [ \mathfrak{P}_j^1] \phi  \mid \bigwedge\limits_{i=0}^{\infty} \phi_i$
\end{center}
where $e$ represents arithmetic expression build on $\mathbb{LV} \cup \mathbb{PV}$:

\begin{center}
    $e:=n\mid X\mid x\mid (e\ aop\ e)$.
\end{center}

%$rop$ and $lop$ represent relational and logical operations, respectively.
\end{definition}

Here, $\top$ and $\bot$ are boolean constants which represent true and false, respectively.  $P^i$ is the projector $|i\rangle \langle i |$.  
We restrict the logical operators to the classical operators $\neg$ and $\wedge$ since $\vee$ and $\to$ can be expressed in the standard way. The formula $\forall x\phi$ applies a universal quantifier to the logical variable $x$ in formula $\phi$. $[U_{\overline{q}}] \phi$ intuitively means that $\phi$ is true after the unitary operation $U$ is applied on the qubits $\overline{q} $. $ [ \mathfrak{P}_j^i] \phi$ means that $\phi$ is true after the qubit $q_j$ is projected to state $|i\rangle$. $\bigwedge_{i=0}^{\infty} \phi_i$ is the infinitary  conjunction. We will need an infinitary conjunction to define the weakest preconditions.

An interpretation $I: \mathbb{LV} \mapsto \mathbb{Z}$ is a function that maps logical variables to integers. Given an interpretation $I$ and a pure cq-state $\theta$, the semantics of $e$ is defined as follows.

\begin{center}
    \begin{tabular}{rll}
     $[\![n]\!]^I \theta$ & = & $n$ \\
      $[\![X]\!]^I  \theta$ & = & $\theta_c(X)$ \\
      $[\![x]\!]^I  \theta$ & = & $I(x)$ \\
      $[\![E_{1}\ aop\ E_{2}]\!]^I \theta$ & = & $[\![E_{1}]\!]^I \theta\ aop\ [\![E_{2}]\!]^I \theta$\\
    \end{tabular}
\end{center}

\noindent
The semantics of a deterministic assertion, to be defined in the next definition, is denoted by $[\![\phi]\!]^I=\{\theta :  \theta \models^I\phi\}$ which represents the set of all states satisfying $\phi$. 

\begin{definition}[Semantics of deterministic assertions]\label{def.semanticsdeterministicformula}
The semantics of deterministic assertions is defined inductively as follows:

\begin{center}
\begin{tabular}
{p{.15\textwidth}p{.04\textwidth}p{.7\textwidth}}
$[\![\top]\!]^I$ & = & $\mathbb{PS}$\\
$[\![\bot]\!]^I$ & = & $\emptyset$\\ 

$[\![P_j^i]\!]^I$ & =& $\{  \theta \in\mathbb{PS}\mid  \Tr(P^i \Tr_{-j} ( \theta_q \theta_q^{\dagger}) ) =1  \} $\\ 
$[\![e_{1}\ rop\ e_{2}]\!]^I$ & = & $\{\theta  \in\mathbb{PS}\mid [\![e_{1}]\!]^I \theta_c\ rop\ [\![e_{2}]\!]^I \theta_c=\top\}$\\
$[\![\neg\phi]\!]^I$ & = & $\mathbb{PS}\backslash[\![\phi]\!]^I$\\
$[\![\phi_{1}\wedge\phi_{2}]\!]^I$ & = & $[\![\phi_{1}]\!]^I\cap[\![\phi_{2}]\!]^I$\\
%$[\![\phi_{1}\vee\phi_{2}]\!]^I$ & = & $[\![\phi_{1}]\!]^I\cup[\![\phi_{2}]\!]^I$\\
$[\![\forall x\phi]\!]^I$ &  $=$ & $\{ \theta  \mid  $ for all integer $n$ and $I'=I[x \mapsto n]$, $\theta\in [\![ \Phi]\!]^{I'} \}$\\
$[\![ [U_{\overline{q}}] \phi    ]\!]^I$ &  $=$ & $\{ \theta  \mid  U_{\overline{q}} \theta \models^I \phi \} $, here $  U_{\overline{q}} \theta$ is short  for $(\theta_c,  U_{\overline{q}} \theta_q)$ \\
$[\![ [ \mathfrak{P}_j^i] \phi   ]\!]^I$ &  $=$ & $\{ \theta  \mid    \theta' \models^I \phi, $  where $\theta'_c=\theta_c$,  $\theta'_q=\frac{(P^i_j \otimes I_{-j}) \theta_q      }{  \sqrt{p_i}   } $, $p_i= \Tr( (P^i_j \otimes I_{-j})   \theta_q   \theta_q^{\dagger}         )  $ and $p_i > 0 $   $ \} $ \\

%$[\![ [ \mathfrak{P}_j^1] \phi   ]\!]^I$ &  $=$ & $\{ \theta  \mid    \theta' \models^I \phi, $  where $\theta'_c=\theta_c$,  $\theta'_q=\frac{(P^1_j \otimes I_{-j}) \theta_q      }{  \sqrt{p_1}   } $, $p_1= \Tr((P^1_j \otimes I_{-j})  \theta_q   \theta_q^{\dagger})  $ and $p_1 > 0 $   $ \} $ \\

$[\![ \bigwedge\limits_{i=0}^{\infty} \phi_i   ]\!]$ &  $=$ &  $ \bigcap_{i=0}^{\infty} [\![\phi_{i}]\!]^I$
        \end{tabular}
    \end{center}
\end{definition}

%The $[\![\top]\!]^I$ defaults to all pure cq-states $\mathbb{PS}$, while $[\![\bot]\!]^I$ is interpreted as the empty set. The symbol $\backslash$ denotes complement, and $[\![\neg\phi]\!]^I$ represents the set of remaining states in $\mathbb{PS}$ after removing all states satisfying $\phi$. The logical operator $\wedge$ between formulas can be interpreted as an intersection of pure cq-state sets which satisfy corresponding formulas, respectively. And the formula $\forall x \phi$ is satisfied on a pure cq-state with interpretation $I$ if and only if $\phi$ is true with respect to all interpretations $I'$ which assigns the same values to every variable as $I$ except $x$. 

\noindent
Most of the above cases are self-explanatory. $[U_{\overline{q}}]\phi$ is true on these pure cq-states which renders $\phi$ true after applied the unitary operator $U$ on quantum variables $\overline{q}$. $[ \mathfrak{P}_j^i] \phi$ is satisfied on a pure cq-state $\theta$ if and only if after  projecting variable $q_i$ in $\theta_q$ to the quantum state $|i\rangle$, the obtained pure cq-state $\theta'$ satisfies $\phi$.

Definition~\ref{def.semanticsdeterministicformula} gives the semantics of deterministic formulas over pure cq-states. We straightforwardly extend the semantics to probabilistic states as follows:

%But how to evaluate a deterministic formula on probabilistic cq-states? The semantics is straightforward to be given as follows:  

\begin{center}
    $\Theta  \models^I \phi$   iff   for each support $\theta$ of $\Theta$, $\theta \in [\![\phi]\!]^I$.
\end{center}

\noindent 
We call it possibility semantics because the definition intuitively means that a deterministic formula $\phi$ is true on a probabilistic cq-state $\Theta$ if and only if $\phi$ is true on all possible pure cq-state. %It implies that all supports of a distribution shares a common property. Hence we can claim that the distribution satisfies the formula. 

%We also call the proof system of our probabilistic Hoare logic pHL although there are several differences from the previous formal system. 

A QHL proof system consists of Hoare triples. A Hoare triple $\{\phi\}C\{\psi\}$ is valid if for each pure cq-state that satisfies $\phi$, after the termination of the execution command $C$, the mixed consequence of the cq-state satisfies $\psi$. Formally,  

\begin{center}
$\models \{\phi\}C\{\psi\}$ if for each interpretation $I$ and each pure cq-state $\theta$, if $\theta \models^I  \phi$, then $[\![C]\!](\widehat{\theta}) \models^I \psi$.
\end{center}

%The validity of Hoare triples suggests that we only consider executing a program from a given deterministic states rather than starting from a probabilistic state. This is a classical characterization. 

\noindent
We now build a proof system (QHL$_d$) for the derivation of Hoare triples with deterministic assertions. %Most rules in our proof system are standard, and they are inherited from Hoare logic or natural deduction \cite{apt2019fifty}.  %  The symbol  $\phi[X/E]$ represents the formula which replaces every  occurrence of $X$ in $\phi$ with $E$.

\begin{definition}[Proof system QHL$_d$]
    The proof system  QHL$_d$ consists of the following inference rules:

    \begin{center}
        \begin{tabular}{ll}
         $SKIP:$ & $\frac{}{\vdash\{\phi\}\texttt{skip}\{\phi\}}$\\
         \\
         $AS:$ & $ \frac{}{\vdash \{   \phi[X/ E]   \}  X \leftarrow E\{ \phi  \} }$\\
         \\
         $PAS:$ & $   \frac{}{\vdash \{   \phi[X/k_1] \wedge \ldots \wedge  \phi[X/k_n]   \}  X\xleftarrow{\$}\{a_1: k_{1},...,a_n: k_{n} \}   \{ \phi  \} }$\\  
         \\
         $SEQ:$ & $ \frac{\vdash\{\phi\}C_{1}\{\phi_{1}\}\quad \vdash\{\phi_{1}\}C_{2}\{\phi_{2}\}} {\vdash\{\phi\}C_{1};C_{2}\{\phi_{2}\}} $\\
        \\
          $IF:$ & $ \frac{\vdash\{\phi\wedge B\}C_{1}\{\psi\}\quad \vdash\{\phi\wedge\neg B\}C_{2}\{\psi\}} {\vdash\{\phi\} \texttt{if}\ B\ \texttt{then}\ C_{1}\ \texttt{else}\ C_{2}\{\psi \} } $\\          \\
           $WHILE:$ & $\frac{\vdash \{\phi  \wedge B\}   C   \{\phi\}    } {\vdash\{\phi  \}    \texttt{while} \mbox{ }B \mbox{ } \texttt{do} \mbox{ }C   \{\phi \wedge \neg B    \}}$\\          
           \\
           $UNITARY:$ & $\frac{     } {\vdash\{ [U_{\overline{q}}] \phi  \}     U[\overline{q}]     \{\phi    \}}$\\
                \\
           $MEASURE:$ & $\frac{    } {\vdash\{  (  [ \mathfrak{P}^0_j ] \phi[X/0] \vee P^1_j ) \wedge   ( [ \mathfrak{P}^1_j ] \phi[X/1] \vee P^0_j )      \}    X  \leftleftarrows  q_j     \{\phi    \}}$\\

           	  \\
          $CONS:$ & $ \frac{\models\phi'\rightarrow\phi\quad  \vdash\{\phi\}C\{\psi\}\quad \models\psi\rightarrow\psi'}  {\vdash\{\phi'\}C\{\psi'\}}$\\
          \\
         % $AND:$ & $ \frac{\vdash\{\phi_{1}\}C\{\psi_{1}\}\quad \vdash\{\phi_{2}\}C\{\psi_{2}\}} {\vdash\{\phi_{1}\wedge\phi_{2}\}C\{\psi_{1}\wedge\psi_{2}\}}$\\
        %  \\
          %$OR:$ & $ \frac{\vdash\{\phi_{1}\}C\{\psi_1\}\quad \vdash\{\phi_{2}\}C\{\psi_2\}} {\vdash\{\phi_{1}\vee\phi_{2}\}C\{\psi_1 \vee \psi_2\}}$\\

        \end{tabular}
    \end{center}
\end{definition}

\subsection{Soundness and completeness}

\begin{theorem}[Soundness]\label{th.soundnessphl}
 The proof system QHL$_d$ is sound, $i.e.$, for all deterministic formula $\phi$ and $\psi$ and command $C$, $\vdash\{\phi\}C\{\psi\}$ implies $\models\{\phi\}C\{\psi\}$. 
\end{theorem}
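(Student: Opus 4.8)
The plan is to prove soundness by induction on the structure of the derivation of $\vdash\{\phi\}C\{\psi\}$, checking that each rule of QHL$_d$ preserves semantic validity: the axioms SKIP, AS, PAS, UNITARY, MEASURE are the base cases, while SEQ, IF, WHILE and CONS are the inductive steps, where the induction hypothesis converts each premise $\vdash\{\cdot\}\cdot\{\cdot\}$ into its semantic counterpart $\models\{\cdot\}\cdot\{\cdot\}$. Before treating the rules I would isolate two auxiliary lemmas. The first is a \emph{substitution lemma}: for a pure cq-state $\theta$, $\theta \models^I \phi[X/e]$ iff $\theta[X\mapsto \sem{e}^I\theta_c] \models^I \phi$, where $\theta[X\mapsto v]$ updates only the classical component $\theta_c$ and leaves $\theta_q$ untouched; it is proved by induction on $\phi$ and underlies AS, PAS and MEASURE. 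The second is a \emph{lifting lemma}: since every $\sem{C}$ is assembled from linear operations on $D(\mathbb{PS})$ (with a countable nonnegative sum in the case of WHILE), and since the possibility semantics declares $\Theta \models^I \phi$ exactly when every support state of $\Theta$ satisfies $\phi$, pure-state validity lifts to probabilistic inputs: if $\models\{\phi\}C\{\psi\}$ then $\Theta \models^I \phi$ implies $\sem{C}(\Theta) \models^I \psi$. The structural fact driving this is that the support of a nonnegative combination $\sum_i a_i \Theta_i$ (finite or infinite) is the union of the supports of those $\Theta_i$ with $a_i>0$.

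With these in place, most rules are routine. SKIP is immediate from $\sem{skip}(\widehat{\theta})=\widehat{\theta}$, and UNITARY holds essentially by definition, since $\theta \models^I [U_{\overline{q}}]\phi$ unfolds to $U_{\overline{q}}\theta \models^I \phi$, which is the unique support point of $\sem{U[\overline{q}]}(\widehat{\theta})$. AS and PAS follow from the substitution lemma once one notes that the support of the random-assignment output is $\{\theta[X\mapsto k_i] : 1 \le i \le n\}$, so the finite conjunction $\bigwedge_i \phi[X/k_i]$ forces every branch into $\phi$. SEQ applies the lifting lemma to push the (probabilistic) intermediate assertion $\phi_1$ through the composition $\sem{C_2}\circ\sem{C_1}$. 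For IF, a pure $\theta$ makes exactly one of $\downarrow_{B}(\widehat{\theta})$, $\downarrow_{\neg B}(\widehat{\theta})$ equal to $\widehat{\theta}$ and the other the zero distribution (which any command maps to the zero distribution with empty support); the matching premise applies, and the union-of-supports fact recombines the two summands. CONS reduces to the monotonicity of the possibility semantics under the valid implications $\phi'\to\phi$ and $\psi\to\psi'$.

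The hard part will be WHILE, and to a lesser degree MEASURE. For WHILE the semantics is the infinite sum $\sem{while\ B\ do\ C}(\Theta)=\sum_{i=0}^{\infty}\downarrow_{\neg B}((\sem{C}\circ\downarrow_{B})^i(\Theta))$, so starting from $\theta\models^I\phi$ I would first show by induction on $i$ that $\phi$ is an invariant: every support state of $(\sem{C}\circ\downarrow_{B})^i(\widehat{\theta})$ satisfies $\phi$. Here the $\downarrow_{B}$ step strengthens $\phi$ to $\phi\wedge B$, which licenses the induction-hypothesis premise $\models\{\phi\wedge B\}C\{\phi\}$, and the lifting lemma carries $\phi$ through $\sem{C}$. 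Applying $\downarrow_{\neg B}$ to the $i$-th iterate then yields a distribution whose support satisfies $\phi\wedge\neg B$, and the union-of-supports lemma propagates this to the whole infinite sum; controlling the support of this limit is the most delicate point and is exactly where nonnegativity of the weights is needed. For MEASURE I would do a case analysis on the outcome probabilities $p_0,p_1$ of $X\leftleftarrows q_j$: whenever $p_i>0$ the output distribution contains the renormalized post-measurement state $\theta_i$ with $X$ set to $i$, and the conjunct $[\mathfrak{P}^i_j]\phi[X/i]$ together with the substitution lemma forces $\theta_i\models^I\phi$; the disjuncts $P^1_j$ and $P^0_j$ precisely absorb the degenerate cases in which one outcome has probability zero (equivalently the other is certain) and therefore contributes nothing to the support, so the stated precondition suffices to guarantee that every support state of the measurement output satisfies $\phi$.
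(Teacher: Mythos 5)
Your proposal is correct and takes essentially the same route as the paper's proof: a rule-by-rule induction in which AS, PAS and MEASURE rest on substitution reasoning (the paper inlines your substitution lemma as an inner induction on the structure of $\phi$), SEQ and WHILE rest on lifting pure-state validity to probabilistic states via the union-of-supports fact (the paper carries out your lifting lemma inline when it picks an intermediate support state $\theta_1$ for SEQ), WHILE is handled by exactly your induction showing $\phi$ holds on every support state of $(\sem{C}\circ\downarrow_{B})^k(\widehat{\theta})$ before summing the $\downarrow_{\neg B}$ terms, and MEASURE by a case split that coincides with yours, since $\theta\models P^i_j$ iff $p_i=1$ (so the paper's cases $P^1_j$, $P^0_j$, $\neg P^0_j\wedge\neg P^1_j$ are your cases $p_0=0$, $p_1=0$, $p_0,p_1>0$). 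Your only departure is organizational---factoring the substitution and lifting lemmas out as standalone statements rather than proving them inside the AS and SEQ/WHILE cases---which does not change the substance of the argument.
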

\begin{proof}
Please refer to Appendix \ref{proof:soundness-d}.
\end{proof}

%\subsubsection{Completeness}

%To give the completeness of PHL$_d$, we  assume that the set of program variables $\mathbb{PV}$ is finite. 
%Under this assumption, every deterministic state can be characterized by a unique formula. For example,  $S=\{X\mapsto 1, Y\mapsto 2, Z\mapsto 3\}$ is characterized by the formula $(X=1\wedge Y=2\wedge Z=3)$ when $\mathbb{PV}=\{X,Y,Z\}$. 

In order to determine the completeness, we use the method of weakest preconditions denoted by $\wpd(C,\psi)$. Essentially, we want to find the largest set of states (represented as an assertion) from which if a program $C$ is executed, the resulting states satisfy the given postcondition $\psi$, and $[\![ \wpd(C,\psi) ] \!]$ represents this set. This set is considered to be a precondition since $\models \{\wpd(C,\psi) \} C \{ \psi\}$ and it is also the weakest precondition because if $\models \{\phi\}C\{\psi\}$, then $\models \phi \rightarrow \wpd(C,\psi)$. Although the weakest precondition is a semantic notion, we can represent it syntactically using $ \wpd(C,\psi)$. In this paper, we will treat the weakest precondition as a syntactic notion when it is more convenient for our constructions and proofs.

%Completeness is given in light of the method of weakest preconditions. Let $  \wpd(C,\psi) $ be the weakest precondition (up to semantic equivalence) of a command $C$ and a postcondition $\psi$. Intuitively, $[\![ \wpd(C,\psi) ] \!]$ is the largest set of states (represented as an assertion) from which if a program $C$ is executed, the resulting states satisfy a given postcondition $\psi$. $[\![ \wpd(C,\psi) ] \!]$ is a precondition in the sense that $\models \{\wpd(C,\psi) \} C \{ \psi\}$ and it is weakest since for each formula $\phi$, if $\models \{\phi\}C\{\psi\}$, then $\models \phi \rightarrow \wpd(C,\psi)$. In the strict sense, the weakest precondition is a semantic notion. That is, $[\![ \wpd(C,\psi)] \!]$ is the weakest precondition of command $C$ and postcondition $\psi$, while $ \wpd(C,\psi)$ is a syntactic representative of the weakest precondition. In this article, however, we will treat the weakest precondition as a syntactic notion when this treatment is more convenient for our constructions and proofs.

\begin{definition}[Weakest precondions]\label{def.weakestprecondition}
The weakest precondition is defined inductively on the structure of commands as follows:

\begin{enumerate}
\item $\wpd(skip, \phi) = \phi$.
\item  $\wpd(  X \leftarrow E, \phi )= \phi[X/ E ]$.
\item  $\wpd( X\xleftarrow{\$}\{a_1: k_{1},...,a_n: k_{n} \}, \phi  ) =   \phi[X/k_1] \wedge \ldots \wedge  \phi[X/k_n]$.
\item $\wpd( C_1 ; C_2 , \phi) = \wpd(C_1, \wpd(C_2, \phi))$.
\item $\wpd( if\ B\ then\ C_{1}\ else\ C_{2}, \phi   )=  (B \wedge \wpd(C_1, \phi))  \vee (\neg B \wedge \wpd(C_2, \phi))$.
\item $\wpd( while \mbox{ }B \mbox{ } do \mbox{ }C  , \phi)= \bigwedge\limits_{k = 0}^{\infty}  \psi_k   $, 
where $\psi_0 = \top$ and $\psi_{i+1} = ( B \wedge \wpd( C, \psi_{i}))  \vee ( \neg B  \wedge \phi)$.

\item $\wpd(U[\overline{q}], \phi ) = [U_{\overline{q}}] \phi$.

\item $\wpd(X \leftleftarrows  q_j  ,\phi) = (     [\mathfrak{P}^0_j] \phi[X/0] \vee P^1_j    ) \wedge (   [\mathfrak{P}^1_j] \phi[X/1]  \vee  P^0_j  ) $.

\end{enumerate}
\end{definition}

Some readers may be wondering if it is possible to define $\wpd( while \mbox{ }B \mbox{ } do \mbox{ }C, \phi)$ without using an infinitary conjunction. In classical and probabilistic Hoare logic, the weakest precondition of the while-loop can be defined using G$\ddot{o}$del's $\beta$ predicate without an infinitary conjunction. Unfortunately, this does not appear to be possible in the current setting. The reason for this is that in our QHL, the number of pure cq-states is uncountable, whereas in classical and probabilistic Hoare logic, the number of (pure) states is countable. This is why they can be encoded by natural numbers, which is a crucial step in defining the weakest precondition. 

If we want to remove an infinitary conjunction, we have to reduce the number of pure cq-states to countable infinity. This is possible by, for example, restricting unitary operators to the Clifford+T family and letting the set of cq-states be the minimal set that contains $|0\rangle$ and is closed under the Clifford+T operations. However, the rigorous study of this issue will be left for future work.

\begin{theorem}\label{th.precondition}

$\Theta \models^I \wpd(C,\phi)$ iff $[\![ C]\!] (\Theta) \models^I \phi$.

\end{theorem}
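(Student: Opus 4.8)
The plan is to prove the biconditional --- call it $P(C)$ --- by structural induction on $C$, running the command semantics (Definition~\ref{def.semanticsofcommand}) and the clauses of $\wpd$ (Definition~\ref{def.weakestprecondition}) in lockstep. Two observations organize the argument. First, by the possibility semantics $\Theta \models^I \psi$ holds iff every pure state in $\mathrm{supp}(\Theta)$ lies in $\sem{\psi}^I$; and since each $\sem{C}$ is defined on point distributions and extended linearly, $\mathrm{supp}(\sem{C}(\Theta)) = \bigcup_{\theta \in \mathrm{supp}(\Theta)} \mathrm{supp}(\sem{C}(\widehat{\theta}))$. Hence in the primitive cases it suffices to verify $P$ on point distributions $\widehat{\theta}$, the general case following by ranging over supports. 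Second, I would isolate a substitution lemma --- $\theta \in \sem{\phi[X/e]}^I$ iff $\theta' \in \sem{\phi}^I$, where $\theta'_q = \theta_q$ and $\theta'_c = \theta_c[X \mapsto \sem{e}^I\theta]$ --- which is the workhorse for all the assignment-like cases.

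The base and primitive cases are then short. For $skip$ the two sides coincide; for $X \leftarrow E$ and the random assignment $X \xleftarrow{\$} R$ the claim is the substitution lemma applied to each of the finitely many resulting classical states (all with positive weight, hence all in the support); for $U[\overline{q}]$ it is immediate from the clause $\sem{[U_{\overline{q}}]\phi}^I = \{\theta : U_{\overline{q}}\theta \models^I \phi\}$. The composition case chains the two hypotheses: $\sem{C_1;C_2}(\Theta) = \sem{C_2}(\sem{C_1}(\Theta))$, so applying $P(C_2)$ and then $P(C_1)$ with postcondition $\wpd(C_2,\phi)$ yields exactly $\wpd(C_1,\wpd(C_2,\phi))$. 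For $if\ B\ then\ C_1\ else\ C_2$ I would split $\mathrm{supp}(\Theta)$ by whether $B$ holds; since a pure state makes $B$ definitely true or definitely false, the disjunction $(B \wedge \wpd(C_1,\phi)) \vee (\neg B \wedge \wpd(C_2,\phi))$ decomposes over $\downarrow_{B}(\Theta)$ and $\downarrow_{\neg B}(\Theta)$, matching the two summands of the output. The measurement case is the subtlest of the non-loop cases: the support of $\sem{X \leftleftarrows q_j}(\widehat{\theta})$ contains the post-measurement state $\theta_i$ exactly when $p_i = \Tr((P^i_j \otimes I_{-j})\theta_q\theta_q^{\dagger}) > 0$, and one checks $\theta \in \sem{P^{1-i}_j}^I$ iff $p_i = 0$. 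Thus each disjunct $\sem{[\mathfrak{P}^i_j]\phi[X/i] \vee P^{1-i}_j}^I$ encodes precisely the requirement ``if outcome $i$ is possible, then $\theta_i \models^I \phi$'', where the substitution lemma identifies the normalized projected state carrying $X \mapsto i$ with $\theta_i$.

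The real obstacle is the while-loop, which I would handle by a nested induction. Writing $T = \sem{C}\circ{\downarrow_{B}}$ for one guarded iteration, the semantics gives $\sem{while\ B\ do\ C}(\Theta) = \sum_{i=0}^{\infty} \downarrow_{\neg B}(T^i(\Theta))$, so by the support observation $\sem{while\ B\ do\ C}(\Theta) \models^I \phi$ iff $\downarrow_{\neg B}(T^i(\Theta)) \models^I \phi$ for every $i \ge 0$. The key lemma, proved by induction on $k$ using the outer hypothesis $P(C)$, is the invariant
\[
\Theta \models^I \psi_k \iff \downarrow_{\neg B}(T^i(\Theta)) \models^I \phi \ \text{ for all }\ 0 \le i < k .
\]
The step from $k$ to $k+1$ unfolds $\psi_{k+1} = (B \wedge \wpd(C,\psi_k)) \vee (\neg B \wedge \phi)$ exactly as in the $if$ case: the $\neg B$-part handles the $i=0$ iterate $\downarrow_{\neg B}(\Theta)$, while the $B$-part, after applying $P(C)$ and the inner hypothesis to $T(\Theta)$, handles the iterates $1 \le i \le k$ via the reindexing $T^{i}(T(\Theta)) = T^{i+1}(\Theta)$. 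Finally, since $\sem{\bigwedge_{k} \psi_k}^I = \bigcap_k \sem{\psi_k}^I$, one has $\Theta \models^I \wpd(while\ B\ do\ C,\phi)$ iff $\Theta \models^I \psi_k$ for all $k$; interchanging the universal quantifiers over $k$ and over the iterates $i$ turns this into ``$\downarrow_{\neg B}(T^i(\Theta)) \models^I \phi$ for all $i$'', the desired right-hand side. The points that demand care are thus exactly the bookkeeping of this quantifier interchange and the support behaviour of the infinite sum defining the loop semantics; everything else reduces to the possibility semantics and the substitution lemma.
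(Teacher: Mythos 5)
Your proposal is correct, and for all the non-loop cases it coincides with the paper's proof up to organization: the paper also reduces to point distributions, and in the (AS)/(PAS) cases it inlines exactly the structural induction on $\phi$ (including the $[U_{\overline{q}}]$ and projection-modality cases) that you factor out as a standalone substitution lemma; your reading of the (MEASURE) precondition as ``if outcome $i$ is possible then the normalized post-measurement state satisfies $\phi[X/i]$,'' with $\theta \models P^{1-i}_j$ iff $p_i=0$, is precisely what the paper verifies by case analysis on $p_0,p_1$. The genuine divergence is the while-loop. The paper proves the two directions asymmetrically: for $(\Rightarrow)$ it takes a detour through the proof system, invoking the soundness theorem together with Proposition~\ref{prop.weakestprecondition} ($\vdash \{\wpd(C,\phi)\}C\{\phi\}$, whose WHILE case rests on the entailment $\models (B \wedge \bigwedge_k \psi_k) \rightarrow (B \wedge \bigwedge_k \wpd(C,\psi_k))$ and the WHILE rule); for $(\Leftarrow)$ it argues operationally by induction on the number of iterations before termination, with a separate argument for a non-terminating run. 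You instead prove both directions at once, purely semantically, via the invariant $\Theta \models^I \psi_k$ iff $\downarrow_{\neg B}(T^i(\Theta)) \models^I \phi$ for all $i<k$ (with $T = \sem{C}\circ \downarrow_B$), followed by the quantifier interchange over $k$ and $i$. Your route buys self-containedness --- no dependence of Theorem~\ref{th.precondition} on Theorem~\ref{th.soundnessphl} and Proposition~\ref{prop.weakestprecondition}, which keeps the logical order of the development cleaner --- and it handles non-termination and mixed terminating/non-terminating supports uniformly, since the zero subdistribution satisfies everything vacuously under the possibility semantics; the paper's operational case analysis covers this only informally. What the paper's route buys is brevity in the forward direction by reusing already-proved machinery. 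One point your sketch leans on that deserves an explicit line in a full write-up is the linearity of $\sem{C}$ on subdistributions (so that supports of $T^i(\Theta)$ and of the infinite sum $\sum_i \downarrow_{\neg B}(T^i(\Theta))$ decompose as unions over supports, with no cancellation since all weights are non-negative); this is true for every construct of the language but is nowhere stated as a lemma in the paper.
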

\begin{proof}
Please refer to Appendix \ref{proof:wpd}.
\end{proof}

%\noindent
%The interested readers may consult  \cite{Winskel93} for more details.  Here we stick to the infinite conjunction for ease of proof.

\begin{proposition}\label{prop.weakestprecondition}
It holds that $ \vdash \{   \wpd(C, \phi) \}   C\{\phi\}$.
\end{proposition}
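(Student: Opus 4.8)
The plan is to prove $\vdash \{\wpd(C,\phi)\} C \{\phi\}$ by induction on the structure of the command $C$, reading off each clause of Definition~\ref{def.weakestprecondition} against the matching rule of QHL$_d$. For the atomic commands the two are deliberately aligned: for $skip$, $X \leftarrow E$, $X \xleftarrow{\$} R$, $U[\overline{q}]$ and $X \leftleftarrows q_j$, the triple $\{\wpd(C,\phi)\}C\{\phi\}$ is verbatim an instance of the axioms SKIP, AS, PAS, UNITARY and MEASURE, so these cases need only inspection.

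For sequencing I apply the induction hypothesis twice: to $C_2$ with postcondition $\phi$, and to $C_1$ with postcondition $\wpd(C_2,\phi)$; rule SEQ then glues the two derivations, and the result is exactly $\{\wpd(C_1;C_2,\phi)\}C_1;C_2\{\phi\}$ since $\wpd(C_1;C_2,\phi)=\wpd(C_1,\wpd(C_2,\phi))$. For the conditional, write $\chi=(B\wedge\wpd(C_1,\phi))\vee(\neg B\wedge\wpd(C_2,\phi))$. Any pure cq-state satisfying $\chi\wedge B$ satisfies $\wpd(C_1,\phi)$ (the $\neg B$-disjunct is discarded), so $\models(\chi\wedge B)\rightarrow\wpd(C_1,\phi)$; combining this with the induction hypothesis $\vdash\{\wpd(C_1,\phi)\}C_1\{\phi\}$ via rule CONS gives $\vdash\{\chi\wedge B\}C_1\{\phi\}$, and symmetrically $\vdash\{\chi\wedge\neg B\}C_2\{\phi\}$. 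Rule IF then closes the case.

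The loop is the main obstacle. Put $\chi=\wpd(\texttt{while } B \texttt{ do } C,\phi)=\bigwedge_{k=0}^{\infty}\psi_k$ with $\psi_0=\top$ and $\psi_{i+1}=(B\wedge\wpd(C,\psi_i))\vee(\neg B\wedge\phi)$. I will use rule WHILE with $\chi$ as the invariant, which demands $\vdash\{\chi\wedge B\}C\{\chi\}$. I obtain this from the induction hypothesis $\vdash\{\wpd(C,\chi)\}C\{\chi\}$ and rule CONS, once I establish the key semantic entailment $\models(\chi\wedge B)\rightarrow\wpd(C,\chi)$. Here Theorem~\ref{th.precondition} does the real work: for a pure cq-state $\theta$ with $\theta\models^I\chi\wedge B$, satisfaction of each $\psi_{k+1}$ together with $\theta\models^I B$ forces $\theta\models^I\wpd(C,\psi_k)$ for every $k$, hence $\sem{C}(\widehat{\theta})\models^I\psi_k$ for every $k$, i.e. $\sem{C}(\widehat{\theta})\models^I\chi$, and a final appeal to Theorem~\ref{th.precondition} yields $\theta\models^I\wpd(C,\chi)$. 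Rule WHILE then gives $\vdash\{\chi\}\,(\texttt{while } B \texttt{ do } C)\,\{\chi\wedge\neg B\}$. Since every $\theta\models^I\chi\wedge\neg B$ satisfies $\psi_1$ with its $B$-disjunct discarded, we get $\theta\models^I\phi$, i.e. $\models(\chi\wedge\neg B)\rightarrow\phi$; one last use of CONS rewrites the postcondition to $\phi$, completing the loop case and the induction.

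Two points deserve care. First, the induction runs on the command and not on the assertion, so the induction hypothesis for the loop body $C$ may be instantiated freely at the infinitary invariant $\chi$, which is precisely what licenses $\vdash\{\wpd(C,\chi)\}C\{\chi\}$. Second, every side condition passed to CONS is a validity that I verify on pure cq-states and then lift through the possibility semantics; Theorem~\ref{th.precondition} is the bridge connecting the syntactic weakest precondition to the loop-unfolding approximants $\psi_k$, and it is what makes the invariant argument go through.
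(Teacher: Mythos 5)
Your proof is correct and takes essentially the same route as the paper's: structural induction on $C$, with the atomic cases read off the axioms, SEQ and IF handled by the induction hypothesis plus CONS, and the loop case using $\chi=\bigwedge_{k=0}^{\infty}\psi_k$ as the invariant fed into rules WHILE and CONS, followed by a final CONS with $\models(\chi\wedge\neg B)\rightarrow\phi$. The only (minor) divergence is how the loop side condition is discharged: you justify $\models(\chi\wedge B)\rightarrow\wpd(C,\chi)$ semantically via Theorem~\ref{th.precondition}, whereas the paper asserts $\models(B\wedge\bigwedge_{k\geq 0}\psi_k)\rightarrow(B\wedge\bigwedge_{k\geq 1}\wpd(C,\psi_k))$ directly and then invokes the induction hypothesis in a form that implicitly commutes $\wpd(C,\cdot)$ with the infinitary conjunction --- a step your appeal to Theorem~\ref{th.precondition} makes explicit.
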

\begin{proof}
Please refer to Appendix \ref{proof:prop-wp}.
\end{proof}

%The proof is easily given based on the proof system PHL$_d$. By soundness, we can infer that $\models \{   \wpd(C, \phi) \}   C\{\phi\}$.  It shows that Definition~\ref{def.weakestprecondition} indeed constructs the precondition given arbitrary postcondition and command. The following proposition states that these $\wpd(C,\phi)$s are the weakest.

%Put Theorem~\ref{th.precondition} and Proposition~\ref{prop.weakestprecondition} together, it is concluded that $\wpd(C,\phi)$ defined in Definition~\ref{def.weakestprecondition} is the weakest precondition for command $C$ and the postcondition $\phi$. Then the completeness proof can be given easily. 

\begin{lemma}

If $\models \{\phi \} C \{\psi \}$ then $\models \phi \rightarrow \wpd(C, \psi) $

\end{lemma}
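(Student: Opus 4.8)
The plan is to derive the statement almost immediately from Theorem~\ref{th.precondition}, the semantic characterisation of the weakest precondition. First I would unfold what $\models \phi \rightarrow \wpd(C,\psi)$ asserts. Reading $\rightarrow$ through $\neg$ and $\wedge$, and using the possibility semantics that lifts satisfaction from pure to probabilistic cq-states via supports, this amounts to the following: for every interpretation $I$ and every pure cq-state $\theta$, if $\theta \models^I \phi$ then $\theta \models^I \wpd(C,\psi)$. So I fix an arbitrary $I$ and an arbitrary pure cq-state $\theta$ with $\theta \models^I \phi$, and aim to establish $\theta \models^I \wpd(C,\psi)$.

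The bridge between the hypothesis and the conclusion is the point distribution $\widehat{\theta}$. Since the only support of $\widehat{\theta}$ is $\theta$, the possibility semantics gives $\widehat{\theta} \models^I \chi$ iff $\theta \models^I \chi$ for every deterministic assertion $\chi$; in particular $\widehat{\theta} \models^I \phi$. Now I invoke the hypothesis $\models \{\phi\}C\{\psi\}$: by the definition of Hoare-triple validity, $\theta \models^I \phi$ directly yields $[\![C]\!](\widehat{\theta}) \models^I \psi$.

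Finally I would apply Theorem~\ref{th.precondition} instantiated at $\Theta = \widehat{\theta}$, which states that $\widehat{\theta} \models^I \wpd(C,\psi)$ iff $[\![C]\!](\widehat{\theta}) \models^I \psi$. The right-hand side has just been established, so $\widehat{\theta} \models^I \wpd(C,\psi)$, and hence $\theta \models^I \wpd(C,\psi)$ by the support equivalence noted above. As $I$ and $\theta$ were arbitrary, this gives $\models \phi \rightarrow \wpd(C,\psi)$, as desired.

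I do not anticipate a genuinely hard step, since all the content is carried by Theorem~\ref{th.precondition}. The only subtlety worth spelling out is the mismatch between the two formulations of satisfaction: the validity of the Hoare triple (and of the target implication) is phrased over pure cq-states $\theta$, whereas Theorem~\ref{th.precondition} is phrased over arbitrary probabilistic cq-states $\Theta$. The step reconciling them is the specialisation to point distributions $\widehat{\theta}$, relying on the fact that the possibility semantics makes $\widehat{\theta}$ and $\theta$ interchangeable for any deterministic assertion. One should merely check that this equivalence applies to $\wpd(C,\psi)$ itself, which is again a deterministic assertion (possibly an infinitary conjunction in the while-loop case); since the support clause of the extended semantics is uniform across all assertions of Definition~\ref{def.semanticsdeterministicformula}, this poses no obstacle.
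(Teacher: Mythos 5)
Your proposal is correct and follows essentially the same route as the paper: the paper's proof likewise takes $\theta \models \phi$, applies validity of the triple to obtain $[\![C]\!](\widehat{\theta}) \models \psi$, and concludes $\theta \models \wpd(C,\psi)$ by Theorem~\ref{th.precondition}. Your write-up merely makes explicit the specialisation of that theorem to the point distribution $\widehat{\theta}$ and the support equivalence between $\widehat{\theta}$ and $\theta$, which the paper's two-line proof leaves implicit.
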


\begin{proof}

If $\theta \models \phi$, then $[\![ C]\!]\theta\models \psi $. Therefore, $\theta \models \wpd(C,\psi)$.

\end{proof}

\begin{corollary}[Completeness]\label{th.completeness} If $\models \{\phi \} C \{ \psi\}$, then $\vdash \{\phi \} C \{ \psi\}$.

\end{corollary}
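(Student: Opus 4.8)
The plan is to recognize this as a textbook relative-completeness argument in the Cook style, where all the genuine machinery has already been assembled in the three results immediately preceding the statement. The completeness of a Hoare system reduces to two facts about the weakest precondition: that the semantic weakest precondition is expressible as an assertion and that the corresponding triple is derivable, and that any valid precondition semantically implies the weakest one. The $CONS$ rule then serves as the glue. Since the semantic implications are discharged by an oracle for the assertion logic, the resulting completeness is \emph{relative} to that oracle, as expected.

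Concretely, I would argue as follows. Assume $\models \{\phi\} C \{\psi\}$. First I would invoke the preceding Lemma, which states that validity of the triple yields the semantic implication $\models \phi \rightarrow \wpd(C,\psi)$; this is the step that says $\phi$ is no weaker than the weakest precondition. Next I would invoke Proposition~\ref{prop.weakestprecondition}, which gives the derivability $\vdash \{\wpd(C,\psi)\} C \{\psi\}$ of the canonical triple built from the weakest precondition. Finally I would apply the rule $CONS$ with precondition side $\phi' = \phi$, middle precondition $\wpd(C,\psi)$, postcondition $\psi$, and $\psi' = \psi$: its three premises are exactly $\models \phi \rightarrow \wpd(C,\psi)$ (from the Lemma), $\vdash \{\wpd(C,\psi)\} C \{\psi\}$ (from the Proposition), and the trivial $\models \psi \rightarrow \psi$. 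The conclusion of $CONS$ is precisely $\vdash \{\phi\} C \{\psi\}$, which is what we want.

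For the corollary in isolation there is essentially no obstacle; it is a one-line composition of the two cited results through $CONS$. The real difficulty is shouldered earlier, and I would flag it as such: the substantive content lives in Theorem~\ref{th.precondition}, establishing $\Theta \models^I \wpd(C,\phi)$ iff $\sem{C}(\Theta) \models^I \phi$, and in Proposition~\ref{prop.weakestprecondition}, establishing derivability of $\vdash \{\wpd(C,\phi)\} C \{\phi\}$. Both proceed by structural induction on $C$, and in each the while-loop case is the bottleneck, since $\wpd$ of a loop is the infinitary conjunction $\bigwedge_{k=0}^{\infty}\psi_k$ with $\psi_0 = \top$ and $\psi_{i+1} = (B \wedge \wpd(C,\psi_i)) \vee (\neg B \wedge \phi)$; handling this requires matching the infinitary conjunct in the assertion logic against the infinite sum in the loop semantics and relating it to the $WHILE$ rule via an invariant. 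Granting those earlier results, the completeness corollary itself follows immediately.
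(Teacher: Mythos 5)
Your proposal is correct and is essentially identical to the paper's own proof: both derive $\models \phi \rightarrow \wpd(C,\psi)$ from the preceding lemma, cite Proposition~\ref{prop.weakestprecondition} for $\vdash \{\wpd(C,\psi)\}\, C\, \{\psi\}$, and close with a single application of (CONS) using the trivial implication $\models \psi \rightarrow \psi$. Your remarks correctly locate the real work in Theorem~\ref{th.precondition} and the proposition's while-loop case, which matches the paper's structure.
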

 
\begin{proof} 
If $\models \{\phi \} C \{ \psi\}$, then $\models \phi \rightarrow  \wpd(C, \psi)$. By Proposition \ref{prop.weakestprecondition}, we know $\vdash \{ \wpd(C, \psi) \} C \{\psi \}$. Then $\vdash \{\phi \} C \{ \psi\}$ is obtained by applying the inference rule (CONS).  
\end{proof}

We have already established that QHL$_d$ is both sound and complete, but it only deals with deterministic formulas. Probabilistic assertions, on the other hand, are more expressive, as they can describe the probabilistic nature of probabilistic cq-states, such as the probability of $X>3$ being $\frac{1}{2}$. The next section delves into probabilistic assertions and their corresponding proof system. To begin, a useful lemma is presented, highlighting the relationship between the WHILE command and the IF command, which plays a significant role in the completeness proof in the following section.

\begin{lemma}\label{key lemma}
For all $i\geq 0$ and all pure cq-state $\theta$, if $\theta \models^I \neg \wpd(C^0 , \neg B ) \wedge \ldots \wedge  \neg \wpd(C^{i-1} , \neg B) \wedge   \wpd(C^{i} , \neg B ) $, then 
\[
[\![  \texttt{while} \mbox{ } B \mbox{ } \texttt{do} \mbox{ } C  ]\!] (\widehat{\theta}) =  [\![  ( \texttt{if}\mbox{ } B \mbox{ } \texttt{then}\mbox{ }  C \mbox{ }  \texttt{else}\mbox{ }  \texttt{skip} )^{i} ]\!] (\widehat{\theta}).
\]
\end{lemma}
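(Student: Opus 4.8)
The plan is to compute both sides of the claimed equation explicitly as subdistributions and then compare them. Write $\Theta_k := ([\![C]\!]\circ\downarrow_B)^k(\widehat{\theta})$, so that, directly from the semantics of \texttt{while}, the left-hand side equals $\sum_{k=0}^{\infty}\downarrow_{\neg B}(\Theta_k)$. Write $\Psi_k$ for $[\![(\texttt{if } B \texttt{ then } C \texttt{ else skip})^k]\!](\widehat{\theta})$, the right-hand side at $k=i$. I would (i) obtain a closed form for $\Psi_k$ in terms of the $\Theta_j$, and then (ii) use the hypothesis to show that the loop ``halts after exactly $i$ rounds'', i.e. $\downarrow_B(\Theta_i)=0$, which collapses the two expressions to the same value.

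For step (i) I would prove, by induction on $k$, the identity $\Psi_k = \downarrow_B(\Theta_k) + \sum_{j=0}^{k}\downarrow_{\neg B}(\Theta_j)$. The base case $k=0$ is $\widehat{\theta}=\downarrow_B(\widehat{\theta})+\downarrow_{\neg B}(\widehat{\theta})$. For the step, unfolding the semantics of the conditional gives $\Psi_{k+1}=[\![C]\!](\downarrow_B(\Psi_k))+\downarrow_{\neg B}(\Psi_k)$; I then simplify using the elementary facts $\downarrow_B+\downarrow_{\neg B}=\mathrm{id}$, $\downarrow_B\circ\downarrow_B=\downarrow_B$, and $\downarrow_B\circ\downarrow_{\neg B}=0$ (and symmetrically), which yield $\downarrow_B(\Psi_k)=\downarrow_B(\Theta_k)$ and $\downarrow_{\neg B}(\Psi_k)=\sum_{j=0}^{k}\downarrow_{\neg B}(\Theta_j)$; substituting $[\![C]\!](\downarrow_B(\Theta_k))=\Theta_{k+1}=\downarrow_B(\Theta_{k+1})+\downarrow_{\neg B}(\Theta_{k+1})$ then closes the induction. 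This part is routine bookkeeping once the right closed form is guessed.

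Step (ii) is where the hypothesis enters, and it is the main obstacle. By Theorem \ref{th.precondition} the final conjunct $\theta\models^I\wpd(C^i,\neg B)$ is equivalent to $[\![C^i]\!](\widehat{\theta})\models^I\neg B$, i.e. every pure state in the support of the \emph{unguarded} composition $[\![C^i]\!](\widehat{\theta})$ satisfies $\neg B$. The difficulty is that $\Theta_i$ is the \emph{guarded} iterate (a $\downarrow_B$ is inserted before each copy of $C$), which is a priori unrelated to $[\![C^i]\!](\widehat{\theta})$. I would bridge this gap by proving the pointwise domination $\Theta_k\preceq[\![C^k]\!](\widehat{\theta})$ for all $k$ (where $\Lambda\preceq\Lambda'$ means $\Lambda(\eta)\le\Lambda'(\eta)$ for every $\eta$), by induction: since $\downarrow_B(\Lambda)\preceq\Lambda$ and $[\![C]\!]$ is monotone for this order, $\Theta_{k+1}=[\![C]\!](\downarrow_B(\Theta_k))\preceq[\![C]\!](\Theta_k)\preceq[\![C]\!]([\![C^k]\!](\widehat{\theta}))=[\![C^{k+1}]\!](\widehat{\theta})$. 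Domination forces $\mathrm{supp}(\Theta_i)\subseteq\mathrm{supp}([\![C^i]\!](\widehat{\theta}))$, so every support state of $\Theta_i$ satisfies $\neg B$; hence $\downarrow_B(\Theta_i)=0$, and since $[\![C]\!](0)=0$ this propagates to give $\Theta_k=0$ for every $k>i$.

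Finally I would combine the two steps: from the closed form at $k=i$ and $\downarrow_B(\Theta_i)=0$ we get $\Psi_i=\sum_{j=0}^{i}\downarrow_{\neg B}(\Theta_j)$, while $\Theta_k=0$ for $k>i$ makes the loop's infinite sum collapse to $\sum_{k=0}^{\infty}\downarrow_{\neg B}(\Theta_k)=\sum_{j=0}^{i}\downarrow_{\neg B}(\Theta_j)$, so the two sides coincide. I would remark that only the last conjunct of the hypothesis is actually used here: the conjuncts $\neg\wpd(C^j,\neg B)$ for $j<i$ serve to make the indices ``exactly $i$'' mutually exclusive across $\theta$, which matters when assembling these cases in the completeness argument of the next section, but the equation itself already holds under $\theta\models^I\wpd(C^i,\neg B)$ alone. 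The one point requiring care is the monotonicity of $[\![C]\!]$ with respect to $\preceq$; this is not isolated as a lemma in the text, but follows directly from the inductive definition of $[\![\cdot]\!]$, each clause of which is a nonnegative linear combination built from the monotone operators $\downarrow_B$ and $\downarrow_{\neg B}$ and from composition.
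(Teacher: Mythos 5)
Your proposal is correct, and in substance it follows the same route as the paper's proof: both sides are decomposed into the guarded iterates $\Theta_k = ([\![ C ]\!]\circ\downarrow_B)^k(\widehat{\theta})$, the power of the conditional is unfolded into $\downarrow_{\neg B}$-projections of these iterates, and the hypothesis $\theta\models^I \wpd(C^i,\neg B)$ is used to annihilate all terms beyond index $i$. The difference is one of completeness rather than of method: the paper verifies only the cases $i=0,1,2$ by direct calculation and declares the remaining cases ``similar'', whereas you supply the missing uniform argument. Your closed form $\Psi_k = \downarrow_B(\Theta_k) + \sum_{j=0}^{k}\downarrow_{\neg B}(\Theta_j)$, proved by induction on $k$, is the general version of the paper's hand-expansion of $[\![ (\texttt{if}\ B\ \texttt{then}\ C\ \texttt{else}\ \texttt{skip})^2 ]\!](\widehat{\theta})$; and your domination lemma $\Theta_k \preceq [\![ C^k ]\!](\widehat{\theta})$, together with the (correctly identified, and genuinely needed) monotonicity of $[\![ \cdot ]\!]$, is the general form of the paper's one-line remark that $sp([\![ C ]\!]\circ\downarrow_B(\widehat{\theta})) \subseteq sp([\![ C ]\!](\widehat{\theta}))$, by which it passes from $\theta\models^I\wpd(C^2,\neg B)$ to $[\![ (C\circ\downarrow_B)^2 ]\!](\widehat{\theta})\models^I\neg B$. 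Your observation that only the final conjunct is used is also accurate and consistent with the paper: its $i=2$ computation likewise draws no real consequence from the conjuncts $\neg\wpd(C^j,\neg B)$, $j<i$, whose role (as you say) is to make the events $\wpd(i)$ mutually exclusive when the lemma is invoked in the preterm construction for the while-loop. A minor side benefit of your formulation: because the closed form treats $\downarrow_B(\widehat{\theta})$ and $\downarrow_{\neg B}(\widehat{\theta})$ uniformly, you never need to determine which branch $\theta$ itself takes, which incidentally sidesteps the paper's misprints in the $i=1,2$ cases, where ``$\theta\models^I\neg B$'' should read $\theta\models^I B$ (since $\neg\wpd(C^0,\neg B)$ is equivalent to $B$).
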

\begin{proof}
    Please refer to Appendix \ref{proof:lemma-wpd}.
\end{proof}

 \section{Quantum Hoare logic with probabilistic assertion}
\label{sec:probabilistic}

\subsection{Probabilistic formulas}

To better describe the probabilistic aspects of cq-states, it is inevitable to expand deterministic formulas into probabilistic formulas. To achieve this, we begin by defining real expressions (also known as probabilistic expressions) which serve as the fundamental components of probabilistic formulas.

\begin{definition}[Real expressions] Let $\mathbb{RV}$ be a set of real variables. The real expression $r$ is defined as follows:

$$r:=  a \mid      \mathfrak{x}      \mid  \mathbb{P}(\phi ) \mid r\mbox{ } aop \mbox{ } r  \mid   \phi \Rightarrow Q   \mid  \sum\limits_{i =0}^{\infty}  r_i   $$
 
\end{definition}

\noindent
Here, $a \in \mathbb{R}$ is a real number,  $\mathfrak{x}   \in \mathbb{RV}$ is a real variable, and $aop$ is arithmetic operations on real numbers such as $+,-, \times  , \ldots $.  $\mathbb{P}(\phi )$ is the probability that a \textit{deterministic assertion} $\phi$ is true. $Q$ is a projective operator on $\mathcal{H} $, which means that $Q$ is self-adjoint and idempotent. 	A \textit{cq-conditional} $ \phi \Rightarrow Q$  denotes the probability of observing $Q$ in those states where $\phi$ is true. $\sum\limits_{i =0}^{\infty}  r_i$ is  an infinite summation. 

To decide the value of logical variable $x$, an interpretation $I$ is used to map logical variables to integers and real variables to real numbers.

\begin{definition}[Semantics of real expressions]  Given an interpretation $I$ and a probabilistic cq-state $\Theta$, the semantics of real expressions are defined inductively as follows.

\begin{itemize}
\item $[\![a ]\!]^I_{\Theta} = a $.

 \item $[\![  \mathfrak{x}     ]\!]^I_{\Theta} = I(  \mathfrak{x}   ) $.

\item $[\![ \mathbb{P}(\phi )  ]\!]^I_{\Theta}  = \sum\limits_{ \theta \models^I \phi} \Theta (\theta) $.

\item $[\![ r_1\mbox{ } aop \mbox{ } r_2   ]\!]^I_{\Theta} =   [\![r_1  ]\!]^I_{\Theta}    \mbox{ } aop \mbox{ }    [\![r_2  ]\!]^I_{\Theta}  $.

\item $ [\![  \phi \Rightarrow Q  ]\!] ^I_{\Theta} = \sum\limits_{\theta \models \phi}  \Theta(\theta)   \Tr( Q \theta_q \theta_q^{\dagger} ) $. %Consequently, given a $\theta\in\mathbb{PS}$, $\llbracket Q\rrbracket_\theta^I=\Tr( Q \theta_q \theta_q^{\dagger} )$.

\item $  [\![    \sum\limits_{i =0}^{\infty}  r_i  ]\!] ^I_{\Theta} =     \sum\limits_{i =0}^{\infty}  [\![      r_i ]\!] ^I_{\Theta} $

\end{itemize}
\end{definition}

A real number $a$ is interpreted as itself on a probabilistic cq-state. The probability that the deterministic formula $\phi$ holds over some probabilistic cq-state is denoted by $\mathbb{P}(\phi)$. This probability can be computed by adding up the probabilities of all pure cq-states where $\phi$ is true. $r_1\ aop\ r_2$ characterizes the arithmetic calculation between two real expressions. Additionally, the summation of infinitely many values of $r_i$s is represented by $\sum\limits_{i =0}^{\infty} r_i$.

Now we define probabilistic formulas (assertions). These formulas typically involve two real expressions and are used to compare probabilities. For instance, it is common to see a probabilistic formula like $\mathbb{P}(X>1)<\frac{1}{2}$, which compares two probabilities. 

\begin{definition}[Syntax of probabilistic formulas]  Probabilistic formulas are defined inductively as follows.

$$\Phi=    (r \mbox{ }  rop \mbox{ }   r) \mid \neg \Phi \mid (\Phi \wedge \Phi)$$

where $r$ is a real expression. 
\end{definition}

For example,  $(\mathbb{P}([U_{\overline{q}}](X>0))>\frac{1}{2})\wedge \neg (\mathbb{P}([ \mathfrak{P}_1^0] (X>1))<(P_1^0\Rightarrow Q))$ is a well-formed probabilistic formula. 

\begin{definition}[Semantics of probabilistic formulas]\label{def.semanticsprobabilisticformulas}
Given an interpretation $I$, the semantics of the probabilistic assertion is defined in the probabilistic cq-states $\Theta$ as follows:  
 
\begin{itemize}
\item $\Theta \models^I   r_1  \mbox{ }  rop \mbox{ }     r_2 $ if  $  [\![r_1]\!]^I_{\Theta} \mbox{ }  rop \mbox{ }   [\![r_2]\!]^I_{\Theta} = \top$ 

\item $\Theta \models^I \neg \Phi$ if not $\Theta \models^I  \Phi$

\item $\Theta \models^I  \Phi_1 \wedge \Phi_2$ if   $\Theta \models^I  \Phi_1$ and $\Theta \models^I  \Phi_2$

\end{itemize} 
\end{definition}

\subsection{Proof system with probabilistic assertions}

Now we delve deeper into the Hoare triples $\{\Phi\}C\{\Psi\}$, specifically those with probabilistic formulas, and the challenge of determining the weakest preconditions for given commands and probabilistic formulas. To address this, we extend the concept of \textit{weakest preterms} in probabilistic Hoare  logic   \cite{chadha2007reasoning,SunSBLC24} to the quantum setting. The concept of weakest preterm is defined as the term whose interpretation on the initial probabilistic cq-state is the same as the interpretation of the given term (real expression) on the resulting cq-state after executing a given command $C$. To define the weakest preterm, we also utilize the notion of conditional term from Chadha \textit{et al.} \cite{chadha2007reasoning,SunSBLC24}.

\begin{definition}[Conditional terms] The conditional term $r/B$ of a real expression $r$ with a Boolean formula $B$ (a deterministic formula) is inductively defined as follows.
 \begin{itemize}
 
\item $a/B :=a $

 \item $ \mathfrak{x}/B := \mathfrak{x} $ 
 
 \item $  \mathbb{P}(\phi ) /B :=  \mathbb{P}(\phi \wedge B)$
 
 \item $(r_1\mbox{ } aop \mbox{ } r_2 ) /B := r_1/B \mbox{ } aop \mbox{ } r_2/B$

 \item $ (\phi \Rightarrow Q)   /B :=   (\phi \wedge B) \Rightarrow Q  $
 
 \item $      (  \sum\limits_{i =0}^{\infty} r_i ) /B=      \sum\limits_{i =0}^{\infty} ( r_i/B)    $
 
 \end{itemize}
\end{definition}

A conditional term intuitively denotes a probability under some condition described by a deterministic formula, which is shown by the next lemma.

 \begin{lemma}
 
Given an arbitrary probabilistic cq-state $\Theta$ and an interpretation $I$, it holds that $ [\![ r /B ]\!]^I_{\Theta} = [\![ r  ]\!]^I_{\downarrow_{B}  \Theta}  $.

 \end{lemma}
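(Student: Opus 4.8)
The plan is to prove the identity by structural induction on the real expression $r$, following the inductive definition of the conditional term $r/B$. The two genuinely informative cases are the probability term $\mathbb{P}(\phi)$ and the cq-conditional $\phi \Rightarrow Q$; the remaining cases are either immediate from the definitions or follow directly from the induction hypothesis.

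First I would dispatch the base cases in which $r/B$ coincides syntactically with $r$, namely $r = a$ and $r = \mathfrak{x}$. Here both $\sem{r/B}^I_{\Theta}$ and $\sem{r}^I_{\downarrow_B \Theta}$ evaluate to the same constant ($a$, respectively $I(\mathfrak{x})$) regardless of the underlying state, so the equality holds trivially. The key case is $r = \mathbb{P}(\phi)$. Unfolding the definitions gives $\sem{\mathbb{P}(\phi)/B}^I_{\Theta} = \sem{\mathbb{P}(\phi \wedge B)}^I_{\Theta} = \sum_{\theta \models^I \phi \wedge B} \Theta(\theta)$, whereas $\sem{\mathbb{P}(\phi)}^I_{\downarrow_B \Theta} = \sum_{\theta \models^I \phi} (\downarrow_B \Theta)(\theta)$. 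By the definition of $\downarrow_B$, the weight $(\downarrow_B \Theta)(\theta)$ equals $\Theta(\theta)$ when $\sem{B}\theta_c = \top$ and $0$ otherwise, so the second sum effectively ranges over those $\theta$ with $\theta \models^I \phi$ and $\sem{B}\theta_c = \top$. Since, by the semantics of conjunction, $\theta \models^I \phi \wedge B$ holds exactly when $\theta \models^I \phi$ and $\sem{B}\theta_c = \top$, both sums collect precisely the same terms and therefore agree. The cq-conditional case $r = \phi \Rightarrow Q$ is entirely analogous: the extra factor $\Tr(Q\, \theta_q \theta_q^{\dagger})$ appears identically in both expressions, and the very same restriction of the summation index to $\theta \models^I \phi \wedge B$ yields the equality.

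For the compound cases I would invoke the induction hypothesis directly. When $r = r_1 \ aop\ r_2$, the equation $(r_1 \ aop\ r_2)/B = r_1/B \ aop\ r_2/B$ together with the pointwise interpretation of $aop$ reduces the goal to applying the hypothesis to $r_1$ and $r_2$ separately. When $r = \sum_{i=0}^{\infty} r_i$, the definition $(\sum_i r_i)/B = \sum_i (r_i/B)$ and the term-by-term semantics of infinite sums reduce the goal to applying the hypothesis to each summand $r_i$; notably, no convergence or reordering argument is required, since the semantic sum is defined term-by-term rather than as a rearrangement.

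The main obstacle—although it is a mild one—lies in the careful bookkeeping of the $\mathbb{P}(\phi)$ and $\phi \Rightarrow Q$ cases, where one must verify that restricting the distribution via $\downarrow_B$ and conjoining $B$ into the assertion select exactly the same set of contributing pure cq-states. This reduces to the elementary fact that $\theta \models^I \phi \wedge B$ iff $\theta \models^I \phi$ and $\sem{B}\theta_c = \top$, which is immediate from the semantics of deterministic assertions.
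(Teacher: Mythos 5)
Your proof is correct and follows essentially the same route as the paper: the paper's proof also reduces everything to the two non-trivial cases $\mathbb{P}(\phi)$ and $\phi \Rightarrow Q$, where conjoining $B$ into the assertion is matched against restricting $\Theta$ via $\downarrow_B$, exactly as in your summation bookkeeping. You merely spell out the base and compound cases (constants, real variables, $aop$, infinite sums) that the paper dismisses as trivial, which is a harmless elaboration rather than a different argument.
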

 
 \begin{proof}

We only consider the non-trivial cases where $r$ is in the form of $ \mathbb{P}(\phi ) $ or $\phi \Rightarrow Q $:
\begin{itemize}
    \item When $r=\mathbb{P}(\phi)$, $ [\![  \mathbb{P}(\phi )  /B ]\!]^I_{\Theta} = [\![ \mathbb{P}(\phi \wedge B) ]\!]^I_{\Theta}=   [\![  \mathbb{P}(\phi )   ]\!]^I_{\downarrow_{B}  \Theta}  $.
    \item When $r= \phi \Rightarrow Q $, $[\![   (\phi  \Rightarrow Q )/B ]\!]^I_{\Theta}  =  [\![  ( \phi \wedge  B )\Rightarrow Q   ]\!]^I_{\Theta    }  =  \sum\limits_{\theta \models \phi \wedge B}  \Theta(\theta)   \Tr( Q \theta_q \theta_q^{\dagger} ) =  \sum\limits_{\theta \models \phi  }  \downarrow_{B} \Theta(\theta)   \Tr( Q \theta_q \theta_q^{\dagger} ) =   [\![    \phi  \Rightarrow Q   ]\!]^I_{  \downarrow_{B}  \Theta}$.
\end{itemize}
\end{proof}

\subsection{Weakest preterms calculus and weakest precondition} 

We are ready to define the weakest preterms now. Defining the weakest preterms for a real expression $a$, $\mathfrak{x}$, and $r_1\ aop\ r_2$ is straightforward. With regard to $\mathbb{P}(\phi)$ and $ \phi \Rightarrow Q$, we need to divide the cases by different commands.  Those cases with non-quantum commands, especially with  the while-loop, are essentially the same as  their analogues in probabilist Hoare logic in \cite{SunSBLC24}.

\begin{definition}[Weakest preterms] 
The weakest preterm of a real expression $r$ with command $C$ is inductively defined as follows.

\begin{enumerate}

\item $\pt(C,a)=a$

\item $\pt(C,  \mathfrak{x}   )=  \mathfrak{x}   $

\item $\pt(C,r_1\mbox{ } aop \mbox{ } r_2 )= \pt(C, r_1) \mbox{ } aop \mbox{ } \pt(C, r_2)$

\item $\pt(C,  \sum\limits_{i \in K} r_i)=  \sum\limits_{i \in K} \pt(C, r_i)       $

\item  $ \pt(\texttt{skip}, \mathbb{P}(\phi  )) = \mathbb{P}(  \phi )$

\item  $ \pt(X\leftarrow E , \mathbb{P}(\phi  )) = \mathbb{P}( \phi[X/E] )$ 

\item $ \pt(   X\xleftarrow[]{\$}  \{a_1: k_1, \ldots ,a_n:k_n\} ,       \mathbb{P}(\phi  )) =$\\ 
$ a_1    \mathbb{P} ( \phi[X/k_1] \wedge \neg  \phi[X/k_2] \wedge \ldots \neg \phi[X/k_n]    )  + $\\ 
$a_2    \mathbb{P} ( \neg \phi[X/k_1] \wedge    \phi[X/k_2] \wedge \neg \phi[X/k_3] \wedge  \ldots  \wedge \neg \phi[X/k_n]    ) +  \ldots + $
$a_n    \mathbb{P} ( \neg \phi[X/k_1] \wedge     \ldots \wedge \neg \phi[X/k_{n-1}] \wedge    \phi[X/k_n]  ) + $\\ 
$(a_1 +a_2) \mathbb{P} ( \phi[X/k_1] \wedge   \phi[X/k_2] \wedge  \neg   \phi[X/k_3] \wedge \ldots \wedge \neg \phi[X/k_n]    ) + \ldots +  (a_1+ \ldots + a_n)  \mathbb{P} ( \phi[X/k_1] \wedge    \ldots \wedge \phi[X/k_n]    )  $

\item $\pt(U[\overline{q}] ,  \mathbb{P}(\phi  ) ) =  \mathbb{P}( [U_{\overline{q}}]\phi  )$

\item $\pt(X\leftleftarrows q_j ,   \mathbb{P}(\phi  ))=    (\top \Rightarrow  P^0_j )    \mathbb{P}(    [\mathfrak{P}^0_j ]\phi [X/0]       )  +   (\top \Rightarrow  P^1_j )      \mathbb{P}(    [\mathfrak{P}^1_j ]   \phi [X/1]) + \mathbb{P}([P_j^0]\phi[X/0]\wedge [P_j^1]\phi[X/1])$

\item $\pt(C_1;C_2 ,\mathbb{P}(\phi  ) ) = \pt(C_1, \pt(C_2, \mathbb{P}(\phi  ))) $

\item  $ \pt(  \texttt{if}\ B\ \texttt{then}\ C_{1}\ \texttt{else}\ C_{2} ,  \mathbb{P}(\phi  )  ) =  \pt(C_1,  \mathbb{P}(\phi  )  )/B   +  \pt(C_2,  \mathbb{P}(\phi  )  )/(\neg B)  $  

\item $ \pt( \texttt{while} \mbox{ } B \mbox{ } \texttt{do} \mbox{ } C ,   \mathbb{P}(\phi  ))   =  \displaystyle\sum_{i = 0}^{\infty}  T_i $, in which $T_i$ is defined via the following procedure.  We use the following abbreviation.

\begin{itemize}

\item $\wpd(i) $ is short for $  \neg \wpd(C^0 , \neg B ) \wedge \ldots \wedge    \neg \wpd(C^{i-1} , \neg B ) \wedge   \wpd(C^{i} , \neg B ) $

\item $\wpd(\infty) $ is short for $ \displaystyle \bigwedge_{i=0}^{ \infty}  \neg  \wpd(C^{i} , \neg B ) $

\item $WL$ is short for $\texttt{while} \mbox{ } B \mbox{ } \texttt{do} \mbox{ } C $

\item $IF$ is short for $\texttt{if}\ B\ \texttt{then}\ C \ \texttt{else} \mbox{ }  \texttt{skip}$

\item   $SUM$ is short for $\displaystyle\sum_{i = 0}^{\infty} (  \mathbb{P}(    \wpd(i)    )      ( \pt( (  IF)^i  ,\mathbb{P}(\phi  )  ) / (    \wpd(i)   )) )$

\end{itemize}

For any probabilistic cq-state $\Theta$, we let\\
 $\Theta_0 = \downarrow_{\wpd(\infty)}(\Theta)$, $\Theta_{i+1}=\downarrow_{\wpd(\infty)}(  [\![C]\!] \Theta_i) $.
 
 \begin{itemize}
\item   Let $T_0 = SUM$.
\item Let $T_1$ be the unique real expression  such that 
 $$[\![  T_1]\!]_{\Theta } = [\![   \mathbb{P}(  \wpd(\infty)  )     ]\!]_{\Theta }  [\![ SUM]\!]_{ [\![ C]\!] \Theta_0 }      $$ for all probabilistic cq-state $\Theta$.
Equivalently,  $$T_1  =   \mathbb{P}(  \wpd(\infty)  )     (       \pt(C,SUM) /\wpd(\infty)        )   .$$
\item Let $T_2$   be the unique real expression  such that 
$$[\![  T_2]\!]_{\Theta } = [\![   \mathbb{P}(  \wpd(\infty)  )     ]\!]_{\Theta }     \mathbb{P}(  \wpd(\infty)  )     ]\!]_{ [\![ C]\!] \Theta_0 }          [\![ SUM]\!]_{ [\![ C]\!] \Theta_1 }      $$
 for all probabilistic states $\Theta$. Equivalently, $T_2=$
 $$  \mathbb{P}(  \wpd(\infty)  )         (       \pt(C,      \mathbb{P}(  \wpd(\infty)  )    ) /\wpd(\infty)        ) (            \pt(C,             \pt(C,SUM) /\wpd(\infty)          ) /\wpd(\infty)                         )   $$
 
 \item  In general, we let $T_i$   be the unique real expression such that $[\![  T_i]\!]_{\Theta} =$
 $$  [\![   \mathbb{P}(  \wpd(\infty)  )     ]\!]_{\Theta}      [\![\mathbb{P}(  \wpd(\infty)  )     ]\!]_{ [\![ C]\!] \Theta_0 }   \ldots $$
 $$  [\![\mathbb{P}(  \wpd(\infty)  )     ]\!]_{ [\![ C]\!] \Theta_{i-2} }  [\![ SUM]\!]_{ [\![ C]\!] \Theta_{i-1} }  $$
 for all probabilistic cq-state $\Theta$. Equivalently, let $f_{C,B}$ be the function that maps a real expression $r$ to $f_{C,B}(r) = \pt(C,r) / \wpd(\infty)  $. Then $T_i$ can be characterized by the following formula:
 
 $$T_i = f_{C,B}^i(SUM)  \displaystyle\prod_{j = 0}^{i-1}     f_{C,B}^j ( \mathbb{P}(  \wpd(\infty)  ) ).  $$
 
 Therefore, $$ \pt( \texttt{while} \mbox{ } B \mbox{ } \texttt{do} \mbox{ } C ,   \mathbb{P}(\phi  ))   =  \displaystyle\sum_{i = 0}^{\infty}   f^i(SUM)  \displaystyle\prod_{j = 0}^{i-1}     f^j ( \mathbb{P}(  \wpd(\infty)  ) ).  $$
 
 \end{itemize}

\item $\pt(\texttt{skip},  \phi \Rightarrow Q ) = \phi \Rightarrow Q $

\item $\pt(X\leftarrow E, \phi \Rightarrow Q  )   =    \phi [X /E] \Rightarrow Q $

\item $\pt( X\xleftarrow[]{\$}  \{a_1: k_1, \ldots ,a_n:k_n\} ,  \phi \Rightarrow Q )  =$\\ 
$ a_1( \phi[X/k_1]  \Rightarrow Q    ) + \ldots + a_n( \phi[X/k_n]  \Rightarrow Q    ) $

\item $\pt( U[\overline{q}],  \phi \Rightarrow  Q  )  =  [U_{\overline{q}}]\phi \Rightarrow U^{\dagger}_{\overline{q}}Q U_{\overline{q}}$

\item $\pt(X\leftleftarrows q_j , \phi \Rightarrow Q  ) =   (\phi[X/0 ]  \Rightarrow    P^0_j  Q  P^0_j        ) +    (\phi[X/1 ]  \Rightarrow      P^1_j  Q  P^1_j     ) $

\item $\pt(C_1;C_2, \phi \Rightarrow Q ) = \pt(C_1, \pt( C_2, \phi \Rightarrow Q)  )$

\item $\pt( \texttt{if}\ B \   \texttt{then}\ C_{1}\ \texttt{else}\ C_{2} , \phi \Rightarrow Q  )= $ \\
$\pt(C_1, \phi \Rightarrow Q)/B + \pt( C_2, \phi \Rightarrow Q)/ \neg B  $

\item $\pt(WHILE, \phi \Rightarrow Q)= \pt(WHILE, \mathbb{P}(\psi) ) [  \mathbb{P}(\psi) / (\phi \Rightarrow Q)   ]$

\end{enumerate}

\end{definition}

\noindent
Note that   $\pt(WHILE, \mathbb{P}(\psi) ) [  \mathbb{P}(\psi) / (\phi \Rightarrow Q)   ]$ is an informal notation. It means that $\pt(WHILE, \phi \Rightarrow Q)$ is obtained by replacing the appearance of $\mathbb{P}(\psi)$ in $\pt(WHILE, \mathbb{P}(\psi) )$ by $ (\phi \Rightarrow Q)$. The following lemma shows that our definition of weakest preterm captures the intuition of weakest preterm.

\begin{lemma} \label{preterm lemma}
Given an arbitrary probabilistic cq-state $\Theta $, an interpretation $I$, a command $C$, and a real expression $r$ (excluding cq conditionals),   

   $$ [\![  pt  ( C,  r ) ]\!]^I_{\Theta } =  [\![ r ]\!]^I_{ [\![ C]\!] \Theta} $$
\end{lemma}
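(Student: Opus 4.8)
The plan is to prove Lemma \ref{preterm lemma} by structural induction on the real expression $r$, and within the cases $\mathbb{P}(\phi)$ and $\phi \Rightarrow Q$, by a secondary induction on the structure of the command $C$. The base cases for the expression structure are immediate: for $r = a$ we have $\pt(C,a) = a$ and $[\![a]\!]^I_{\Theta} = a = [\![a]\!]^I_{[\![C]\!]\Theta}$; for $r = \mathfrak{x}$ the value is $I(\mathfrak{x})$ independently of the state; and the cases $r_1 \mbox{ } aop \mbox{ } r_2$ and $\sum_{i} r_i$ follow directly by applying the induction hypothesis to the subexpressions, using that the semantics of $aop$ and of infinite summation distribute over the constituent interpretations. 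So the real content lies in showing $[\![\pt(C,\mathbb{P}(\phi))]\!]^I_{\Theta} = [\![\mathbb{P}(\phi)]\!]^I_{[\![C]\!]\Theta}$ and the analogous identity for $\phi \Rightarrow Q$, command by command.

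For the command induction on $\mathbb{P}(\phi)$, I would first verify the non-quantum and single-step quantum cases directly from the definitions of $\pt$, the semantics of commands (Definition \ref{def.semanticsofcommand}), and the semantics of deterministic assertions (Definition \ref{def.semanticsdeterministicformula}). The assignment case reduces to the substitution identity $\theta \models^I \phi[X/E]$ iff the updated state satisfies $\phi$; the unitary and measurement cases unfold the respective clauses of $[\![C]\!]$ and match them against the $[U_{\overline{q}}]$ and $[\mathfrak{P}^i_j]$ modalities, where one must carefully track how measurement redistributes probability weight with the factors $(\top \Rightarrow P^i_j)$ accounting for the measurement probabilities. The sequential composition case is where the nested structure pays off: $[\![\pt(C_1,\pt(C_2,\mathbb{P}(\phi)))]\!]^I_{\Theta} = [\![\pt(C_2,\mathbb{P}(\phi))]\!]^I_{[\![C_1]\!]\Theta} = [\![\mathbb{P}(\phi)]\!]^I_{[\![C_2]\!][\![C_1]\!]\Theta}$, applying the induction hypothesis twice. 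The conditional case relies on the preceding conditional-term lemma, splitting $\Theta$ via $\downarrow_B$ and $\downarrow_{\neg B}$ and using that $[\![C]\!]$ is additive over the decomposition of the state.

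The main obstacle will be the while-loop case. Here I would combine the weakest-precondition machinery for the loop with Lemma \ref{key lemma}, which states that on any pure cq-state satisfying $\wpd(i)$ the loop $WHILE$ agrees with the finite unfolding $(IF)^i$. The strategy is to partition the state space according to the "exit time" predicates $\wpd(i)$: on the portion where $\wpd(i)$ holds, the loop terminates after exactly $i$ genuine iterations, so its effect is computed by $\pt((IF)^i, \mathbb{P}(\phi))$ conditioned on $\wpd(i)$, which is precisely what the abbreviation $SUM$ collects. The nonterminating portion, captured by $\wpd(\infty)$, contributes the products $\prod_{j=0}^{i-1} f^j_{C,B}(\mathbb{P}(\wpd(\infty)))$, and one must show that the sequence $\Theta_i$ of restricted-and-advanced states correctly bookkeeps the probability mass that has not yet left $\wpd(\infty)$ after $i$ passes. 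The verification amounts to matching the infinite sum $\sum_i T_i$ against the semantic definition $[\![WHILE]\!](\Theta) = \sum_{i=0}^{\infty} \downarrow_{\neg B}(([\![C]\!]\circ\downarrow_B)^i(\Theta))$, and the delicate point is justifying the rearrangement and convergence of these nested infinite sums and the interaction of the $/\wpd(\infty)$ conditioning with repeated application of $\pt(C,\cdot)$.

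Finally, the $\phi \Rightarrow Q$ clauses follow the same command-by-command pattern, with the new feature that unitary and measurement commands now transform the observable $Q$ itself (by conjugation $U^{\dagger}_{\overline{q}} Q U_{\overline{q}}$ and by $P^i_j Q P^i_j$ respectively); these identities are verified by expanding $\Tr(Q \theta_q \theta_q^{\dagger})$ and using $\theta'_q = U_{\overline{q}}\theta_q$ together with the cyclicity of the trace. The while-loop case for $\phi \Rightarrow Q$ is handled by reduction: since $\pt(WHILE, \phi \Rightarrow Q)$ is defined as $\pt(WHILE, \mathbb{P}(\psi))$ with $\mathbb{P}(\psi)$ uniformly replaced by $\phi \Rightarrow Q$, the semantic identity transfers from the already-established $\mathbb{P}(\phi)$ case, because every step of that computation treated $\mathbb{P}(\phi)$ as an opaque real expression whose only relevant property is the substitution behavior under $/B$ and $\pt(C,\cdot)$, properties that $\phi \Rightarrow Q$ shares.
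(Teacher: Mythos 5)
Your proposal is correct and follows essentially the same route as the paper's own proof: structural induction on $r$ with a nested command induction for $\mathbb{P}(\phi)$ and $\phi \Rightarrow Q$, the conditional-term lemma for the IF case, and, for the while-loop, the partition of $\Theta$ by the exit-time predicates $\wpd(i)$ combined with Lemma \ref{key lemma} and the $\wpd(\infty)$-restricted sequence $\Theta_i$ generating the terms $T_i$, exactly as in the appendix. The paper likewise treats the $\phi \Rightarrow Q$ clauses command by command (conjugation $U^{\dagger}_{\overline{q}} Q U_{\overline{q}}$ for unitaries, $P^i_j Q P^i_j$ for measurement, via cyclicity of the trace) and dispatches the while-loop case for cq-conditionals as ``similar'' to the $\mathbb{P}(\phi)$ case, matching your substitution-based transfer argument.
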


\begin{proof}
Please refer to Appendix \ref{proof:lemma-pt}.
\end{proof}

The major  contribution  of this paper is extending the concept of weakest preterm from probabilistic Hoare logic to quanutum Hoare logic. With weakest preterm at hand, we now straightforwardly define the weakest precondition of probabilistic assertion and develop a proof system of QHL with probabilistic assertion. 

\begin{definition}[Weakest precondition of probabilistic assertion]\label{def.weakestpreconditionprobabilistic}

\begin{enumerate}
\item $\WP(C, r_1 \mbox{ }  rop \mbox{ } r_2  ) = \pt(C, r_1) \mbox{ }  rop \mbox{ }  \pt(C, r_2)$
\item $  \WP(C, \neg \Phi) = \neg \WP(C, \Phi) $
\item $\WP(C,\Phi_1 \wedge \Phi_2) = \WP(C,\Phi_1) \wedge \WP(C,\Phi_2) $ 
\end{enumerate}

\end{definition}

\begin{theorem}\label{wp for PA}
 
   $\mu \models^I \WP(C, \Phi)$ iff $ [\![C]\!]\mu \models^I \Phi$.

\end{theorem}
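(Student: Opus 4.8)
The plan is to prove Theorem~\ref{wp for PA} by structural induction on the probabilistic formula $\Phi$, reducing everything to the already-established correspondence between weakest preterms and real expressions given by Lemma~\ref{preterm lemma}. The three clauses of Definition~\ref{def.weakestpreconditionprobabilistic} mirror exactly the three syntactic forms of a probabilistic formula, so the induction is clean and the real work is confined to the atomic case.

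\begin{proof}
We proceed by structural induction on $\Phi$.

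\emph{Atomic case} $\Phi = (r_1 \mbox{ } rop \mbox{ } r_2)$. By Definition~\ref{def.weakestpreconditionprobabilistic}, $\WP(C, r_1 \mbox{ } rop \mbox{ } r_2) = \pt(C, r_1) \mbox{ } rop \mbox{ } \pt(C, r_2)$. Using the satisfaction clause for atomic probabilistic formulas in Definition~\ref{def.semanticsprobabilisticformulas}, together with Lemma~\ref{preterm lemma} applied to each of $r_1$ and $r_2$, we compute
\[
\mu \models^I \WP(C, r_1 \mbox{ } rop \mbox{ } r_2)
\iff [\![\pt(C,r_1)]\!]^I_{\mu} \mbox{ } rop \mbox{ } [\![\pt(C,r_2)]\!]^I_{\mu} = \top.
\]
By Lemma~\ref{preterm lemma}, $[\![\pt(C,r_k)]\!]^I_{\mu} = [\![r_k]\!]^I_{[\![C]\!]\mu}$ for $k \in \{1,2\}$, so the right-hand side equals $[\![r_1]\!]^I_{[\![C]\!]\mu} \mbox{ } rop \mbox{ } [\![r_2]\!]^I_{[\![C]\!]\mu} = \top$, which by Definition~\ref{def.semanticsprobabilisticformulas} is exactly $[\![C]\!]\mu \models^I r_1 \mbox{ } rop \mbox{ } r_2$.

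\emph{Negation case} $\Phi = \neg \Phi'$. By Definition~\ref{def.weakestpreconditionprobabilistic}, $\WP(C, \neg \Phi') = \neg \WP(C, \Phi')$. Then $\mu \models^I \neg \WP(C,\Phi')$ iff not $\mu \models^I \WP(C,\Phi')$, which by the induction hypothesis holds iff not $[\![C]\!]\mu \models^I \Phi'$, i.e.\ iff $[\![C]\!]\mu \models^I \neg \Phi'$.

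\emph{Conjunction case} $\Phi = \Phi_1 \wedge \Phi_2$. By Definition~\ref{def.weakestpreconditionprobabilistic}, $\WP(C, \Phi_1 \wedge \Phi_2) = \WP(C,\Phi_1) \wedge \WP(C,\Phi_2)$, and the conjunction clause of Definition~\ref{def.semanticsprobabilisticformulas} together with the induction hypothesis for $\Phi_1$ and $\Phi_2$ gives the result directly.
\end{proof}

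The only subtlety worth flagging is the hypothesis in Lemma~\ref{preterm lemma} that $r$ excludes cq-conditionals; strictly speaking, one should note that every cq-conditional $\phi \Rightarrow Q$ is eliminated by the weakest-preterm rewriting (clause 25 of the weakest-preterm definition reduces $\pt(WHILE, \phi \Rightarrow Q)$ to a substitution instance of $\pt(WHILE, \mathbb{P}(\psi))$), so the invocation of Lemma~\ref{preterm lemma} in the atomic case applies to the fully reduced real expressions. This bookkeeping is the main obstacle, since the real substance of Theorem~\ref{wp for PA} lives entirely in Lemma~\ref{preterm lemma} and especially in its while-loop case; once that lemma is granted, the present theorem is a routine three-case induction carrying the semantic equivalence through the Boolean connectives.
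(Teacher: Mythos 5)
Your proposal is correct and takes essentially the same route as the paper: a three-case structural induction on $\Phi$ whose atomic case is discharged by Lemma~\ref{preterm lemma}, with the negation and conjunction cases carried through the semantic clauses exactly as in the paper's appendix proof. The cq-conditional subtlety you flag is real but resolved differently in the paper: rather than arguing that cq-conditionals are eliminated by rewriting, the appendix restates and proves the preterm lemma with cases covering $\phi \Rightarrow Q$ directly (the AS-cq through WHILE-cq clauses), so the atomic case applies verbatim.
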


\noindent
We provide the proof in Appendix \ref{proof:wp-PA}. The theorem mentioned above inspires a definition for a QHL proof system with respect to probabilistic formulas in a uniform manner. The proposed definition is $\vdash \{\WP(C, \Phi)\} C \{\Phi\}$, which applies to every command $C$ and probabilistic formula $\Phi$. We will follow this construction with some minor modifications.

\subsection{Proof system with probabilistic assertions}

\begin{definition}[Proof system with probabilistic assertions] The proof system of our QHL with probabilistic assertions consists of the following inference rules:

 \begin{center}
        \begin{tabular}{ll}
         $SKIP:$ & $\frac{}{\vdash\{  \Phi    \}\texttt{skip}\{  \Phi  \}}$\\
         $AS:$ & $ \frac{}{\vdash \{      \WP(X \leftarrow E, \mbox{ }\Phi)   \}  X \leftarrow E \{  \Phi \} }$\\
       
         $PAS:$ & $   \frac{}{\vdash \{  \WP(X\xleftarrow{\$}\{a_1: r_{1},...,a_n: r_{n} \},\mbox{ } \Phi)    \}  X\xleftarrow{\$}\{a_1: r_{1},...,a_n: r_{n} \}   \{    \Phi    \} }$\\  
               
         $UNITARY:$        & $ \frac{}{\vdash \{      \WP(U[\overline{q}], \mbox{ }\Phi)   \} U[\overline{q}] \{  \Phi \} }$\\

         $MEASURE:$        & $ \frac{}{\vdash \{      \WP(X\leftleftarrows q_j, \mbox{ }\Phi)   \} X\leftleftarrows q_j \{  \Phi \} }$\\   
         
         \\
         $SEQ:$ & $ \frac{\vdash\{\Phi\}C_{1}\{\Phi_{1}\}\quad \vdash\{\Phi_{1}\}C_{2}\{\Phi_{2}\}} {\vdash\{\Phi\}C_{1};C_{2}\{\Phi_{2}\}} $\\
     
         $IF:$ & $\frac{}{\vdash\{  \WP( \texttt{if}\ B\ \texttt{then}\ C_{1}\ \texttt{else}\ C_{2}  , \mbox{ } \Phi)    \}  \texttt{if}\ B\ \texttt{then}\ C_{1}\ \texttt{else}\ C_{2}   \{  \Phi \}}$\\

         $WHILE:$ & $\frac{}{\vdash\{   \WP( \texttt{while} \mbox{ }B \mbox{ } \texttt{do} \mbox{ }C , \mbox{ } \Phi)   \}  \texttt{while} \mbox{ }B \mbox{ } \texttt{do} \mbox{ }C   \{ \Phi \}}$,\\
                 
                 \\
          $CONS:$ & $ \frac{\models\Phi'\rightarrow\Phi\quad  \vdash\{\Phi\}C\{\Psi\}\quad \models\Psi\rightarrow\Psi'}  {\vdash\{\Phi'\}C\{\Psi'\}}$
     %     $AND:$ & $ \frac{\vdash\{\phi_{1}\}C\{\psi_{1}\}\quad \vdash\{\phi_{2}\}C\{\psi_{2}\}} {\vdash\{\phi_{1}\wedge\phi_{2}\}C\{\psi_{1}\wedge\psi_{2}\}}$\\
      %    \\
       %   $OR:$ & $ \frac{\vdash\{\phi_{1}\}C\{\psi_1\}\quad \vdash\{\phi_{2}\}C\{\psi_2\}} {\vdash\{\phi_{1}\vee\phi_{2}\}C\{\psi_1 \vee \psi_2\}}$\\
      %    \\

        \end{tabular}
    \end{center}

\end{definition}

With the rule (CONS) in our proof system, we can treat $\WP(C,\Phi)$ as a semantic notion: if $\models \WP(C,\Phi) \leftrightarrow \Psi$, then $\Psi$ is conceived as the weakest precondition of $\Phi$ with command $C$.

\begin{lemma}\label{wp condition prob}
If $\models  \{ \Phi \} C \{ \Psi \} $, then $\models \Phi \rightarrow \WP(C, \Psi)$.
\end{lemma}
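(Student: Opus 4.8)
The plan is to derive this completeness-direction lemma as an almost immediate consequence of Theorem~\ref{wp for PA}, which already establishes that $\WP(C,\Psi)$ is a genuine weakest precondition in the semantic sense. Specifically, Theorem~\ref{wp for PA} gives us the biconditional $\mu \models^I \WP(C,\Psi)$ iff $\sem{C}\mu \models^I \Psi$ for every probabilistic cq-state $\mu$ and every interpretation $I$. The lemma we want is the one-directional statement that validity of the Hoare triple forces every $\Phi$-state to be a $\WP(C,\Psi)$-state.

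First I would fix an arbitrary interpretation $I$ and an arbitrary probabilistic cq-state $\mu$, and assume $\mu \models^I \Phi$. The goal is to show $\mu \models^I \WP(C,\Psi)$. By the assumption $\models \{\Phi\}C\{\Psi\}$, validity of the triple means that whenever the precondition $\Phi$ holds on an input state, the postcondition $\Psi$ holds on the output state $\sem{C}\mu$; hence from $\mu \models^I \Phi$ we obtain $\sem{C}\mu \models^I \Psi$. I would then invoke the right-to-left direction of Theorem~\ref{wp for PA}: since $\sem{C}\mu \models^I \Psi$, the theorem yields directly $\mu \models^I \WP(C,\Psi)$. Because $I$ and $\mu$ were arbitrary, this establishes $\models \Phi \rightarrow \WP(C,\Psi)$ by the semantics of implication between probabilistic formulas.

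The only subtlety worth flagging is making the notion of validity of $\{\Phi\}C\{\Psi\}$ for probabilistic assertions precise, since the excerpt states the validity condition explicitly only for deterministic Hoare triples (satisfaction of the postcondition on $\sem{C}(\widehat{\theta})$ for each pure $\theta$ satisfying the precondition). For probabilistic formulas the natural reading, consistent with the satisfaction relation of Definition~\ref{def.semanticsprobabilisticformulas}, is that $\models \{\Phi\}C\{\Psi\}$ means for every interpretation $I$ and every probabilistic cq-state $\mu$, if $\mu \models^I \Phi$ then $\sem{C}\mu \models^I \Psi$. Under this reading the argument is a two-line application of Theorem~\ref{wp for PA}, so there is essentially no obstacle — the real work has already been discharged in proving that theorem (and in turn in Lemma~\ref{preterm lemma} establishing the weakest-preterm calculus). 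I would therefore keep the proof short, stating the definition of validity explicitly at the outset to avoid ambiguity, and then citing the relevant direction of Theorem~\ref{wp for PA}.
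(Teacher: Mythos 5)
Your proof is correct and follows essentially the same route as the paper's: assume $\mu \models^I \Phi$, use validity of the triple to get $\sem{C}\mu \models^I \Psi$, and apply Theorem~\ref{wp for PA} to conclude $\mu \models^I \WP(C,\Psi)$. Your explicit statement of what validity of $\{\Phi\}C\{\Psi\}$ means for probabilistic assertions is a sensible addition, since the paper defines validity formally only for deterministic triples and leaves the probabilistic case implicit, exactly as you read it.
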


\begin{proof}
Let $\mu$ be a probabilistic state such that $\mu \models \Phi$. Then $ [\![ C]\!] \mu \models \Psi $. Then by Theorem \ref{wp for PA} we know $\mu \models \WP(C,\Psi)$.
\end{proof}

\begin{lemma}
   For an arbitrary command $C$ and an arbitrary probabilistic formula $\Phi$, we have $\vdash \{\WP(C,\Phi)\}C\{\Phi\}$.
\end{lemma}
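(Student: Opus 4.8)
The plan is to prove $\vdash \{\WP(C,\Phi)\}C\{\Phi\}$ by structural induction on the command $C$, but the shape of the argument is dictated more by the structure of the probabilistic formula $\Phi$ than by $C$. First I would reduce the problem to the atomic case. By Definition~\ref{def.weakestpreconditionprobabilistic}, $\WP$ commutes with $\neg$ and $\wedge$, and the proof rule (SEQ) together with the $\WP$-clauses for $\neg$ and $\wedge$ allow a routine induction: if $\vdash\{\WP(C,\Phi_1)\}C\{\Phi_1\}$ and $\vdash\{\WP(C,\Phi_2)\}C\{\Phi_2\}$ hold, then using (CONS) and Theorem~\ref{wp for PA} one derives the conjunction and negation cases, since $\WP(C,\Phi_1\wedge\Phi_2)=\WP(C,\Phi_1)\wedge\WP(C,\Phi_2)$ and $\WP(C,\neg\Phi)=\neg\WP(C,\Phi)$. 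So the real content lies entirely in atomic formulas $\Phi = (r_1 \mbox{ } rop \mbox{ } r_2)$, for which $\WP(C,\Phi)=\pt(C,r_1)\mbox{ } rop \mbox{ }\pt(C,r_2)$.

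For the atomic case, the key tool is the preterm correctness lemma, Lemma~\ref{preterm lemma}, which states $[\![\pt(C,r)]\!]^I_\Theta = [\![r]\!]^I_{[\![C]\!]\Theta}$. Given this, the derivation is almost immediate: I would observe that the Hoare triples in the proof system for each basic command $C$ are stated precisely as $\vdash\{\WP(C,\Phi)\}C\{\Phi\}$, so for the base commands ($\texttt{skip}$, assignment, random assignment, unitary, measurement) the rule is the instance we want, and there is nothing to prove beyond quoting the rule. The composite cases are handled by combining the designated rules. For (IF) and (WHILE), the proof system already provides rules of the exact form $\vdash\{\WP(\texttt{if}\ldots,\Phi)\}\texttt{if}\ldots\{\Phi\}$ and $\vdash\{\WP(WL,\Phi)\}WL\{\Phi\}$, so again the claim is a direct instance. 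For sequential composition $C_1;C_2$, I would apply (SEQ) with the intermediate assertion $\WP(C_2,\Phi)$: the induction hypothesis gives $\vdash\{\WP(C_2,\Phi)\}C_2\{\Phi\}$ and $\vdash\{\WP(C_1,\WP(C_2,\Phi))\}C_1\{\WP(C_2,\Phi)\}$, and since $\WP$ satisfies the composition identity $\WP(C_1;C_2,\Phi)=\WP(C_1,\WP(C_2,\Phi))$ (inherited from the $\pt$ clause $\pt(C_1;C_2,r)=\pt(C_1,\pt(C_2,r))$), (SEQ) yields the goal.

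The main obstacle I expect is not the logical scaffolding but the correctness of the weakest preterm for the while-loop, i.e., verifying that Lemma~\ref{preterm lemma} genuinely holds for the intricate infinite-sum definition of $\pt(\texttt{while}\ B\ \texttt{do}\ C,\mathbb{P}(\phi))$ with its nested terms $T_i$. That verification, however, is encapsulated in Lemma~\ref{preterm lemma}, which I am entitled to assume; the key lemma relating the while-loop to iterated conditionals (Lemma~\ref{key lemma}) is what ultimately justifies the $T_i$ decomposition. Granting Lemma~\ref{preterm lemma}, the present lemma reduces entirely to assembling the proof rules, so the final obstacle is purely bookkeeping: ensuring that every composite case's $\WP$ decomposition matches an applicable rule and that (CONS) is invoked only with semantically valid implications, which are guaranteed by Theorem~\ref{wp for PA}. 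I would conclude by remarking that soundness in the converse direction, combined with Lemma~\ref{wp condition prob}, then gives relative completeness of the full system via (CONS), exactly mirroring the deterministic case in Corollary~\ref{th.completeness}.
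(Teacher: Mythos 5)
Your core argument is exactly the paper's proof: structural induction on $C$, where the cases $\texttt{skip}$, assignment, random assignment, unitary, measurement, IF, and WHILE are literal instances of the axioms of the proof system (each is stated in the form $\vdash \{\WP(C,\Phi)\}C\{\Phi\}$ for \emph{arbitrary} $\Phi$), and the only case requiring work is sequential composition, handled via the identity $\WP(C_1;C_2,\Phi)=\WP(C_1,\WP(C_2,\Phi))$, the induction hypotheses $\vdash\{\WP(C_2,\Phi)\}C_2\{\Phi\}$ and $\vdash\{\WP(C_1,\WP(C_2,\Phi))\}C_1\{\WP(C_2,\Phi)\}$, and the rule (SEQ). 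Your second and third paragraphs reproduce this faithfully.

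However, your opening reduction to atomic formulas $r_1\ rop\ r_2$ is both unnecessary and, as sketched, unsound. It is unnecessary because every axiom of the system is schematic in $\Phi$, so negations and conjunctions require no separate treatment. It would also not go through if it were needed: (CONS) only strengthens preconditions and weakens postconditions, so from $\vdash\{\WP(C,\Phi)\}C\{\Phi\}$ you cannot derive $\vdash\{\WP(C,\neg\Phi)\}C\{\neg\Phi\}$ --- that would require the invalid implication $\models \Phi\rightarrow\neg\Phi$ --- and the system has no conjunction rule for combining triples with a common command. For the same reason, Lemma~\ref{preterm lemma} and Theorem~\ref{wp for PA} play no role in this lemma: the claim is purely syntactic derivability, obtained by quoting axioms and (SEQ); the semantic correctness of $\pt$ and $\WP$ is what makes the \emph{system} sound, not what makes this triple derivable. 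Since your first paragraph is a detour that the rest of your argument abandons, deleting it leaves precisely the paper's proof, and the proposal is then correct.
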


\begin{proof}
    Please refer to Appendix \ref{proof:WP-syntactic}.
\end{proof}

\begin{theorem}

The QHL proof system with probabilistic assertions is sound and complete.

\end{theorem}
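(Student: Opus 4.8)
The plan is to prove soundness and completeness separately, assembling the results already established in this section so that the argument mirrors the deterministic case (Corollary~\ref{th.completeness}). The key observations are that every axiom scheme of the system is an instance of the single pattern $\vdash\{\WP(C,\Phi)\}C\{\Phi\}$, and that Theorem~\ref{wp for PA} gives both directions of the weakest-precondition characterization, so that soundness uses its ``only if'' direction and completeness its ``if'' direction (through Lemma~\ref{wp condition prob}).

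For \textbf{soundness} I would argue by induction on the length of the derivation of $\vdash\{\Phi\}C\{\Psi\}$. Each axiom scheme---namely $SKIP$ (where $\WP(\texttt{skip},\Phi)=\Phi$), $AS$, $PAS$, $UNITARY$, $MEASURE$, $IF$ and $WHILE$---instantiates $\vdash\{\WP(C,\Phi)\}C\{\Phi\}$, so its soundness is precisely the ``only if'' direction of Theorem~\ref{wp for PA}: if $\mu\models^I\WP(C,\Phi)$ then $\sem{C}\mu\models^I\Phi$, i.e.\ $\models\{\WP(C,\Phi)\}C\{\Phi\}$. The two structural rules are routine. For $SEQ$ I would invoke the compositionality of the command semantics, $\sem{C_1;C_2}=\sem{C_2}\circ\sem{C_1}$: from $\models\{\Phi\}C_1\{\Phi_1\}$ and $\models\{\Phi_1\}C_2\{\Phi_2\}$ one gets $\sem{C_2}(\sem{C_1}\mu)\models^I\Phi_2$ whenever $\mu\models^I\Phi$. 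For $CONS$ I would use transitivity of $\models$ on implications between probabilistic formulas.

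For \textbf{completeness} I would reduce to the weakest precondition exactly as in Corollary~\ref{th.completeness}. Suppose $\models\{\Phi\}C\{\Psi\}$. Lemma~\ref{wp condition prob} yields $\models\Phi\rightarrow\WP(C,\Psi)$, and the preceding lemma stating $\vdash\{\WP(C,\Phi)\}C\{\Phi\}$ gives the syntactic fact $\vdash\{\WP(C,\Psi)\}C\{\Psi\}$. Since $\models\Psi\rightarrow\Psi$ holds trivially, applying $CONS$ with $\Phi'=\Phi$, middle assertion $\WP(C,\Psi)$, and $\Psi'=\Psi$ delivers $\vdash\{\Phi\}C\{\Psi\}$.

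The main obstacle is not located in this theorem itself, whose proof is a clean assembly of earlier results, but in the two ingredients it invokes. Both Theorem~\ref{wp for PA} and the syntactic derivability lemma rest on the correctness of the weakest preterm calculus (Lemma~\ref{preterm lemma}), and in particular on the while-loop clause of $\pt$ with its infinitary $T_i$ construction together with Lemma~\ref{key lemma}, which reduces a while-loop to a finite iteration of $IF$ on each pure cq-state. Once those are in place, soundness and completeness of the probabilistic system follow uniformly from the weakest-precondition characterization, with no further quantum-specific reasoning required at this level.
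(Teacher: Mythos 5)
Your proposal is correct and follows essentially the same route as the paper: soundness of the axiom schemes from the ``only if'' direction of Theorem~\ref{wp for PA} (the paper dismisses $SEQ$ and $CONS$ as simple deductions, which you merely spell out), and completeness via Lemma~\ref{wp condition prob} together with $\vdash\{\WP(C,\Psi)\}C\{\Psi\}$ and one application of $CONS$, exactly as in the paper. Your closing remark that the real weight lies in Lemma~\ref{preterm lemma} and Lemma~\ref{key lemma} rather than in this theorem also matches the paper's structure.
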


\begin{proof}

Soundness: the soundness of SKIP, AS, PAS, IF, WHILE follows from Theorem \ref{wp for PA}. The soundness of SEQ and CON can be proved by simple deduction.

Completeness: Assume $\models  \{ \Phi \} C \{ \Psi \} $, then $\models \Phi \rightarrow \WP(C, \Psi)$ by the lemma \ref{wp condition prob}. Then by $\vdash \{\WP(C,\Psi) \}C \{ \Psi \} $ and CONS we know $ \vdash  \{\Phi \}C \{ \Psi \} $.

\end{proof}

To showcase the expressive and inference power of our proposed QHL, we formally verify two widely discussed quantum algorithms - Deutsch's algorithm and quantum teleportation.  Due to page limitations, details are shown in Appendix D.

\section{Conclusion and future work}
\label{sec:conclusion}

Inspired by the recent work on proving the relative completeness of satisfaction-based probabilistic Hoare logic \cite{SunSBLC24}, in this paper we prove the relative completeness of satisfaction-based quantum Hoare logic. The completeness is proved in two steps. First, we establish a semantics and proof system of Hoare triples with quantum programs and deterministic assertions. Then, utilizing the weakest precondition of deterministic assertion, we construct the weakest preterm calculus of probabilistic expressions. The relative completeness of QHL is then obtained as a consequence of the weakest preterm calculus.

%We proved their relatively completeness and soundness. We then use them to derive the correctness of Deutsch's algorithm and quantum teleportation.

It could be a good future project to extend this result to quantum relational Hoare logic. However, we anticipate that achieving completeness for such a logic could be quite challenging. Another possible direction is to implement our QHL and use the implementation to verify numerous quantum algorithms.

\newpage
\bibliographystyle{splncs04}
 
\bibliography{literature}

\appendix
%\section*{Appendix}

\section{Basics of quantum computation} \label{sec:basics}
We will introduce the basic ideas of quantum computation. For more
information on quantum computation, see~\cite{nielsen2001quantum, Sel2004-qpl}. The basic concepts of quantum
computation are \emph{states} and \emph{operations} which act on
states. Let $\C$ be the field of complex numbers. We write $\C ^ n$ for
the space of $n$-dimensional column vectors.

\subsubsection{$n$-qubit states}

\begin{definition}
An $n$-qubit state is a unit vector in $\C ^ {2 ^ n}$. 
\end{definition}

In other words, the state space is the set of $2^n$-dimensional
complex unit vectors. For example, the space of the $1$-qubit states is
the unit sphere in the $2$-dimensional complex vector space. 

For $\C ^ {2^n}$, we use the ``ket-binary'' notation to denote the
standard basis. For example, when $n=2$:
\[
  \ket{00} = \begin{bmatrix}
    1  \\
    0  \\
    0  \\
    0 
  \end{bmatrix}, \quad
  \ket{01} = \begin{bmatrix}
    0  \\
    1  \\
    0  \\
    0 
  \end{bmatrix}, \quad
  \ket{10} = \begin{bmatrix}
    0  \\
    0  \\
    1  \\
    0 
  \end{bmatrix}, \quad
  \ket{11} = \begin{bmatrix}
    0  \\
    0  \\
    0  \\
    1 
  \end{bmatrix}.
\]

In general $\ket{b_nb_{n-1}...b_0} = e_k$, where $ k = \sum_{j=0}^{n}b_j2^{j}$, and $e_k$ is the $k$-th standard basis vector.
Note that $b_nb_{n-1}...b_0$ is the binary representation of $k$, and
that the standard basis starts from $e_0$. In the following, we will
identify each natural number index with its binary representation. One
convenience of this notation is
\[\ket{a_ma_{m-1}...a_0}\otimes \ket{b_{n}b_{n-1}...b_0} = \ket{a_{m}a_{m-1}...a_0b_{n}b_{n-1}...b_0}.\]
Another way to look at this is that every basis vector of
$\C^{2^{m+n}}$ is a tensor product of basis vectors of $\C^{2^{m}}$
and $\C^{2^{n}}$. We can also identify $\C^{2^{n}}$ with the $n$-fold
tensor product of $\C^2$. The ket-binary notation reflects this
identification.

% Under this identification, we can say an $n$-qubit state is the same
% as a tensor of $n$ $1$-qubit states.

There are two and only two kinds of operations that can be applied to
an $n$-qubit state --- \emph{unitary transformations} and
\emph{measurements}. Quantum algorithms are often of the form that
first apply some number of unitary transformations to a an $n$-qubit
state and then do a measurement, and based on the measurement result,
some output or conclusion is drawn.

\subsubsection{$n$-qubit unitary transformations}

A linear function $U : \C^k \rightarrow \C^m$ is called an
\emph{isometry} if for all $v,w$ in $\C^k$, we have
$\gen{Uv, Uw} = \gen{v,w}$. $U$ is called \emph{unitary} if it is an
invertible isometry. Note that $U$ is unitary if and only if
$UU^\dagger = I$ and $U^\dagger U = I$, where $U^\dagger$ denotes the
adjoint of $U$. For a matrix, the adjoint is the complex conjugate of
the transpose.

\begin{definition}
  An $n$-qubit unitary transformation is a unitary transformation on $\C^{2^n}$
\end{definition}

We also refer to an $n$-qubit unitary transformation as an
\emph{$n$-qubit operator}. The unitary condition is
needed to preserve the norm of a state so that applying a unitary
transformation to a state produces a state.

\subsubsection{Measurement}

The other operation is the measurement. A measurement is like a
finite-valued random variable with ``side-effect'' which modifies the
qubit state being measured. 

\begin{definition}
  Given an $n$-qubit state
  $\theta = \sum_{j=00...0}^{11...1} \alpha_j \ket{j}$, a $1$-qubit measurement
  acting on the $k$-th qubit outputs $b$ and alters the state to
  \[
  \sum_{j=00...0}^{11...1}\sum_{l=00...0}^{11...1} \alpha'_{jbl}
  \ket{jbl}
  \]with a probability
  $p = \Tr( (I_{2^j} \otimes \ket{b}\bra{b} \otimes I_{2^l}) \theta    \theta^{\dagger})$, where $b = 0,1$, $I_{2^x}$ is the $2^x$-dimensional identity operator (i.e. identity operator on x-qubits), 
  $\alpha'_{jbl} = \alpha_{jbl}/\sqrt{p}$, $j$ is of length $k-1$, and $l$ is
  of length $n-k$, and $jbl$ is a concatenation of binary strings $j$,
  $b$, and $l$.
\end{definition}

Later we will use $\Tr( ( |b\rangle \langle b| \otimes I_{-k})   \theta    \theta^{\dagger})$ to denote $\Tr( (I_{2^j} \otimes \ket{b}\bra{b} \otimes I_{2^l}) \theta    \theta^{\dagger})$.

\section{Proofs in Section \ref{sec:deterministic}}

In this article, all proofs which involve no quantum operation are essentially the same as their analogues in \cite{SunSBLC24}. 

\begin{theorem*}[Soundness]\label{proof:soundness-d}
 The proof system QHL$_d$ is sound, $i.e.$ for all deterministic formula $\phi$ and $\psi$ and command $C$, $\vdash\{\phi\}C\{\psi\}$ implies $\models\{\phi\}C\{\psi\}$. 
\end{theorem*}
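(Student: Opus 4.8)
The plan is to argue by induction on the structure of the derivation $\vdash\{\phi\}C\{\psi\}$, showing that each inference rule of QHL$_d$ preserves validity: under the inductive hypothesis that the premises denote valid triples, the conclusion is valid in the sense of the satisfaction relation $\models$ defined above. Before the induction I would record two auxiliary facts. First, a \emph{substitution lemma}: $\theta \models^I \phi[X/E]$ iff $\theta' \models^I \phi$, where $\theta'_c = \theta_c[X\mapsto[\![E]\!]\theta_c]$ and $\theta'_q = \theta_q$, proved by a routine induction on the shape of $\phi$. Second, a \emph{linearity lemma} lifting validity from point distributions to arbitrary mixed states: if $\models\{\phi\}C\{\psi\}$ and $\Theta \models^I \phi$, then $[\![C]\!](\Theta)\models^I\psi$. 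This holds because $[\![C]\!]$ distributes over the point-mass decomposition $\Theta = \sum_\theta \Theta(\theta)\widehat{\theta}$ (a routine consequence of the inductive semantics), so the support of $[\![C]\!](\Theta)$ is contained in the union of the supports of the $[\![C]\!](\widehat{\theta})$ over support states $\theta$ of $\Theta$, each of which satisfies $\psi$ by hypothesis; since possibility semantics is purely support-wise, $\psi$ holds on the whole output.

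The base and non-quantum inductive cases are then routine. $SKIP$ is immediate. For $AS$ and $PAS$ I would invoke the substitution lemma: $[\![X\leftarrow E]\!](\widehat{\theta}) = \widehat{\theta'}$ with $\theta'$ as above, and the random assignment is supported on the states $\theta_c[X\mapsto k_i]$, so the conjunction $\phi[X/k_1]\wedge\ldots\wedge\phi[X/k_n]$ in the precondition forces every support state to satisfy $\phi$. For $SEQ$ I would chain the two induction hypotheses through the linearity lemma, which is needed precisely because $[\![C_1]\!](\widehat{\theta})$ is in general mixed. For $IF$ I would split $\widehat{\theta}$ via $\downarrow_B$ and $\downarrow_{\neg B}$: exactly one summand is nonzero according to $[\![B]\!]\theta_c$, and the matching induction hypothesis (with precondition $\phi\wedge B$ or $\phi\wedge\neg B$) applies. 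The $CONS$ rule follows directly from the two semantic implications.

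The genuinely quantum base cases rely on the semantics of the modal assertions. For $UNITARY$, $\theta \models^I [U_{\overline q}]\phi$ unfolds by definition to $U_{\overline q}\theta \models^I \phi$, which is exactly the unique support state of $[\![U[\overline q]]\!](\widehat{\theta})$. For $MEASURE$ I would treat the two outcomes separately. If the outcome $j=0$ occurs with probability $p_0 \neq 0$, then $\theta \not\models^I P^1_j$ (a state fully in $\ket{1}$ on qubit $j$ would give $p_0=0$), so the first conjunct of the precondition forces $\theta \models^I [\mathfrak{P}^0_j]\phi[X/0]$; unfolding the projection semantics and then applying the substitution lemma shows that the post-measurement state $\theta_0$ (with classical part updated by $X\mapsto 0$ and quantum part the normalized projection) satisfies $\phi$. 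The outcome $j=1$ is symmetric via the second conjunct, so every support state of $[\![X\leftleftarrows q_j]\!](\widehat{\theta})$ satisfies $\phi$.

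The main obstacle is the $WHILE$ rule, where the premise supplies the loop invariant $\models\{\phi\wedge B\}C\{\phi\}$ and the conclusion must be established for the infinite-sum semantics $[\![\texttt{while }B\texttt{ do }C]\!](\widehat{\theta}) = \sum_{i=0}^{\infty}\downarrow_{\neg B}\!\big(([\![C]\!]\circ\downarrow_B)^i(\widehat{\theta})\big)$. I would first prove, by an inner induction on $i$ using the invariant premise together with the linearity lemma, that every support state of $([\![C]\!]\circ\downarrow_B)^i(\widehat{\theta})$ satisfies $\phi$: the case $i=0$ is the hypothesis $\theta\models^I\phi$, and the step applies $\downarrow_B$ (restricting to support states satisfying $\phi\wedge B$) followed by the invariant. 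Applying $\downarrow_{\neg B}$ then further restricts each summand's support to states satisfying $\phi\wedge\neg B$, and since the support of a countable sum of sub-distributions is the union of the individual supports, every support state of the whole output satisfies $\phi\wedge\neg B$. The only delicate point is the support bookkeeping across the infinite sum, which the possibility semantics handles cleanly because satisfaction of a deterministic formula is insensitive to the probability weights.
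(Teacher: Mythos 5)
Your proposal is correct and follows essentially the same route as the paper's proof: a structural induction over the rules with the same inner induction on iteration count for \emph{WHILE} and the same outcome-by-outcome analysis for \emph{MEASURE} (the paper splits on $\theta\models P^1_j$, $\theta\models P^0_j$, or neither, which is equivalent to your split on $p_0\neq 0$, $p_1\neq 0$). The only difference is presentational: you factor out the substitution lemma and the linearity/support-decomposition lemma up front, whereas the paper proves the substitution facts inline (inside the AS and PAS cases) and handles linearity implicitly through explicit support-state bookkeeping in the SEQ, IF, and WHILE cases.
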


\begin{proof}

We prove by structural induction on $C$.  Let $I$ be an arbitrary interpretation.

\begin{itemize}
\item (SKIP) It's trivial to see that that $\models \{\phi\}\texttt{skip}\{\phi\}$.

\item (AS) Assume $ \theta \models^I \phi[X/ E ] $. We prove it by induction on the structure of $\phi$. The only non-trivial case is when $\phi$ is of the form $[U_{\overline{q}}] \psi$. 

Assume $\theta \models [U_{\overline{q}}] \psi   [X/E]$.  Then $(\theta_c, U_{\overline{q}} \theta_q ) \models \psi [X/E]$. By the inference rule (AS), we have $\vdash \{\psi[X/E]\}X\leftarrow E\{\psi\}$.  By induction hypothesis (IH), we obtain that $\models \{\psi[X/E] \} X\leftarrow E \{ \psi\} $. Therefore, it holds that  $([\![X\leftarrow E ]\!] \theta_c, U_{\overline{q}} \theta_q ) \models \psi  $ and  $( [\![X\leftarrow E ]\!] \theta_c,  \theta_q ) \models  [U_{\overline{q}}] \psi$ i.e. $[\![ X\leftarrow E]\!]\widehat{\theta}\models [U_{\overline{q}}]\psi$.

\item (PAS) Assume $ \theta  \models^I \phi[X/k_1] \wedge \ldots \wedge  \phi[X/k_n] $. This means that $\phi$ is true if the variable $X$ is assigned to the any of $\{k_1, \ldots, k_n\}$ and all other values are assigned to a value according to $\theta_c$. Let $\Theta = [\![   X\xleftarrow{\$}\{a_1: k_{1},...,a_n: k_{n}\}     ]\!]( \widehat{\theta} )$. Then $\Theta$ is a distribution with support $\{\theta_1, \ldots, \theta_n\}$, where $ \theta_i =  \theta [X\mapsto k_i]$ for $i\in \{1, \ldots,n\}$. Since $\theta_{i,c}$ assigns $X$ to the value $k_i$ and all other variables to the same value as $\theta_c$ and $\theta_{i,q}= \theta_q$. We know that $\theta_i \models^I \phi$. This means that $\phi$ is true on all supports of $\Theta$. Therefore, $\Theta \models^I \phi$.

\item (SEQ)  If the rule (SEQ) is used to derive $\vdash\{\phi\}C_{1};C_{2}\{\phi_{2}\}$ from $\vdash\{\phi\}C_{1}\{\phi_{1}\}$ and $\vdash\{\phi_{1}\}C_{2}\{\phi_{2}\}$, then by IH, we have $\models\{\phi\}C_{1}\{\phi_{1}\}$ and $\models\{\phi_{1}\}C_{2}\{\phi_{2}\}$. Assume $\theta  \in [\![\phi]\!]^I$. Let $\theta'$ be an arbitrary pure cq-state such that $\theta' \in sp(   [\![C_{1}; C_{2}]\!](\widehat{\theta})  )$. By $[\![C_{1}; C_{2}]\!](\widehat{\theta} )  =    [\![C_{2}]\!]([\![C_{1}]\!](  \widehat{\theta} ))   $, we know that there is a state $\theta_1 \in sp(   [\![C_{1}]\!]( \widehat{\theta})  )$ such that $\theta' \in sp (   [\![C_{2}]\!](\widehat{\theta_1})   )$. Now by  $\models  \{ \phi \} C_{1} \{ \phi_{1} \}$   we know that $ [\![C_{1}]\!](\widehat{\theta})  \models^I \phi_1$ and hence $\theta_1 \models^I \phi_1$. By $\models \{ \phi_{1} \}C_{2} \{ \phi_{2} \}$ we know that $ [\![C_{2}]\!]( \widehat{\theta_1} )  \models^I \phi_2$ and hence  $\theta'\models^I \phi_2$.

\item (IF)   Assume $ \theta  \in [\![\phi]\!]^I$, $\vdash\{\phi\wedge B\}C_{1}\{\psi\}$ and $\vdash\{\phi\wedge \neg B\}C_{2}\{\psi\}$. By induction hypothesis we know that $ \models  \{\phi\wedge B\}C_{1}\{\psi\}$ and $\models  \{\phi\wedge \neg B\}C_{2}\{\psi\}$. Let $\theta'$ be an arbitrary state which belongs to $sp([\![\texttt{if}\ B\ \texttt{then}\ C_{1}\ \texttt{else}\ C_{2}]\!]( \widehat{\theta} ))$.

Since $\theta$ is a pure cq-state, we know that either $ \theta \in  [\![B]\!] $ or $ \theta \in  [\![\neg B]\!] $.

\begin{itemize}

\item If $ \theta \in  [\![B]\!] $, then  $[\![\texttt{if}\ B\ \texttt{then}\ C_{1}\ \texttt{else}\ C_{2}]\!](  \widehat{\theta})=[\![C_{1}]\!](\downarrow_{B}( \widehat{\theta}))  =   [\![C_{1}]\!]( \widehat{\theta})  $. Hence $\theta'\in  sp ( [\![C_{1}]\!](  \widehat{\theta} ) )$. From $\theta  \in [\![\phi]\!]^I$ and $ \theta \in  [\![B]\!] $ we know that $\theta \models^I  \phi \wedge B$. Now by $ \models  \{\phi\wedge B\}C_{1}\{\psi\}$ we deduce that $[\![C_{1}]\!](\widehat{\theta} ) \models^I \psi$. Therefore, $\theta' \models^I  \psi$.

\item If $ \theta \in  [\![ \neg B]\!] $, then  $[\![\texttt{if}\ B\ \texttt{then}\ C_{1}\ \texttt{else}\ C_{2}]\!]( \widehat{\theta})=[\![C_{2}]\!](\downarrow_{ \neg B}( \widehat{\theta}))  =   [\![C_{2}]\!]( \widehat{\theta})  $. Hence $\theta'\in sp([\![C_{2}]\!](\widehat{\theta} ) )$. From $\theta  \in [\![\phi]\!]^I$ and $ \theta \in  [\![\neg B]\!] $ we know that $\theta \models^I  \phi \wedge \neg B$. Now by $ \models  \{\phi\wedge \neg B\}C_{2}\{\psi\}$ we deduce that $[\![C_{2}]\!]( \widehat{\theta}) \models^I \psi$. Therefore, $\theta' \models^I  \psi$.
\end{itemize}

\item (WHILE) Assume that $\vdash \{\phi  \wedge B\}   C   \{\phi\}$, By IH, it holds that $\models \{\phi  \wedge B\}   C   \{\phi\} $. Let $\theta$ be an arbitrary pure cq-state and $I$ be an arbitrary interpretation such that $\theta\models^I \phi$. We remind the readers that $ [\![   \texttt{while} \mbox{ } B \mbox{ } \texttt{do} \mbox{ } C  ]\!] (\theta) =  \displaystyle\sum_{i = 0}^{\infty} \downarrow_{\neg B}  (  (  [\![   C   ]\!] \circ \downarrow_{ B})^i (\theta ) )$. We want to show that for each natural number $k$, it holds that  $     (  [\![   C   ]\!] \circ \downarrow_{ B})^k (\theta ) \models^I \phi  $.

We prove it by induction on $k$:

\begin{itemize}
\item If $k=0$, then $([\![C]\!] \circ \downarrow_{ B})^k ( \theta )  =       \widehat{\theta}   $.  Hence $  (  [\![   C   ]\!] \circ \downarrow_{ B})^k (\theta )  \models^I  \phi$.

\item Assume that $     (  [\![   C   ]\!] \circ \downarrow_{ B})^k (\theta) \models^I \phi  $.
Then $     (  [\![   C   ]\!] \circ \downarrow_{ B})^{k+1} (\theta)  =      [\![   C   ]\!] \circ \downarrow_{ B}  ( (  [\![   C   ]\!] \circ \downarrow_{ B})^{k} (  \theta )   )$. For arbitrary $\theta' \in sp (   (  [\![   C   ]\!] \circ \downarrow_{ B})^k (\theta )   )$, we have $\theta' \models^I \phi$. 
\begin{itemize}
    \item If $\theta' \models \neg B$, then $sp([\![   C   ]\!] \circ \downarrow_{ B} ( \theta')) =\emptyset$. Then $[\![   C   ]\!] \circ \downarrow_{ B} ( \theta') \models^I \phi$ vacuously.
    \item If $\theta' \models  B$, then $[\![   C   ]\!] \circ \downarrow_{ B} ( \theta')= [\![   C   ]\!]   ( \theta')$. Then by $\models \{\phi  \wedge B\}   C   \{\phi\} $ we know $[\![   C   ]\!]  ( \theta') \models^I \phi$. 
\end{itemize}

\end{itemize}

Therefore, we know $[\![   C   ]\!] \circ \downarrow_{ B} (\theta') \models^I \phi$ for all  $\theta' \in sp (   (  [\![   C   ]\!] \circ \downarrow_{ B})^k (\theta )   )$. This implies that $ (  [\![   C   ]\!] \circ \downarrow_{ B})^{k+1} (\theta ) \models^I \phi$.
According to the above claim, we infer that $\downarrow_{\neg B}  (  (  [\![   C   ]\!] \circ \downarrow_{ B})^i ( \theta ) ) \models^I \phi \wedge \neg B$ for each $i\in\mathbb{N}$. 	
Hence $\displaystyle\sum_{i = 0}^{\infty} \downarrow_{\neg B}  (  (  [\![   C   ]\!] \circ \downarrow_{ B})^i ( \theta ) ) \models^I \phi  \wedge \neg B$. This proves $  [\![   \texttt{while} \mbox{ } B \mbox{ } \texttt{do} \mbox{ } C  ]\!] (  \theta ) \models^I  \phi  \wedge \neg B $.

\item (UNITARY) If $\theta \models [U_{\overline{q}}] \phi$, then $ [U_{\overline{q}}] \theta \models \phi $. That is, $[\![ U[ \overline{q}] ]\!] \theta \models \phi$.

\item (MEASURE) Assume $\theta \models ( [\mathfrak{P}^0_j ]\phi[X/0] \vee P^1_j )\wedge ( [\mathfrak{P}^1_j ]\phi[X/1] \vee P^0_j)$. We prove it by case analysis:

\begin{itemize}
\item If $\theta \models P^1_j$, then $\theta \not\models P^0_j$. Hence $\theta \models  [\mathfrak{P}^1_j ] \phi[X/1]$.  Note that in this case $ [\![  X    \leftleftarrows q_j ]\!] \theta = \widehat{\theta'} $ where $\theta'_c = \theta_c [X \mapsto 1]$  and $\theta'_q=\frac{(P^1 \otimes I_{-j}) \theta_q      }{  \sqrt{p_1}   } $, $p_1= \Tr( (P^1 \otimes I_{-j})   \theta_q  \theta_q^{\dagger}          )  $. From  $\theta \models  [\mathfrak{P}^1_j ] \phi[X/1]$, we obtain that $( \theta_c , \theta'_q) \models  \phi[X/1]$, which further implies that $( \theta_c [X \mapsto 1], \theta'_q) \models  \phi $. That is $ \theta' \models  \phi$.

\item The case for $\theta \models P^0_j$ can be proved analogously.

\item If $\theta \models \neg P^0_j \wedge \neg P^1_j $, then $\theta \models  [\mathfrak{P}^0_j ] \phi[X/0]  \wedge [\mathfrak{P}^1_j ] \phi[X/1]$. In this case $ [\![  X  \leftleftarrows q_j ]\!] \theta =  \Theta$ where $\Theta = p_0 \theta_0 + p_1 \theta_1$, $p_i = \Tr( (P^i \otimes I_{-j})   \theta_q   \theta_q^{\dagger}        ) >0 $, $\theta_{i,c} =  \theta_c [X \mapsto i]$, $\theta_{i, q}= \frac{(P^i \otimes I_{-j}) \theta_q      }{  \sqrt{p_i}   } $. From $\theta \models [\mathfrak{P}^0_j ] \phi[X/0] $ we know $(\theta_c, \theta_{0,q}  ) \models  \phi[X/0] $, which implies that $(\theta_{0,c}, \theta_{0,q} ) \models \phi $.  From $\theta \models [\mathfrak{P}^1_j ] \phi[X/1] $ we know $(\theta_c, \theta_{1,q}  ) \models  \phi[X/1] $, which implies that $(\theta_{1,c}, \theta_{1,q} ) \models \phi $. Therefore, $\Theta \models \phi$.

\end{itemize}

\item (CONS) Assume $\models \phi'\rightarrow \phi$, $\vdash \{\phi\} C \{\psi\} $ and $\models \psi \rightarrow \psi'$. By induction hypothesis we obtain that $\models \{\phi\} C \{\psi\}$. Let $\theta$ be a state such that $\theta \models^I  \phi'$. Then by $\models \phi'\rightarrow \phi$ we know that $\theta \models^I  \phi$. By $\models \{\phi\} C \{\psi\}$ we know that $ [\![C]\!] (\widehat{\theta}) \models^I \psi $. Hence $\theta' \models^I \psi$ for all $\theta'$ which belongs to $sp([\![C]\!] (\widehat{\theta}))$. Now by  $\models \psi \rightarrow \psi'$ we know that $\theta' \models^I \psi'$.

\end{itemize}    

\end{proof}

\begin{theorem*} \label{proof:wpd}

$\Theta \models^I \wpd(C,\phi)$ iff $[\![ C]\!] (\Theta) \models^I \phi$.

\end{theorem*}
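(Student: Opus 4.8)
The plan is to prove the biconditional by structural induction on the command $C$, after first reducing to point distributions. The key preliminary observation is that both sides of the claimed equivalence are determined support-wise: since all weights of a probabilistic cq-state are non-negative, the semantics $[\![C]\!]$ is a linear sub-stochastic map and $\mathrm{supp}([\![C]\!](\Theta)) = \bigcup_{\theta \in \mathrm{supp}(\Theta)} \mathrm{supp}([\![C]\!](\widehat\theta))$ with no cancellation, while the possibility semantics reads $\Theta \models^I \psi$ as ``$\theta \models^I \psi$ for every $\theta \in \mathrm{supp}(\Theta)$''. Hence the theorem for an arbitrary $\Theta$ is \emph{equivalent} to the pure-state statement $\theta \models^I \wpd(C,\phi) \iff [\![C]\!](\widehat\theta) \models^I \phi$ for every pure cq-state $\theta$, which is what I would actually prove by induction; conversely, once the pure-state statement is known for a command, the full statement for that command follows, and I would use this freely inside the induction.

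Before the induction I would isolate a substitution lemma: $\theta \models^I \phi[X/E]$ iff $(\theta_c[X\mapsto[\![E]\!]\theta_c],\, \theta_q) \models^I \phi$. This is proved by induction on $\phi$, and the only delicate cases are the quantum modalities $P^i_j$, $[U_{\overline q}]\phi$, $[\mathfrak{P}^i_j]\phi$: there the classical substitution on $X$ leaves the quantum part untouched and commutes with the modality, exactly as in the $[U_{\overline q}]\psi$ subcase of the soundness proof of (AS). With this lemma, the atomic cases are routine: \texttt{skip} is immediate; $X\leftarrow E$ and $X\xleftarrow{\$}R$ follow from the substitution lemma together with the fact that the support of $[\![X\xleftarrow{\$}R]\!](\widehat\theta)$ is exactly $\{\theta[X\mapsto k_i]\}_i$; the unitary case is a direct unfolding of the semantics of $[U_{\overline q}]\phi$; and the measurement case reduces to a case split on whether $\theta \models P^0_j$, $\theta \models P^1_j$, or neither, extending the (MEASURE) soundness argument to a biconditional. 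The compositional cases use the reduction above: (SEQ) applies the induction hypothesis for $C_2$ to the mixed state $[\![C_1]\!](\widehat\theta)$ and then the hypothesis for $C_1$, while (IF) splits on $\theta \models B$ versus $\theta \models \neg B$, using that $\downarrow_B(\widehat\theta)$ equals $\widehat\theta$ or the zero subdistribution accordingly.

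The main obstacle is the \texttt{while} case, where $\wpd(\texttt{while}\,B\,\texttt{do}\,C,\phi) = \bigwedge_{k=0}^\infty \psi_k$ with $\psi_0=\top$ and $\psi_{i+1}=(B\wedge\wpd(C,\psi_i))\vee(\neg B\wedge\phi)$, and the loop semantics is the infinite sum $[\![\texttt{while}\,B\,\texttt{do}\,C]\!](\widehat\theta)=\sum_{i=0}^\infty \downarrow_{\neg B}((\,[\![C]\!]\circ\downarrow_B)^i(\widehat\theta))$. I would match the approximants $\psi_k$ to the finite truncations of this sum by proving, by an inner induction on $k$, that $\theta\models^I\psi_k$ iff the partial sum $S_k(\widehat\theta)=\sum_{i=0}^{k-1}\downarrow_{\neg B}((\,[\![C]\!]\circ\downarrow_B)^i(\widehat\theta))$ satisfies $\phi$; intuitively $\psi_k$ asserts that every branch exiting the loop within $k-1$ body iterations ends in $\phi$. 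The inductive step is where the difficulty concentrates: on the branch $\theta\models B$ one rewrites $\psi_{k+1}$ as $\wpd(C,\psi_k)$, applies the \emph{outer} induction hypothesis for $C$ to replace $\theta\models^I\wpd(C,\psi_k)$ by $[\![C]\!](\widehat\theta)\models^I\psi_k$, and then applies the \emph{inner} induction hypothesis at depth $k$ (lifted to the mixed state $[\![C]\!](\widehat\theta)$ via the support reduction) to turn this into a statement about $S_k([\![C]\!](\widehat\theta))$, which is precisely the shifted tail of the sum; combining with the $\neg B$ contribution at step $0$ yields $S_{k+1}(\widehat\theta)\models^I\phi$. This nested induction, simultaneously over command structure and unrolling depth, is the crux.

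Finally I would pass to the limit. Because the partial sums $S_k(\widehat\theta)$ are monotone in $k$ and $\mathrm{supp}([\![\texttt{while}\,B\,\texttt{do}\,C]\!](\widehat\theta))=\bigcup_k \mathrm{supp}(S_k(\widehat\theta))$, the possibility semantics gives $[\![\texttt{while}\,B\,\texttt{do}\,C]\!](\widehat\theta)\models^I\phi$ iff $S_k(\widehat\theta)\models^I\phi$ for all $k$, which by the inner induction is exactly $\theta\models^I\psi_k$ for all $k$, i.e. $\theta\models^I\bigwedge_{k=0}^\infty\psi_k=\wpd(\texttt{while}\,B\,\texttt{do}\,C,\phi)$. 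This closes the \texttt{while} case and hence the induction. Alternatively, one can route this case through Lemma \ref{key lemma}, which pins down the exact exit step, but the direct truncation argument keeps the bookkeeping self-contained.
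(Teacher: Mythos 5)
Your proof is correct, and away from the loop case it coincides with the paper's own: the same reduction to point distributions (which the paper merely asserts with ``it is sufficient to prove'', while you actually justify it via the no-cancellation support identity for $[\![C]\!]$), the same substitution lemma proved by induction on $\phi$ for the assignment cases, and the same routine handling of the remaining atomic and compositional commands. Where you genuinely diverge is the \texttt{while} case, which is the crux. The paper splits the biconditional: its forward direction is discharged proof-theoretically, by citing $\vdash \{\wpd(C,\phi)\}C\{\phi\}$ (Proposition~\ref{prop.weakestprecondition}) together with soundness (Theorem~\ref{th.soundnessphl}) --- a one-line detour through the proof system whose non-circularity the reader must verify, since Proposition~\ref{prop.weakestprecondition} appears after Theorem~\ref{th.precondition} (its proof is in fact independent, resting only on propositional validities such as $B \wedge \psi_{i+1} \models B \wedge \wpd(C,\psi_i)$); its backward direction is an induction on the ``iteration times'' of the loop, with a footnote-defined termination depth and a separate, somewhat informal argument for non-terminating runs. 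You instead prove a single biconditional at every truncation depth, $\theta \models^I \psi_k$ iff $S_k(\widehat{\theta}) \models^I \phi$, by a nested induction (outer on $C$, inner on $k$), and recover the theorem in the limit from $\mathrm{supp}([\![\texttt{while}\ B\ \texttt{do}\ C]\!](\widehat{\theta})) = \bigcup_k \mathrm{supp}(S_k(\widehat{\theta}))$. Your route buys a purely semantic, self-contained argument in which both directions and non-termination are handled uniformly (the zero subdistribution satisfies every assertion vacuously under the possibility semantics, so divergent mass never falsifies $\phi$); the paper's route buys brevity in the forward direction by reusing already-developed machinery, at the cost of entangling this semantic theorem with the proof system. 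The identities your inductive step needs --- $S_{k+1}(\widehat{\theta}) = S_k([\![C]\!](\widehat{\theta}))$ when $\theta \models B$, the vacuous base case $S_0 = 0$, and the lifting of the inner hypothesis to mixed states via linearity of $S_k$ --- all check out, and the outer hypothesis legitimately applies at postcondition $\psi_k$ because the induction quantifies over all assertions; so the argument is sound as sketched, and is in spirit a semantic analogue of what the paper's Lemma~\ref{key lemma} later provides for the probabilistic layer.
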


\begin{proof}
It is sufficient to prove that $\theta \models^I \wpd(C,\phi)$ iff $[\![ C]\!] (\widehat{\theta}) \models^I \phi $.

\begin{itemize}

\item (skip) $\theta \models^I \wpd(skip,\phi)$ iff $ \theta \models^I \phi$ iff $[\![ skip ]\!] (\widehat{\theta}) \models^I \phi $.

\item (AS)  $\theta \models^I \wpd(X \leftarrow E,\phi)$ iff $\theta \models^I \phi [X/E]$. Then we will show that  $[\![ X \leftarrow E ]\!] (\widehat{\theta}) \models \phi $.

Assume that $\theta\models^I \phi[X/E]$. By Definition~\ref{def.semanticsofcommand}, $[\![X\leftarrow E]\!](\widehat{\theta})=(\theta_c,\theta'_q)$ where $\theta'_q=\theta_q$ and $\theta'_c=\theta_c[X\mapsto \llbracket E\rrbracket^I\theta_c]$. 

Do induction on the structure of $\phi$: 

\begin{itemize}
   
\item  When $\phi$ is $\top$ or $\bot$, the proof is trivial.

\item When $\phi=P^i_j$, $\theta\models^I P^i_j[X/E]$ iff $\theta\models^I P^i_j$ iff $Tr(P^i Tr_{-j}(\theta_q\theta_q^{\dagger}))=1$ iff $(\theta'_c,\theta_q)\models P_j^i$ iff $\llbracket X\leftarrow E\rrbracket \widehat{\theta}\models P_j^i$. 

\item When $\phi=e_1\ rop\ e_2$, $\theta\models^I (e_1\ rop\ e_2)[X/E]$ iff $\theta\models^I e_1[X/E]\ rop\ e_2[X/E]$ iff $\theta_c\models^I e_1[X/E]\ rop\ e_2[X/E]$.

      Induction on the structure of $e_1$ (similarly to $e_2$), we will prove $\llbracket e_1[X/E]\rrbracket^I\theta_c=\llbracket e_1\rrbracket^I_{\theta_c[X\mapsto \llbracket E\rrbracket^I\theta_c]}$. 

      \begin{itemize}
          \item If $e_1=n$ for some integer $n$, $\llbracket n[X/E]\rrbracket^I\theta_c=n=\llbracket n\rrbracket^I\theta_c[X\mapsto \llbracket E\rrbracket\theta_c]$.
          \item If $e_1=X$, $\llbracket X[X/E]\rrbracket^I\theta_c=\llbracket E\rrbracket\theta_c =\llbracket X\rrbracket^I\theta_c[X\mapsto \llbracket E\rrbracket^I\theta_c]$.
          \item If $e_1=x$, $\llbracket x[X/E]\rrbracket^I\theta_c=\llbracket x\rrbracket\theta_c =\llbracket x\rrbracket^I\theta_c[X\mapsto \llbracket E\rrbracket^I\theta_c]$.
          \item If $e_1=e_3\ aop\ e_4$, $\llbracket (e_3\ aop\ e_4)[X/E]\rrbracket^I\theta_c=\llbracket e_3[X/E]\rrbracket^I\theta_c\ aop\ \llbracket e_4[X/E]\rrbracket^I\theta_c$. By IH, it equals $\llbracket e_3^I\theta_c[X\mapsto \llbracket E\rrbracket\theta_c]\ aop\ e_4^I\theta_c[X\mapsto \llbracket E\rrbracket\theta_c]=\llbracket e_3\ aop\ e_4\rrbracket^I\theta_c[X\mapsto \llbracket E\rrbracket\theta_c]$. 
      \end{itemize}

      So we have $\theta_c\models^I e_1[X/E]\ rop\ e_2[X/E]$ iff $\theta_c[X\mapsto \llbracket E\rrbracket\theta_c]= e_1\ rop\ e_2$ iff $\llbracket X\leftarrow E\rrbracket (\widehat{\theta})\models^I e_1\ rop\ e_2$.

\item When $\phi=\neg\psi$, $\theta\models^I\neg\psi[X/E]$ iff $\theta\not\models^I\psi[X/E]$. By IH, $\llbracket X\leftarrow E\rrbracket (\widehat{\theta})\not\models \psi$. So $\llbracket X\leftarrow E\rrbracket (\widehat{\theta})\models \neg\psi$.

\item When $\phi=\phi_1\wedge\phi_2$, $\theta\models (\phi_1\wedge \phi_2)[X/E]$ iff $\theta\models^I\phi_1[X/E]\wedge\phi_2[X/E]$. By IH, we have $\llbracket X\leftarrow E\rrbracket^I(\widehat{\theta})\models\phi_1$ and $\llbracket X\leftarrow E\rrbracket (\widehat{\theta})\models^I\phi_2$ iff $\llbracket X\leftarrow E\rrbracket(\widehat{\theta})\models^I\phi_1\wedge\phi_2$. 

\item When $\phi=\forall x\psi$, $\theta\models^I\forall x\psi[X/E]$ iff for all integers $n$ and $I'=I[x\mapsto n]$, $\theta\models^{I'}\psi[X/E]$. By IH, $\llbracket X\leftarrow E\rrbracket (\widehat{\theta})\models^{I'}\psi$ for each $I'$ with $I'=I[x\mapsto n]$ iff $\llbracket X\leftarrow E\rrbracket (\widehat{\theta})\models \forall x\psi$.

\item When $\phi=[U_{\overline{q}}]\psi$, $\theta\models^I [U_{\overline{q}}]\psi[X/E]$ iff $\llbracket U_{\overline{q}}\rrbracket(\widehat{\theta})\models^I\psi[X/E]$ iff $(\theta_c,U_{\overline{q}}\theta U^\dagger_{\overline{q}})\models^I\psi[X/E]$. By IH, $\llbracket X\leftarrow E\rrbracket \llbracket U_{\overline{q}}\rrbracket (\widehat{\theta})\models^I\psi$ iff $(\theta_c[X\mapsto \llbracket E\rrbracket\theta_c],U_{\overline{q}} \theta_q U_{\overline{q}}^\dagger)\models^I\psi$ iff $\llbracket X\leftarrow E\rrbracket(\widehat{\theta})\models [U_{\overline{q}}]\psi$. 

\item When $\phi=[\mathcal{B}^i_j]\psi$, $\theta\models^I [\mathcal{B}^i_j]\psi[X/E]$ iff $\theta'\models^I \psi[X/E]$ iff $\theta'\models^\psi[X/E]$ where $\theta_c'=\theta_c$ and $\theta'_q=\frac{(P_j^1\otimes I_{-j})\theta_q}{\sqrt{p_i}}$. By IH, $(\llbracket X\leftarrow E\rrbracket\theta_c,\theta'_q)\models^I\psi$ iff $\llbracket X\leftarrow E\rrbracket\theta\models^I[\mathcal{B}_j^i]\psi$. 

\end{itemize}

\item (PAS) $\theta \models^I \wpd( X\xleftarrow{\$}\{a_1: k_{1},...,a_n: k_{n} \}, \phi  )$ iff $  \theta \models^I   \phi[X/k_1] \wedge \ldots \wedge  \phi[X/k_n]$. Then, it will be proved that  $  \theta \models^I   \phi[X/k_1] \wedge \ldots \wedge  \phi[X/k_n]$ iff $\llbracket X\xleftarrow{\$}\{a_1: k_{1},...,a_n: k_{n} \}\rrbracket(\widehat{\theta})\models^I\phi$. 

Do induction on the structure of $\phi$:

\begin{itemize}
    \item When $\phi$ is $\top$ or $\bot$, the proof is trivial. 
    \item When $\phi=P_j^i$, $\theta\models^I P_j^i[X/k_1]\wedge\cdots\wedge P_j^i[X/k_n]$ iff $\theta\models^I P_j^i$ iff $\llbracket X\xleftarrow{\$}\{a_1: k_{1},...,a_n: k_{n} \}\rrbracket(\widehat{\theta})\models^I P_j^i$.
    \item When $\phi= e_1\ rop\ e_2$, $\theta\models^I (e_1\ rop\ e_2)[X/k_1]\wedge\cdots\wedge (e_1\ rop\ e_2)[X/k_n]$ iff $\theta\models^I(e_1\ rop\ e_2)[X/k_1]$ and $\cdots$ and $\theta\models (e_1\ rop\ e_2)[X/k_n]$. By the proof in the case (AS) above, we have $\llbracket X\leftarrow k_1\rrbracket(\widehat{\theta})\models^I e_1\ rop\ e_2$ and $\cdots$ and $\llbracket X\leftarrow k_n\rrbracket(\widehat{\theta})\models^I e_1\ rop\ e_2$.

    ($\Rightarrow$) Let $\theta'\in sp(\llbracket X\xleftarrow{\$}\{a_1: k_{1},...,a_n: k_{n} \}\rrbracket (\widehat{\theta}))$. We know that $\theta'_q=\theta_q$ and $\theta'_c$ only differs in the value of $X$ from $\theta_c$. Suppose $\llbracket X\rrbracket^I\theta'_c=k_i$($1\leq i\leq n$). So we have $\llbracket X\leftarrow k_i\rrbracket(\widehat{\theta})=\widehat{\theta'}$ and $\widehat{\theta'}\models^I e_1\ rop\ e_2$. Since $\theta'$ is arbitrarily chosen, $\llbracket X\xleftarrow{\$}\{a_1: k_{1},...,a_n: k_{n} \}\rrbracket(\widehat{\theta})\models^I e_1\ rop\ e_2$. 

    ($\Leftarrow$) Assume $\llbracket X\xleftarrow{\$}\{a_1: k_{1},...,a_n: k_{n} \}\rrbracket(\widehat{\theta})\models^I e_1\ rop\ e_2$. It implies that for each $\theta'\in sp(\llbracket X\xleftarrow{\$}\{a_1: k_{1},...,a_n: k_{n} \}\rrbracket(\widehat{\theta}))$, $\theta'\models^I e_1\ rop\ e_2$. By Definition~\ref{def.semanticsofcommand}, $\theta'=\llbracket X\leftarrow k_i\rrbracket(\widehat{\theta})$ for some $1\leq i\leq n$. So $\llbracket X\leftarrow k_i\rrbracket(\widehat{\theta})\models e_1\ rop\ e_2$. Thus, we have $\llbracket X\leftarrow k_1\rrbracket(\widehat{\theta})\models^I e_1\ rop\ e_2$ and $\cdots$ and $\llbracket X\leftarrow k_n\rrbracket(\widehat{\theta})\models^I e_1\ rop\ e_2$.

    \item When $\phi=\neg\psi$ or $\psi_1\wedge\psi_2$ or $\forall x\psi$, the proofs are similar to those in the case of (AS), respectively.

    \item When $\phi=[U_{\overline{q}}]\psi$, $\theta\models^I [U_{\overline{q}}]\psi[X/k_1]\wedge\cdots\wedge [U_{\overline{q}}]\psi[X/k_n]$ iff $\theta\models^I [U_{\overline{q}}]\psi[X/k_1]$ and $\cdots$ and $\theta\models^I[U_{\overline{q}}]\psi[X/k_n]$ iff $(\theta_c,U_{\overline{q}}\theta_q U_{\overline{q}}^\dagger)\models^I \psi[X/k_1]$ and $\cdots$ and $(\theta_c, U_{\overline{q}}\theta_q U_{\overline{q}}^\dagger)\models^I \psi[X/k_n]$ iff, by what we proved in the case of (AS), $\llbracket X\leftarrow k_1\rrbracket(\widehat{\theta})\models [U_{\overline{q}}]\psi$ and $\cdots$ and $\llbracket X\leftarrow k_n\rrbracket(\widehat{\theta})\models [U_{\overline{q}}]\psi$. Suppose $\theta'\in sp(\llbracket X\xleftarrow{\$}\{a_1: k_{1},...,a_n: k_{n} \}\rrbracket(\widehat{\theta}))$. $\theta'_c=\llbracket X\leftarrow k_i\rrbracket(\widehat{\theta})$ for some $1\leq i\leq n$ and $\theta'_q=\theta_q$. So $\theta'=\llbracket X\leftarrow k_i\rrbracket(\theta)$.

    ($\Leftarrow$) By $\llbracket X\leftarrow k_i\rrbracket(\widehat{\theta})\models [U_{\overline{q}}]\psi$, $\theta'\models [U_{\overline{q}}]\psi$. So $\llbracket  X\xleftarrow{\$}\{a_1: k_{1},...,a_n: k_{n}\rrbracket (\widehat{\theta})\models [U_{\overline{q}}]\psi$.

    ($\Rightarrow$) By $\llbracket  X\xleftarrow{\$}\{a_1: k_{1},...,a_n: k_{n}\rrbracket (\widehat{\theta})\models [U_{\overline{q}}]\psi$, $\theta'\models [U_{\overline{q}}]\psi$, which implies that $\llbracket X\leftarrow k_i\rrbracket(\widehat{\theta})\models^I [U_{\overline{Q}}]\psi$.

     So we proved that $\theta\models^I [U_{\overline{q}}]\psi[X/k_1]\wedge\cdots\wedge [U_{\overline{q}}]\psi[X/k_n]$ iff $\llbracket X\xleftarrow{\$}\{a_1: k_{1},...,a_n: k_{n} \} \rrbracket(\widehat{\theta})\models [U_{\overline{q}}]\psi$. 

     \item When $\phi=[\mathcal{B}_j^i]\psi$, $\theta\models^I[\mathcal{B}_j^i]\psi[X/k_1]\wedge\cdots\wedge [\mathcal{B}_j^i]\psi[X/k_n]$ iff $\theta\models^I[\mathcal{B}_j^i]\psi[X/k_1]$ and $\cdots$ and $\theta\models^I[\mathcal{B}_j^i]\psi[X/k_n]$ iff $\llbracket X\leftarrow k_1\rrbracket(\widehat{\theta})\models [\mathcal{B}_j^i]\psi$ and $\cdots$ and $\llbracket X\leftarrow k_n\rrbracket(\widehat{\theta})\models [\mathcal{B}_j^i]\psi$. Suppose that $\theta'\in sp(\llbracket X\xleftarrow{\$}\{a_1: k_{1},...,a_n: k_{n} \}\rrbracket (\widehat{\theta}))$. We have $\theta'=\llbracket X\leftarrow k_i\rrbracket\theta$ for some $1\leq i\leq n$. 

      ($\Leftarrow$) By $\llbracket X\leftarrow k_i\rrbracket(\widehat{\theta})\models [\mathcal{B}_j^i]\psi$, $\theta'\models [\mathcal{B}_j^i]\psi$. So $\llbracket  X\xleftarrow{\$}\{a_1: k_{1},...,a_n: k_{n}\rrbracket (\widehat{\theta})\models [\mathcal{B}_j^i]\psi$.

      ($\Rightarrow$) By $\llbracket  X\xleftarrow{\$}\{a_1: k_{1},...,a_n: k_{n}\rrbracket (\widehat{\theta})\models [\mathcal{B}_j^i]\psi$, $\theta'\models [\mathcal{B}_j^i]\psi$, which implies that $\llbracket X\leftarrow k_i\rrbracket(\widehat{\theta})\models^I [\mathcal{B}_j^i]\psi$.

      So we proved the case.
\end{itemize}

\item (SEQ) $\theta \models \wpd( C_1 ;C_2, \phi)$ iff $\theta \models \wpd(C_1, \wpd(C_2, \phi ) )$   iff, by IH, $[\![ C_1]\!] (\theta ) \models \wpd(C_2, \phi )$ iff, by IH, $ [\![C_2 ]\!] ([\![ C_1]\!] (\theta ) \models ) \models \phi $ iff $[\![ C_1 ; C_2]\!] (\theta ) \models \phi $

\item (IF) $\theta \models \wpd(  if\ B\ then\ C_{1}\ else\ C_{2}, \phi  )$ iff $\theta \models (B \wedge \wpd(C_1, \phi))  \vee (\neg B \wedge \wpd(C_2, \phi)) $ iff $\theta \models B$ and $\theta \models \wpd(C_1, \phi)$ or $\theta \models \neg B$ and $\theta \models \wpd(C_2, \phi)$ iff $\theta \models B$ and $[\![C_1 ]\!] \theta \models \phi$ or $\theta \models \neg B$ and $[\![C_2 ]\!] \theta \models \phi$ iff $[\![ if\ B\ then\ C_{1}\ else\ C_{2} ]\!] \theta \models \phi $.

\item (WHILE) ($\Rightarrow$) Assume $\theta \models \wpd( while \mbox{ }B \mbox{ } do \mbox{ }C  , \phi)$, then from soundness and $\vdash \{\wpd(C,\phi)\} C\{ \phi\}$ we know $\models \{\wpd(C,\phi)\} C\{ \phi\}$. Therefore, $[\![   while \mbox{ }B \mbox{ } do \mbox{ }C ]\!] \theta \models \phi  $.

($ \Leftarrow $)    
Assume $[\![   while \mbox{ }B \mbox{ } do \mbox{ }C ]\!] \theta \models \phi  $.

 We prove by induction on the iteration times of the while loop.

\begin{itemize}

\item 
Suppose that the while loop is terminated after  $0$ time of execution.\footnote{The loop terminated after $k$ steps means that the longest branch of the execution of the while loop terminates after $k$ steps. } Then  $   \theta   \not\models B   $ and       $   [\![         while \mbox{ }B \mbox{ } do \mbox{ }C      ]\!]   ( \theta )  = \theta $. Therefore,    $\theta \models^I \phi$. Then we have $\theta   \models^I \neg B \wedge \phi$. Hence $\theta \models^I \psi_i$ for all $i$. Therefore, $\theta  \models^I \wpd( while \mbox{ }B \mbox{ } do \mbox{ }C  , \phi)$.

\item  
Suppose that the while loop is terminated after  $1$ time of execution. Then  $   \theta   \models B   $ , $ ([\![   C  ]\!] \circ \downarrow_{ B})(   \theta ) = [\![   C  ]\!] (   \theta )$ and $ [\![   C  ]\!] (   \theta ) \models \neg B \wedge \phi  $. 

Hence $[\![   C  ]\!] (   \theta  )  \models^I \psi_i$ for all $i$.

By I.H. we know $\theta \models \wpd( C, \psi_i)$. Therefore, $ \theta \models B\wedge \wpd( C, \psi_i)$, 

 $\theta \models^I \psi_i$ for all $i$. That is, $\theta \models^I \wpd( while \mbox{ }B \mbox{ } do \mbox{ }C  , \phi)$.

\item  By induction hypothesis, it holds that if iteration time is $k$, $\llbracket While\ B\ do\ C\rrbracket\models\phi$.

Suppose that the while loop is terminated after  $k+1$ times of execution. 

Let $\theta_1 \in sp (   ([\![   C   ]\!] \circ \downarrow_{ B})^k (\theta )  )$. 

\begin{itemize}

\item If $\theta_1 \models \neg B$, then $\theta_1$ is a state where the loop terminates after $k$ time of execution. Then by the induction hypothesis we know $\theta_1\models^I \psi_i$ for all $i$.

\item If $\theta_1 \models  B$, then $ ([\![   C  ]\!] \circ \downarrow_{ B})(  \theta_1 ) = [\![   C  ]\!] (  \theta_1 )$. Therefore, for all $\theta' \in sp ( [\![   C  ]\!] (\theta_1 ) )$,  $   \theta' \not\models B$, $\theta'  \models^I  \phi$ and $[\![   C  ]\!] ( \theta_1 )  \models^I \neg B\wedge \phi$. 
Hence $[\![   C  ]\!] (  \theta_1 )  \models^I \psi_i$ for all $i$. By I.H. we know $\theta_1 \models \wpd(C, \psi_i)$. From $\theta_1 \models B$ we deduce that

 $ \theta_1 \models^I \psi_i$ for all $i$.
\end{itemize}

This proves $ ([\![   C   ]\!] \circ \downarrow_{ B})^k ( \theta )   \models^I \psi_i $ for all $i$.  

%Let $  \theta_2 \in sp(  ([\![   C   ]\!] \circ \downarrow_{ B})^{k-1} ( \theta )    )$.

%Now, we split into two cases: 

%\begin{itemize}
%\item if $\theta_2 \models \neg B$, then $\theta_2\models^I \neg B \wedge \phi$. Therefore, $\theta_2 \models^I \psi_i$ for all $i$.
%\item 
%If $\theta_2 \models B$, then  $([\![   C   ]\!] \circ \downarrow_{ B}) (  \theta_2 ) = [\![   C   ]\!]  (   \theta_2 )$. Then $ [\![   C   ]\!]  (   \theta_2 )   \models^I \psi_i $ for all $i$.

%By I.H. we know $\theta_2 \models \wpd(C, \psi_i)$. Therefore, from 

  %$\theta_2  \models B$ we deduce
 %$\theta_2 \models^I \psi_i$ for all $i$.
%\end{itemize}

%This proves $ ([\![   C   ]\!] \circ \downarrow_{ B})^{k-1} ( \theta  )   \models^I \psi_i $ for all $i$.  

%Repeat the above procedure we deduce that  $\theta \models^I    \psi_i  $ for all $i$. 

\item Now we study the case in which  the loop never terminates. That is, there is an infinite sequence of pure cq-states $\theta_0, \theta_1, \theta_2,\ldots$  in which $\theta_0=\theta$ and $\theta_{j+1} \in sp ([\![ C ]\!](\theta_j) )$. We will show that $\theta_j \models \psi_i$ for all $i$ and $j$.

It's easy to see that $\theta_j \models B$ for all $j$ because otherwise the loop will terminate. Now we  prove that for all $j$, $\theta_j \models \psi_i$ for all $i$ by induction on $i$.

It's easy to see that $\theta_j \models \psi_0$ for all $j$. Assume for all $j$ it holds that $\theta_j \models \psi_k$. Then $\theta_{j-1} \models \wpd(C, \psi_k)$. Therefore, $\theta_{j-1} \models \psi_{k+1}$.
\end{itemize}

\item (UNITARY) $\theta \models  \wpd(U[\overline{q}], \phi ) $ iff $\theta \models [U_{\overline{q}}] \phi$ iff $  U[\overline{q}] \theta \models \phi $.

\item (MEASURE) $\theta \models \wpd(X \leftleftarrows q_j,\phi)  $ iff $\theta \models ( P^1_j  \vee  [\mathfrak{P}^0_j] \phi[X/0]   ) \wedge ( P^0_j  \vee  [\mathfrak{P}^1_j] \phi[X/1]  ) $.    Now we prove it by case analysis. 

\begin{enumerate}
\item If $\theta \models P^1_j$, then $\theta \models ( P^1_j  \vee  [\mathfrak{P}^0_j] \phi[X/0]   ) \wedge ( \mathfrak{P}^0_j  \vee  [P^1_j] \phi[X/1]  ) $ iff $[\![X \leftarrow M[q_j]]\!] \theta  \models \phi $

\item  If $\theta \models P^0_j$, then $\theta \models ( P^1_j  \vee  [\mathfrak{P}^0_j] \phi[X/0]   ) \wedge ( P^0_j  \vee  [\mathfrak{P}^1_j] \phi[X/1]  ) $ iff  $[\![X \leftarrow M[q_j]]\!] \theta \models \phi $

\item  If $\theta \models \neg  P^0_j  \wedge \neg P^1_j $, then $\theta \models ( P^1_j  \vee  [\mathfrak{P}^0_j] \phi[X/0]   ) \wedge ( P^0_j  \vee  [\mathfrak{P}^1_j] \phi[X/1]  ) $ iff  $[\![X \leftleftarrows q_j ]\!] \theta \models \phi $

\end{enumerate}

\end{itemize}

\end{proof}

\begin{proposition*}\label{proof:prop-wp}
It holds that $ \vdash \{   \wpd(C, \phi) \}   C\{\phi\}$.
\end{proposition*}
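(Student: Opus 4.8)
The plan is to proceed by structural induction on the command $C$, proving the statement uniformly for all postconditions, i.e. showing $\vdash\{\wpd(C,\phi)\}C\{\phi\}$ for every deterministic assertion $\phi$. The guiding observation is that the clauses of Definition~\ref{def.weakestprecondition} were engineered to mirror the axioms and rules of QHL$_d$, so most cases close immediately; the only genuine work is in the while-loop.

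For the atomic commands $skip$, $X\leftarrow E$, $X\xleftarrow{\$}R$, $U[\overline{q}]$, and $X\leftleftarrows q_j$, the defining clause of $\wpd(C,\phi)$ coincides syntactically with the precondition appearing in the matching axiom (SKIP, AS, PAS, UNITARY, MEASURE respectively), so $\vdash\{\wpd(C,\phi)\}C\{\phi\}$ is a direct instance of that axiom, using no induction hypothesis. For the non-quantum compound commands I would argue as in \cite{SunSBLC24}. For $C_1;C_2$ the induction hypothesis supplies $\vdash\{\wpd(C_1,\wpd(C_2,\phi))\}C_1\{\wpd(C_2,\phi)\}$ and $\vdash\{\wpd(C_2,\phi)\}C_2\{\phi\}$, and (SEQ) concludes since $\wpd(C_1;C_2,\phi)=\wpd(C_1,\wpd(C_2,\phi))$. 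For the conditional, set $\psi=(B\wedge\wpd(C_1,\phi))\vee(\neg B\wedge\wpd(C_2,\phi))$; since $\models\psi\wedge B\rightarrow\wpd(C_1,\phi)$ and $\models\psi\wedge\neg B\rightarrow\wpd(C_2,\phi)$ hold on pure states, the induction hypothesis together with (CONS) yields $\vdash\{\psi\wedge B\}C_1\{\phi\}$ and $\vdash\{\psi\wedge\neg B\}C_2\{\phi\}$, and (IF) then gives $\vdash\{\psi\}\texttt{if }B\texttt{ then }C_1\texttt{ else }C_2\{\phi\}$.

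The main obstacle is the while-loop, whose precondition is the infinitary conjunction $\Psi:=\wpd(\texttt{while }B\texttt{ do }C,\phi)=\bigwedge_{k=0}^{\infty}\psi_k$ with $\psi_0=\top$ and $\psi_{k+1}=(B\wedge\wpd(C,\psi_k))\vee(\neg B\wedge\phi)$. Since the proof system has no infinitary-conjunction rule, I cannot build the triple conjunct-by-conjunct; instead I will use $\Psi$ itself as a loop invariant and discharge the side-conditions of (WHILE) and (CONS) semantically. The two facts to establish are: (i) $\models\Psi\wedge\neg B\rightarrow\phi$, which holds because the conjunct $\psi_1=(B\wedge\wpd(C,\top))\vee(\neg B\wedge\phi)$ reduces, under $\neg B$, to $\neg B\wedge\phi$ and hence entails $\phi$; and (ii) $\models\Psi\wedge B\rightarrow\wpd(C,\Psi)$. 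For (ii), take any pure cq-state $\theta$ with $\theta\models\Psi\wedge B$; then for every $k\geq0$ we have $\theta\models\psi_{k+1}$ and $\theta\models B$, so $\theta\models\wpd(C,\psi_k)$, whence by Theorem~\ref{th.precondition} $[\![C]\!](\widehat{\theta})\models\psi_k$. As this holds for all $k$, $[\![C]\!](\widehat{\theta})\models\bigwedge_{k}\psi_k=\Psi$, and a second application of Theorem~\ref{th.precondition} gives $\theta\models\wpd(C,\Psi)$.

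With (i) and (ii) in hand the derivation assembles cleanly: the induction hypothesis applied to the loop body $C$ with postcondition $\Psi$ (legitimate since $C$ is a strict subcommand and the claim is proved for all postconditions) gives $\vdash\{\wpd(C,\Psi)\}C\{\Psi\}$; strengthening the precondition by (CONS) using (ii) yields $\vdash\{\Psi\wedge B\}C\{\Psi\}$; rule (WHILE) then produces $\vdash\{\Psi\}\texttt{while }B\texttt{ do }C\{\Psi\wedge\neg B\}$; and a final (CONS) using (i) weakens the postcondition to give $\vdash\{\Psi\}\texttt{while }B\texttt{ do }C\{\phi\}$, completing the induction. The crux throughout is that Theorem~\ref{th.precondition} lets the weakest precondition effectively commute with the infinitary conjunction, so the loop invariant behaves exactly as needed without ever invoking an infinitary proof rule.
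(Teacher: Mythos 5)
Your proposal is correct and follows essentially the same route as the paper's proof: structural induction on $C$ with the atomic cases as direct instances of the axioms, (SEQ) and (IF) closed by the induction hypothesis plus the corresponding rules and (CONS), and the while-loop handled by taking the infinitary conjunction $\Psi=\bigwedge_{k}\psi_k$ as the invariant and assembling (CONS), (WHILE), (CONS). If anything, you are slightly more careful than the paper at the one delicate point: where the paper's (WHILE) case silently identifies $\bigwedge_{k}\wpd(C,\psi_k)$ with the precondition that the induction hypothesis delivers for the postcondition $\bigwedge_{k}\psi_k$, you justify the needed entailment $\models\Psi\wedge B\rightarrow\wpd(C,\Psi)$ semantically via Theorem~\ref{th.precondition} and discharge it through (CONS).
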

\begin{proof} 
We use structural induction on $C$. We only show these non-trivial cases below. The cases of (SKIP),(AS), (PAS), (UNITARY) and  (MEASURE)  are trivial.

\begin{itemize}

\item (SEQ)  We have $\vdash \{ \wpd(C_1, \wpd(C_2, \phi))  \} C_1 \{   \wpd(C_2, \phi)  \} $ and $ \vdash  \{  \wpd(C_2, \phi) \} C_2 \{\phi \} $ by inductive hypothesis. Using SEQ, we have $\vdash \{ \wpd(C_1, \wpd(C_2, \phi))  \} C_1 ;C_2 \{    \phi   \} $.

\item (IF) By inductive hypothesis we have $ \{  \wpd(C_1, \phi) \}C_1 \{\phi \} $ and $ \{  \wpd(C_2, \phi) \}C_2\{\phi \} $. Then by (CONS) we know $ \{ B \wedge ( \wpd(C_1, \phi) \wedge B ) \vee (\wpd(C_2, \phi) \wedge  \neg B )) \}C_1 \{\phi \} $ and $ \{ \neg  B \wedge ( \wpd(C_1, \phi) \wedge B ) \vee (\wpd(C_2, \phi) \wedge  \neg B ))) \}C_2\{\phi \} $. Now, by (IF) we have $\vdash  \{( \wpd(C_1, \phi) \wedge B ) \vee (\wpd(C_2, \phi) \wedge  \neg B )  \}    if\ B\ then\ C_{1}\ else\ C_{2} \{\phi \} $.

\item (WHILE) It is easy to see that $\models ( B\wedge  \bigwedge\limits_{k \geq 0}  \psi_k  ) \rightarrow (B \wedge  \bigwedge\limits_{k \geq 1}  \wpd(C, \psi_i)  )   $. By inductive hypothesis, $\vdash  \{\bigwedge\limits_{k \geq 1}  \wpd(C, \psi_i)\} C \{  \bigwedge\limits_{k \geq 1}    \psi_i \}$. Now by (CONS) we have $\vdash  \{B \wedge  \bigwedge\limits_{k \geq 1}  \wpd(C, \psi_i)\} C \{  \bigwedge\limits_{k \geq 1}    \psi_i \}$. Note that $\models \bigwedge\limits_{k \geq 1}    \psi_i  \leftrightarrow \bigwedge\limits_{k \geq 0}    \psi_i $. Then by (CONS) we have $\vdash  \{B \wedge  \bigwedge\limits_{k \geq 1}  \wpd(C, \psi_i)\} C \{  \bigwedge\limits_{k \geq 0}    \psi_i \}$.\\
 That is, $\vdash  \{B \wedge     \wpd( while \mbox{ }B \mbox{ } do \mbox{ }C  , \phi)  \} C \{  \wpd( while \mbox{ }B \mbox{ } do \mbox{ }C  , \phi)   \}$.\\
 
 Then by (WHILE) we know   \\
 
 $\vdash  \{   \wpd( while \mbox{ }B \mbox{ } do \mbox{ }C  , \phi)  \} while \mbox{ }B \mbox{ } do \mbox{ }C \{  \wpd( while \mbox{ }B \mbox{ } do \mbox{ }C  , \phi)  \wedge \neg B  \}$.\\
 
 Then by the definition of $\wpd( while \mbox{ }B \mbox{ } do \mbox{ }C  , \phi)$ we know \\
  $\vdash  \{   \wpd( while \mbox{ }B \mbox{ } do \mbox{ }C  , \phi)  \} while \mbox{ }B \mbox{ } do \mbox{ }C \{  \phi  \}$. 
\end{itemize}
\end{proof}

\begin{lemma*}\label{proof:lemma-wpd}
For all $i\geq 0$ and all pure cq-state $\theta$, if $\theta \models^I \neg \wpd(C^0 , \neg B ) \wedge \ldots \wedge  \neg \wpd(C^{i-1} , \neg B) \wedge   \wpd(C^{i} , \neg B ) $, then 
\[
[\![  \texttt{while} \mbox{ } B \mbox{ } \texttt{do} \mbox{ } C  ]\!] (\widehat{\theta}) =  [\![  ( \texttt{if}\mbox{ } B \mbox{ } \texttt{then}\mbox{ }  C \mbox{ }  \texttt{else}\mbox{ }  \texttt{skip} )^{i} ]\!] (\widehat{\theta}).
\]
\end{lemma*}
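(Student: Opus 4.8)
The plan is to reduce the claimed identity to a single support condition on the ``active part'' of the loop after $i$ rounds, which is then discharged using Theorem~\ref{th.precondition}. Throughout, write $g = [\![ \texttt{if}\ B\ \texttt{then}\ C\ \texttt{else}\ \texttt{skip} ]\!]$, so that $g(\Theta) = [\![C]\!](\downarrow_B(\Theta)) + \downarrow_{\neg B}(\Theta)$, and abbreviate $a_k := ([\![C]\!]\circ\downarrow_B)^k(\widehat{\theta})$.

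First I would establish a closed form for the $i$-fold iterate of $g$, namely
\[
 g^i(\widehat{\theta}) \;=\; \sum_{k=0}^{i-1} \downarrow_{\neg B}(a_k) \;+\; a_i,
\]
by induction on $i$. The inductive step uses only that each summand $\downarrow_{\neg B}(a_k)$ is supported on states satisfying $\neg B$, so that $\downarrow_B$ annihilates it while $\downarrow_{\neg B}$ fixes it; hence applying $g$ once advances $a_i$ to $a_{i+1}$ and appends $\downarrow_{\neg B}(a_i)$. Comparing with the loop semantics $[\![ \texttt{while}\ B\ \texttt{do}\ C ]\!](\widehat{\theta}) = \sum_{k=0}^{\infty} \downarrow_{\neg B}(a_k)$, the two expressions share their first $i$ summands, so the lemma reduces to the identity $\sum_{k=i}^{\infty} \downarrow_{\neg B}(a_k) = a_i$.

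Next I would observe that this reduced identity holds as soon as $a_i \models^I \neg B$: if every state in the support of $a_i$ satisfies $\neg B$, then $\downarrow_{\neg B}(a_i) = a_i$ and $\downarrow_B(a_i) = 0$, so $a_k = 0$ for every $k > i$ and the infinite tail collapses to $a_i$. Thus the whole statement comes down to proving
\[
 a_i \;=\; ([\![C]\!]\circ\downarrow_B)^i(\widehat{\theta}) \;\models^I\; \neg B .
\]
To obtain this from the hypothesis I would use only its last conjunct $\theta \models^I \wpd(C^i,\neg B)$: by Theorem~\ref{th.precondition} it yields $[\![C^i]\!](\widehat{\theta}) \models^I \neg B$. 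A short induction on $i$ then shows the support inclusion $sp\big(([\![C]\!]\circ\downarrow_B)^i(\widehat{\theta})\big) \subseteq sp\big([\![C^i]\!](\widehat{\theta})\big)$, since dropping the $\downarrow_B$ restrictions only adds reachable states. Combining the two facts, every state in the support of $a_i$ satisfies $\neg B$, as required.

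The main obstacle is conceptual rather than computational: the hypothesis speaks about $C^i$, that is, $C$ applied $i$ times to \emph{every} branch, including branches that have already left the loop, whereas the loop only re-applies $C$ to branches still satisfying $B$. The support-inclusion step is exactly what bridges this gap --- the branches the loop keeps alive form a subset of those tracked by $C^i$, so a guarantee that all of $[\![C^i]\!](\widehat{\theta})$ lies in $\neg B$ is more than enough to force all active branches to have terminated by round $i$. (The remaining conjuncts $\neg\wpd(C^j,\neg B)$ for $j<i$ only pin down $i$ as the exact round of termination and are not needed for the equality.) Once this is in place, the telescoping of the two sums and the collapse of the tail are routine.
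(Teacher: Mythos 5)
Your proof is correct, and it rests on the same two key ingredients the paper uses in its own argument---Theorem~\ref{th.precondition} applied to the final conjunct, and the support inclusion $sp\big(([\![C]\!]\circ\downarrow_B)^i(\widehat{\theta})\big)\subseteq sp\big([\![C^i]\!](\widehat{\theta})\big)$---but it takes a genuinely more complete route. The paper verifies the identity by explicit unfolding only for $i=0,1,2$ and dismisses the rest with ``other cases are similar''; you instead prove the closed form $g^i(\widehat{\theta})=\sum_{k=0}^{i-1}\downarrow_{\neg B}(a_k)+a_i$ by induction, match it against the loop semantics $\sum_{k=0}^{\infty}\downarrow_{\neg B}(a_k)$, and collapse the tail once $a_i\models^I\neg B$, which settles every $i$ uniformly---exactly the induction the paper gestures at but never carries out. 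Your route also buys a sharpening the paper's case computations obscure: only the conjunct $\wpd(C^i,\neg B)$ is needed, the conjuncts $\neg\wpd(C^j,\neg B)$ for $j<i$ playing no role in the equality (incidentally, the paper's $i=2$ case derives $\theta\models^I\neg B$ where it must mean $\theta\models^I B$). The one spot worth making explicit in your write-up: both the ``annihilates/fixes'' step in your closed-form induction and the support-inclusion step (``dropping the $\downarrow_B$ restrictions only adds reachable states'') tacitly use that $[\![C]\!]$ is linear and positive on subdistributions, hence monotone with respect to the pointwise order and zero-preserving (so $a_k=\mathbf{0}$ for $k>i$ really does follow from $\downarrow_B(a_i)=\mathbf{0}$). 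This holds for every clause of Definition~\ref{def.semanticsofcommand}, including the while clause as an infinite sum of compositions of such maps, and the paper relies on it just as tacitly when it distributes $[\![\texttt{if}\ B\ \texttt{then}\ C\ \texttt{else}\ \texttt{skip}]\!]$ over sums in its $i=2$ computation; one sentence recording this fact would make your proof fully airtight.
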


\begin{proof} 
We first prove the case where $i=0$. Then we have $\theta \models^I \wpd(C^{0} , \neg B)$. That is, $\theta \models^I \neg B$. It is trivial to show that $[\![  \texttt{while} \mbox{ } B \mbox{ } \texttt{do} \mbox{ } C  ]\!] (\widehat{\theta}) =  [\![  ( \texttt{if}\mbox{ } B \mbox{ } \texttt{then}\mbox{ }  C \mbox{ }  \texttt{else}\mbox{ }  \texttt{skip} )^{0} ]\!] (\widehat{\theta})$. 

We then prove the case where $i=1$. Then we have $\theta \models^I \neg \wpd(C^{0}, \neg B)\wedge  \wpd(C,\neg B)$, which implies that $\theta\models^I\neg B$ and for all $\theta'\in sp(\llbracket C\rrbracket\widehat{\theta})$, $\theta'\models^I \neg B$. Then it is easy to see that $[\![  \texttt{while} \mbox{ } B \mbox{ } \texttt{do} \mbox{ } C  ]\!] (\widehat{\theta}) =  [\![  ( \texttt{if}\mbox{ } B \mbox{ } \texttt{then}\mbox{ }  C \mbox{ }  \texttt{else}\mbox{ }  \texttt{skip} )^{1} ]\!] (\widehat{\theta})$.

We now prove the case where $i=2$, other cases are similar.
Assume $i=2$. Then we have $\theta\models \neg \wpd(C^0,\neg B) \wedge \neg \wpd(C^1, \neg B) \wedge \wpd(C^2 ,\neg B)$. 	Then we know $\theta \not\models^I \wpd( \texttt{skip} , \neg B)$, $\theta \not\models^I \wpd(  C, \neg B)$  and $\theta\models^I \wpd(C;C ,\neg B)$. Therefore, $\theta\models^I \neg B$, $[\![C ]\!]  (\widehat{\theta}) \not \models^I  \neg B$ and $[\![C;C ]\!]  (\widehat{\theta}) \models^I  \neg B$. We then know $\downarrow_{  B} ( [\![C;C ]\!]  (\widehat{\theta}))  =  \textbf{0}$.

It's easy to see that $sp(     [\![   C   ]\!] \circ \downarrow_{ B}  ( \widehat{\theta}) ) \subseteq sp(    [\![   C   ]\!]    ( \widehat{\theta})  ) $. Then from $\theta \models^I \wpd(C^2 , \neg B)$ we deduce  $[\![   C^2   ]\!] \theta \models^I \neg B $, which implies $[\![   (C \circ  \downarrow_{ B})^2   ]\!] \theta \models^I \neg B $.

Then $\displaystyle\sum_{i = 0}^{\infty} \downarrow_{\neg B}  (  (  [\![   C   ]\!] \circ \downarrow_{ B})^i ( \widehat{\theta}) )  =   \downarrow_{\neg B}  (  (  [\![   C   ]\!] \circ \downarrow_{ B})^0 ( \widehat{\theta}))  + \downarrow_{\neg B}  (  (  [\![   C   ]\!] \circ \downarrow_{ B})^1 ( \widehat{\theta}) )  + \downarrow_{\neg B}  (  (  [\![   C   ]\!] \circ \downarrow_{ B})^2 ( \widehat{\theta} ) ) + \downarrow_{\neg B}  (  (  [\![   C   ]\!] \circ \downarrow_{ B})^3 ( \widehat{\theta}) ) + \ldots  =  \downarrow_{\neg B}  (  (  [\![   C   ]\!] \circ \downarrow_{ B})^0 ( \widehat{\theta}) )  + \downarrow_{\neg B}  (  (  [\![   C   ]\!] \circ \downarrow_{ B})^1 ( \widehat{\theta})  + \downarrow_{\neg B}  (  (  [\![   C   ]\!] \circ \downarrow_{ B})^2 ( \widehat{\theta}) )  =  \downarrow_{ \neg B} ( \widehat{\theta} ) +  \downarrow_{ \neg B}\circ  [\![   C   ]\!] \circ \downarrow_{ B} (  \widehat{\theta}  )  + ([\![   C   ]\!] \circ \downarrow_{ B} )^2  (\widehat{\theta})  $.

On the other hand,  $         [\![\texttt{if}  \mbox{ }  B   \mbox{ }  \texttt{then}   \mbox{ }  C  \mbox{ }  \texttt{else}  \mbox{ } \texttt{skip}  ]\!]^{2}   (\widehat{\theta})  =  $\\
$ [\![\texttt{if}  \mbox{ }  B   \mbox{ }  \texttt{then}   \mbox{ }  C  \mbox{ }  \texttt{else}  \mbox{ } \texttt{skip}  ]\!]   (   [\![   C   ]\!] \circ \downarrow_{ B} ( \widehat{\theta} )  + \downarrow_{ \neg B} ( \widehat{\theta}) )  =  $\\
$[\![   C   ]\!] \circ \downarrow_{ B} ([\![   C   ]\!] \circ \downarrow_{ B} ( \widehat{\theta})  + \downarrow_{ \neg B} ( \widehat{\theta}) )  + \downarrow_{ \neg B} ([\![   C   ]\!] \circ \downarrow_{ B} ( \widehat{\theta})  + \downarrow_{ \neg B} ( \widehat{\theta}) ) = $\\
$[\![   C   ]\!] \circ \downarrow_{ B}  [\![   C   ]\!] \circ \downarrow_{ B} ( \widehat{\theta} )  +  [\![   C   ]\!] \circ \downarrow_{ B} (\downarrow_{ \neg B} ( \widehat{\theta} ) )  + \downarrow_{ \neg B} [\![   C   ]\!] \circ \downarrow_{ B} ( \widehat{\theta} )  + \downarrow_{ \neg B} (\downarrow_{ \neg B} ( \widehat{\theta} ) )  = $\\
$ ([\![   C   ]\!] \circ \downarrow_{ B} )^2  ( (\widehat{\theta}) )  +  \textbf{0}  + \downarrow_{ \neg B} [\![   C   ]\!] \circ \downarrow_{ B} ( \widehat{\theta} )  +   \downarrow_{ \neg B} ( \widehat{\theta})  =$\\
$   \downarrow_{ \neg B} ( \widehat{\theta})  +  \downarrow_{ \neg B} [\![   C   ]\!] \circ \downarrow_{ B} ( \widehat{\theta})  + ([\![   C   ]\!] \circ \downarrow_{ B} )^2  ( \widehat{\theta})  $.

\end{proof}

\section{Proofs in Section \ref{sec:probabilistic}}
\begin{lemma*} \label{proof:lemma-pt}
Given an arbitrary probabilistic cq-state $\Theta $, an interpretation $I$, a command $C$ and a real expression $r$,   

   $$ [\![  pt  ( C,  r ) ]\!]^I_{\Theta } =  [\![ r ]\!]^I_{ [\![ C]\!] \Theta} $$
\end{lemma*}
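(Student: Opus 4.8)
The plan is to prove the identity by induction on the structure of the real expression $r$, reducing everything to the two atomic forms $\mathbb{P}(\phi)$ and $\phi \Rightarrow Q$. The constant case $r=a$ and the variable case $r=\mathfrak{x}$ are immediate, since their semantics does not depend on the state. For $r=r_1\ aop\ r_2$ and $r=\sum_{i=0}^{\infty} r_i$ the preterm distributes over the constructor by definition, so the claim follows from the induction hypothesis together with the fact that $[\![\cdot]\!]^I_{\Theta}$ commutes with $aop$ and with (convergent) infinite sums. Hence the whole burden lies in the two atomic cases, and for each of these I would run a secondary induction on the structure of the command $C$, reading off the operational behaviour from Definition~\ref{def.semanticsofcommand}.

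For $\mathbb{P}(\phi)$ (and symmetrically $\phi \Rightarrow Q$) the acyclic commands are handled directly. The \texttt{skip} case is trivial. For $X \leftarrow E$ I would use the substitution property $\theta \models^I \phi[X/E]$ iff $[\![X\leftarrow E]\!]\widehat{\theta} \models^I \phi$ already established inside the proof of Theorem~\ref{th.precondition}, which turns the sum $\sum_{\theta \models^I \phi[X/E]}\Theta(\theta)$ defining $[\![\mathbb{P}(\phi[X/E])]\!]_{\Theta}$ into the sum over the pushed-forward distribution $[\![X\leftarrow E]\!]\Theta$. The random assignment case is the same computation performed pointwise for each $k_i$, followed by a partition argument: grouping the pure states according to the exact index set $\{\,i : \theta \models^I \phi[X/k_i]\,\}$ recovers the inclusion-style sum in the definition of $\pt(X\xleftarrow{\$}R, \mathbb{P}(\phi))$. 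For $U[\overline{q}]$ the map $\widehat{\theta} \mapsto \widehat{(\theta_c, U_{\overline{q}}\theta_q)}$ is a probability-preserving bijection on pure cq-states, and $\theta \models^I [U_{\overline{q}}]\phi$ iff its image satisfies $\phi$, so the two probabilities coincide. The measurement $X \leftleftarrows q_j$ is the genuinely quantum case: here one expands the two-outcome semantics $p_0\widehat{\theta_0}+p_1\widehat{\theta_1}$, rewrites $\theta_b \models^I \phi$ as $\theta \models^I [\mathfrak{P}^b_j]\phi[X/b]$, and identifies the weights $p_b=\Tr((P^b_j\otimes I_{-j})\theta_q\theta_q^{\dagger})$ with the cq-conditional contributions, so that the terms match the definition of $\pt(X\leftleftarrows q_j, \mathbb{P}(\phi))$. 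Sequencing follows from the two induction hypotheses and $[\![C_1;C_2]\!]=[\![C_2]\!]\circ[\![C_1]\!]$, and the conditional $\texttt{if}\ B\ \texttt{then}\ C_1\ \texttt{else}\ C_2$ follows from the additivity $[\![\texttt{if}\dots]\!]\Theta=[\![C_1]\!](\downarrow_B\Theta)+[\![C_2]\!](\downarrow_{\neg B}\Theta)$ combined with the conditional-term lemma $[\![r/B]\!]_{\Theta}=[\![r]\!]_{\downarrow_B\Theta}$.

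The hard part is the \texttt{while} loop. Here I would stratify an arbitrary $\Theta$ according to the predicates $\wpd(i)$ and $\wpd(\infty)$: by Lemma~\ref{key lemma}, on any pure state satisfying $\wpd(i)$ the loop agrees with the finite unrolling $(\texttt{if}\ B\ \texttt{then}\ C\ \texttt{else}\ \texttt{skip})^i$, so its contribution to $[\![\mathbb{P}(\phi)]\!]_{[\![WHILE]\!]\Theta}$ is computed by the already-proved preterm for $(\texttt{if}\ B\ \texttt{then}\ C\ \texttt{else}\ \texttt{skip})^i$ restricted via $/\wpd(i)$ --- this is exactly the expression abbreviated $SUM$. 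The remaining, not-yet-terminated mass lives on states satisfying $\wpd(\infty)$, and tracing it through one more iteration via $\Theta_{i+1}=\downarrow_{\wpd(\infty)}([\![C]\!]\Theta_i)$ produces the recursive factors $f_{C,B}^j(\mathbb{P}(\wpd(\infty)))$; matching the operational sum $[\![WHILE]\!]\Theta=\sum_{i=0}^{\infty}\downarrow_{\neg B}(([\![C]\!]\circ\downarrow_B)^i\Theta)$ term-by-term against $\sum_{i=0}^{\infty} T_i$ then closes the case. Finally, the cq-conditional while case $\pt(WHILE, \phi\Rightarrow Q)$ is obtained from the $\mathbb{P}(\psi)$ case by the syntactic replacement indicated in the definition, once one checks that this replacement commutes with evaluation.

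I expect the principal obstacle to be precisely this infinite-sum bookkeeping: verifying that the recursively defined $T_i$, with its product of $\mathbb{P}(\wpd(\infty))$-factors and its nested applications of $f_{C,B}$, reproduces exactly the mass that the operational semantics routes through the $i$-th iteration, and that all the rearrangements of the resulting doubly-indexed series are legitimate. The measurement step is the only other place demanding genuine care, being the one point where quantum probabilities $p_b=\Tr((P^b_j\otimes I_{-j})\theta_q\theta_q^{\dagger})$ and the projection modalities $[\mathfrak{P}^b_j]$ enter the argument.
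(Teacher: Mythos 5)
Your proposal is correct and follows essentially the same route as the paper's own proof: primary induction on $r$ reducing to the atomic cases $\mathbb{P}(\phi)$ and $\phi \Rightarrow Q$, the substitution property from Theorem~\ref{th.precondition} for assignment, the partition argument for random assignment, the conditional-term lemma for the conditional, the case analysis on the measurement outcomes $p_b$, and for the loop the same stratification of $\Theta$ by $\wpd(i)$ and $\wpd(\infty)$ via Lemma~\ref{key lemma}, with the nonterminating mass traced through $\Theta_{i+1}=\downarrow_{\wpd(\infty)}([\![C]\!]\Theta_i)$ to produce the factors in $T_i$. You also correctly locate the two delicate points (the doubly-indexed series bookkeeping in the while case and the measurement weights), which is exactly where the paper's proof spends its effort.
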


\begin{proof} 

\begin{enumerate}

\item $ [\![ \pt(C,  a  ) ]\!]^I_{\Theta } = [\![  a ]\!]^I_{\Theta } =  a =  [\![  a ]\!]^I_{ [\![ C]\!] \Theta}$.

\item $ [\![ \pt(C,  \mathfrak{x}   ) ]\!]^I_{\Theta} = [\![  \mathfrak{x}   ]\!]^I_{\Theta } =  I( \mathfrak{x}  ) =  [\![  \mathfrak{x}   ]\!]^I_{ [\![ C]\!] \Theta }$.

\item $ [\![ \pt(C, r_1\mbox{ } aop \mbox{ } r_2 ) ]\!]^I_{\Theta} =    [\![   \pt(C, r_1) \mbox{ } aop \mbox{ } \pt(C, r_2)     ]\!]^I_{\Theta} =   $ \\  $  [\![   \pt(C, r_1) \mbox{ }   ]\!]^I_{\Theta}\ aop\ [\![    \mbox{ } \pt(C, r_2)     ]\!]^I_{\Theta}  =        [\![  r_1  ]\!]^I_{ [\![ C]\!] \Theta}\   aop\    [\![  r_2  ]\!]^I_{ [\![ C]\!] \Theta}  $.    

\item Similarly, $\llbracket \pt(C,\sum\limits_{i\in K}r_i)\rrbracket_\Theta^I=\llbracket\sum\limits_{i\in K}r_i\rrbracket^I_{\llbracket C\rrbracket\Theta}$.

\item    $   [\![  pt  ( \texttt{skip},  \mathbb{P} (\phi) ) ]\!]^I_{\Theta}  =   [\![ \mathbb{P} (  \phi  )]\!]^I_{\Theta } =    [\![ \mathbb{P} (  \phi  )]\!]^I_{ [\![ \texttt{skip}]\!] \Theta} $. 

\item   $ [\![  pt  ( X\leftarrow E ,  \mathbb{P} (\phi) ) ]\!]^I_{\Theta } = [\![ \mathbb{P} (   \phi[X/E] )]\!]^I_{\Theta }    =  [\![ \mathbb{P} (\wpd(    X\leftarrow E    , \phi) )]\!]^I_{\Theta } $. Since for each $\theta\in\mathbb{PS}$, $\theta \models \phi[X/E] $ iff $ \theta\models  \wpd(X\leftarrow E, \phi) $ iff $  [\![  X\leftarrow E ]\!] \widehat{\theta} \models  \phi  $, then it can inferred that $ [\![ \mathbb{P} (   \phi[X/E] )]\!]^I_{\Theta } =  [\![ \mathbb{P} (   \phi  )]\!]^I_{  [\![  X\leftarrow E ]\!] \Theta }$.

\item PAS: For the sake of simplicity, we assume $n=2$. No generality is lost with this assumption. We have $ [\![  pt  (  X\xleftarrow[]{\$}  \{a_1: k_1,  a_2:k_2\} ,  \mathbb{P} (\phi) ) ]\!]^I_{\Theta}  =   [\![  a_1  \mathbb{P} (\phi[X/k_1 ] \wedge \neg \phi[X/k_2 ] ) +  a_2 \mathbb{P} ( \neg \phi[X/k_1 ] \wedge   \phi[X/k_2 ] ) + (a_1 +a_2 )  \mathbb{P} (\phi[X/k_1 ] \wedge  \phi[X/k_2 ] ) ]\!]^I_{\Theta} $. Without loss generality, assume $sp(\Theta) =\{\theta_1, \theta_2, \theta_3, \theta_4\}$, $\theta_1 \models^I \phi[X/k_1 ] \wedge \neg \phi[X/k_2 ] $, $\theta_2 \models^I \neg \phi[X/k_1 ] \wedge   \phi[X/k_2 ] $, $\theta_3 \models^I \phi[X/k_1 ] \wedge   \phi[X/k_2 ] $, $\theta_4 \models^I \neg \phi[X/k_1 ] \wedge \neg \phi[X/k_2 ] $, $\Theta(S_i) = b_i$. Then we have $  \![  a_1  \mathbb{P} (\phi[X/k_1 ] \wedge \neg \phi[X/k_2 ] ) +  a_2 \mathbb{P} ( \neg \phi[X/k_1 ] \wedge   \phi[X/k_2 ] ) + (a_1 +a_2 )  \mathbb{P} (\phi[X/k_1 ] \wedge  \phi[X/k_2 ] ) ]\!]^I_{\Theta } =  a_1 b_1 + a_2 b_2 + (a_1 +a_2)b_3$.

We further have  $[\![X\xleftarrow[]{\$}  \{a_1: k_1,  a_2:k_2\}]\!] (\theta_1) \models^I \mathbb{P} (\phi) = a_1 $,  $[\![X\xleftarrow[]{\$}  \{a_1: k_1,  a_2:k_2\}]\!] (\theta_2) \models^I \mathbb{P} (\phi) = a_2 $,  $[\![X\xleftarrow[]{\$}  \{a_1: k_1,  a_2:k_2\}]\!] (\theta_3) \models^I \mathbb{P} (\phi) = a_1 +a_2 $ and $[\![X\xleftarrow[]{\$}  \{a_1: k_1,  a_2:k_2\}]\!] (\theta_4) \models^I \mathbb{P} (\phi) = 0 $. Then we know $[\![   X\xleftarrow[]{\$}  \{a_1: k_1,  a_2:k_2\}    ]\!] (\Theta) \models^I \mathbb{P} (\phi) = a_1 b_1 + a_2 b_2 + (a_1 +a_2)b_3  $. This means that $ [\![ \mathbb{P} (  \phi  )]\!]^I_{[\![    X\xleftarrow[]{\$}  \{a_1: k_1,  a_2:k_2\}         ]\!] \Theta} = a_1 b_1 + a_2 b_2 + (a_1 +a_2)b_3 =  [\![  pt  (  X\xleftarrow[]{\$}  \{a_1: k_1,  a_2:k_2\} ,  \mathbb{P} (\phi) ) ]\!]^I_{\Theta} $.

\item UNITARY: Trivial.
  
\item MEASURE:  $[\![ X \leftleftarrows q_j  ]\!] \widehat{\theta}= p_0 \widehat{\theta_0} + p_1 \widehat{\theta_1}$, where 

$p_0 = \sqrt{ \Tr(  P^0_j \theta_q \theta_q^{\dagger} ) }$, $p_1 = \sqrt{ \Tr(  P^1_j \theta_q \theta_q^{\dagger} ) }$,  $\theta_{0,c} =  \theta_c[X\mapsto 0] $,  $\theta_{1,c} = \theta_c[X\mapsto 1] $, $\theta_{0,q} =  \frac{P^0_j \theta_q}{\sqrt{p_0}}$, $\theta_{1,q} =  \frac{P^1_j \theta_q}{\sqrt{p_1}}$.\\

If $p_0=0$, then $p_1=1$ and $[\![ \mathbb{P} (\phi) ]\!]_{ [\![ X \leftleftarrows q_j  ]\!] \widehat{\theta} } = [\![ \mathbb{P} (\phi) ]\!]_{ \theta_1}$. Note that in this case $[\![ \mathbb{P} (\phi) ]\!]_{ \theta_1} =[\![  \mathbb{P}(    [\mathfrak{P}^1_j ]   \phi [X/1]            )   ]\!]_{\theta}   $. We then know   $  [\![ \mathbb{P} (\phi) ]\!]_{ [\![ X \leftleftarrows q_j  ]\!] \widehat{\theta}}  = [\![ (\top \Rightarrow  P^0_j )    \mathbb{P}(    [\mathfrak{P}^0_j ]\phi [X/0]       )  +   (\top \Rightarrow  P^1_j )      \mathbb{P}(    [\mathfrak{P}^1_j ]   \phi [X/1]            )   ]\!]_{\theta}  $.

If $p_1=0$, then $p_0=1$ and $[\![ \mathbb{P} (\phi) ]\!]_{ [\![ X \leftleftarrows q_j  ]\!] \widehat{\theta} } = [\![ \mathbb{P} (\phi) ]\!]_{ \theta_0}$. Note that in this case $[\![ \mathbb{P} (\phi) ]\!]_{ \theta_0} =[\![  \mathbb{P}(    [\mathfrak{P}^0_j ]   \phi [X/0]            )   ]\!]_{\theta}   $. We then know   $  [\![ \mathbb{P} (\phi) ]\!]_{ [\![ X \leftleftarrows q_j  ]\!] \widehat{\theta} }  = [\![ (\top \Rightarrow  P^0_j )    \mathbb{P}(    [\mathfrak{P}^0_j ]\phi [X/0]       )  +   (\top \Rightarrow  P^1_j )      \mathbb{P}(    [\mathfrak{P}^1_j ]   \phi [X/1]            )   ]\!]_{\theta}     $.

If $p_0, p_1 >0$, then $[\![  \mathbb{P}(    [\mathfrak{P}^0_j ]   \phi [X/0]            )   ]\!]_{\theta} =   [\![  \mathbb{P}(       \phi [X/0]            )   ]\!]_{    \frac{( P^0_j \otimes I_{-j} ) \theta  }{\sqrt{p_0}}  } =  [\![  \mathbb{P}(       \phi [X/0]            )   ]\!]_{    ( \theta_{c}, \theta_{0,q} )  } =   [\![  \mathbb{P}(       \phi              )   ]\!]_{ \theta_0} $.

Similarly, $[\![  \mathbb{P}(    [\mathfrak{P}^1_j ]   \phi [X/1]            )   ]\!]_{\theta} =  [\![  \mathbb{P}(       \phi              )   ]\!]_{ \theta_1}$.\\

We then know  $  [\![ \mathbb{P} (\phi) ]\!]_{ [\![ X \leftleftarrows q_j  ]\!] \widehat{\theta} }  = [\![ (\top \Rightarrow  P^0_j )    \mathbb{P}(    [\mathfrak{P}^0_j ]\phi [X/0]       )  +   (\top \Rightarrow  P^1_j )      \mathbb{P}(    [\mathfrak{P}^1_j ]   \phi [X/1]            )   ]\!]_{\theta}     $.

 \item SEQ: $   [\![  pt  ( C_1; C_2,  \mathbb{P} (\phi) ) ]\!]^I_{\Theta } =   [\![   \pt(C_1, \pt( C_2, \mathbb{P} (\phi)) )]\!]^I_{\Theta } = $\\  $   [\![    \pt(C_2, \mathbb{P} (\phi)    )  ]\!]^I_{  [\![  C_1 ]\!]  \Theta }  =   [\![ \mathbb{P} (\phi)]\!]^I_{ [\![  C_2 ]\!]  [\![  C_1 ]\!]  \Theta }  =    [\![ \mathbb{P} (\phi)  ]\!]^I_{    [\![  C_1 ;C_2 ]\!]  \Theta }    $.
 
 \item IF: $[\![  pt  (\texttt{if}\ B\ \texttt{then}\ C_{1}\ \texttt{else}\ C_{2}  ,  \mathbb{P} (\phi) ) ]\!]^I_{\Theta } =   $\\ 
 
  $ [\![ \pt(C_1,   \mathbb{P} (\phi    ) )/B   +   \pt(C_2,  \mathbb{P} (\phi  )  )/(\neg B)  ]\!]^I_{\Theta } =        $ \\
  
  $ [\![ \pt(C_1,  \mathbb{P} (\phi  )  )/B      ]\!]^I_{\Theta }  +      [\![   \pt(C_2,  \mathbb{P} (\phi  )  )/(\neg B)  ]\!]^I_{\Theta }   = $ \\
  
  $ [\![ \pt(C_1,  \mathbb{P} (\phi  )  )      ]\!]^I_{ \downarrow_{B} \Theta }  +      [\![   \pt(C_2,  \mathbb{P} (\phi  )  )   ]\!]^I_{ \downarrow_{\neg B}\Theta}   = $ \\
    
  $ [\![    \mathbb{P} (\phi  )       ]\!]^I_{[\![ C_1 ]\!] \downarrow_{B} \Theta}  +      [\![      \mathbb{P} (\phi  )    ]\!]^I_{[\![ C_2 ]\!] \downarrow_{\neg B} \Theta }   = $ \\
  
  $ [\![    \mathbb{P} (\phi  )       ]\!]^I_{   [\![ C_1 ]\!] \downarrow_{B} \Theta +      [\![ C_2 ]\!] \downarrow_{\neg B} \Theta  }    =   $   
  $ [\![    \mathbb{P} (\phi  )       ]\!]^I_{   [\![   \texttt{if}\ B\ \texttt{then}\ C_{1}\ \texttt{else}\ C_{2}     ]\!]      \Theta }       $

\item WHILE: Without loss of generality, let $sp(\Theta) = \{\theta_{0,\infty}, \theta_{0,0},\theta_{0,1},\ldots  \}$, in which  $\theta_{0,i} \models \wpd(i) $ and $\theta_{0,\infty} \models   \wpd(\infty )$. Equivalently, we may let $\Theta_{\theta_{0,i}}= \downarrow_{\wpd(i)} (\Theta) $ and $\Theta_{\theta_{0,\infty}}= \downarrow_{\wpd(\infty)} (\Theta)$.

Then  we know  
$\Theta(\theta_{0,i}) =  [\![  \mathbb{P}(   \wpd(i)  )  ]\!]_{\Theta}$, $\Theta(\theta_{0,\infty}) =  [\![  \mathbb{P}(    \wpd(\infty)  )  ]\!]_{\Theta}$.

  That is, $\Theta=  [\![  \mathbb{P}(   \wpd(\infty)  )  ]\!]_{\Theta}  \Theta_{\theta_{0,\infty}}  + \displaystyle\sum_{i = 0}^{\infty}  [\![  \mathbb{P}(  \wpd(i)  )  ]\!]_{\Theta } \Theta_{\theta_{0,i}}$.

Then $  [\![   \mathbb{P}(\phi  ) ]\!]_{ [\![ WL  ]\!] \Theta }   $

$$   =[\![   \mathbb{P}(\phi  ) ]\!]_{ [\![ WL  ]\!] (\Theta(\theta_{0,\infty}) \Theta_{\theta_{0,\infty}}  +\displaystyle\sum_{i = 0}^{\infty} \Theta(\theta_{0,i}) \Theta_{\theta_{0,i}})}  $$

$$ =[\![   \mathbb{P}(\phi  ) ]\!]_{ [\![ WL ]\!] (  \Theta(S_{0,\infty}) \Theta_{\theta_{0,\infty}})} +  [\![   \mathbb{P}(\phi  ) ]\!]_{ [\![WL  ]\!]  \displaystyle\sum_{i = 0}^{\infty} \Theta(\theta_{0,i}) \Theta_{\theta_{0,i}}} $$

$$=  \Theta(\theta_{0,\infty})  [\![   \mathbb{P}(\phi  ) ]\!]_{ [\![ WL ]\!] (   \Theta_{\theta_{0,\infty }}    ) } +  \displaystyle\sum_{i = 0}^{\infty} \Theta(\theta_{0,i}) [\![   \mathbb{P}(\phi  ) ]\!]_{  [\![ WL  ]\!] \Theta_{\theta_{0,i}}  }  $$

By Lemma~\ref{key lemma}, we know $[\![ WL  ]\!] \Theta_{\theta_{0,i}}  =   [\![ (IF)^i   ]\!] \Theta_{\theta_{0,i}}$.  \\

Therefore, 

$ [\![ \mathbb{P}(\phi  ) ]\!]_{  [\![WL  ]\!] \Theta_{\theta_{0,i}}} =   [\![ \mathbb{P}(\phi  ) ]\!]_{  [\![ (IF)^i   ]\!] \Theta_{\theta_{0,i}}}$

By induction hypothesis we know
$$  [\![ \mathbb{P}(\phi  ) ]\!]_{  [\![ (  IF)^i   ]\!] \Theta_{\theta_{0,i}}}  =  [\![\pt( (  IF)^i  ,\mathbb{P}(\phi  )  ) ]\!]_{\Theta_{\theta_{0,i}}}  $$

Then we have 
$$  [\![ \mathbb{P}(\phi  ) ]\!]_{ [\![WL  ]\!] \Theta_{\theta_{0,i}}    }  =  [\![\pt( (IF)^i  ,\mathbb{P}(\phi))]\!]_{    \Theta_{\theta_{0,i}}}$$

Moreover, $$  [\![\pt( (IF)^i  ,\mathbb{P}(\phi  )  ) ]\!]_{    \Theta_{\theta_{0,i}}    } =[\![\pt( (IF)^i  ,\mathbb{P}(\phi  )  ) ]\!]_{    \downarrow_{\wpd(i)}(\Theta)} $$

$$ =  [\![\pt( (IF)^i  ,\mathbb{P}(\phi  )  ) / (   \wpd(i)     ) ]\!]_{    \Theta     }  $$

At this stage we know $  [\![   \mathbb{P}(\phi  ) ]\!]_{ [\![WL ]\!] \Theta }   = $

$$  \Theta(\theta_{0,\infty})  [\![   \mathbb{P}(\phi  ) ]\!]_{ [\![WL  ]\!] (   \Theta_{\theta_{0,\infty }}    ) } +  \displaystyle\sum_{i = 0}^{\infty} \Theta(\theta_{0,i}) [\![   \mathbb{P}(\phi  ) ]\!]_{  [\![ WL ]\!] \Theta_{\theta_{0,i}}}  $$

in which $ \displaystyle\sum_{i = 0}^{\infty} \Theta(\theta_{0,i}) [\![   \mathbb{P}(\phi) ]\!]_{  [\![ WL  ]\!] \Theta_{\theta_{0,i}}} = $

$ [\![  \displaystyle\sum_{i = 0}^{\infty} (  \mathbb{P}(    \wpd(i)    )      ( \pt( (  IF)^i  ,\mathbb{P}(\phi  )  ) / (    \wpd(i)   )) )  ]\!]_{\Theta } $

Note that $SUM$ is short for $$\displaystyle\sum_{i = 0}^{\infty} (  \mathbb{P}(    \wpd(i)    )      ( \pt( (  IF)^i  ,\mathbb{P}(\phi  )  ) / (    \wpd(i)   )) ).$$ 

Then $  [\![   \mathbb{P}(\phi  ) ]\!]_{ [\![WL ]\!] \Theta }   = $
$$  \Theta(\theta_{0,\infty})  [\![   \mathbb{P}(\phi  ) ]\!]_{ [\![WL  ]\!] (   \Theta_{\theta_{0,\infty}})} + [\![ SUM ]\!]_{\Theta} $$
$$=  [\![    \mathbb{P}(\wpd(\infty)  )    ]\!]_{\Theta} [\![   \mathbb{P}(\phi  ) ]\!]_{ [\![WL  ]\!] (   \Theta_{\theta_{0,\infty }}    ) } + [\![ SUM ]\!]_{\Theta}  .$$

\ \\

It remains to study $[\![ \mathbb{P}(\phi  ) ]\!]_{  [\![ WL ]\!] \Theta_{\theta_{0,\infty}}  }$.

Note that $ [\![ WL ]\!]  = [\![ IF;WL ]\!]$.
 
 We then know $[\![ \mathbb{P}(\phi  ) ]\!]_{  [\![WL ]\!] \Theta_{\theta_{0,\infty}}  } $\\
 $= [\![ \mathbb{P}(\phi  ) ]\!]_{  [\![ IF; WL  ]\!] \Theta_{\theta_{0,\infty}}  }$\\
  $= [\![ \mathbb{P}(\phi  ) ]\!]_{  [\![  WL  ]\!] [\![ IF ]\!]    \Theta_{\theta_{0,\infty}}  }  $\\
 $= [\![ \mathbb{P}(\phi  ) ]\!]_{  [\![  WL  ]\!] [\![ \texttt{if}\ B\ \texttt{then}\ C \ \texttt{else} \mbox{ }  \texttt{skip}   ]\!]    \Theta_{\theta_{0,\infty}}  }  $\\
  $= [\![ \mathbb{P}(\phi) ]\!]_{  [\![  WL  ]\!] [\![   C     ]\!]    \Theta_{\theta_{0,\infty}}}  $.

 Here we also note that $[\![   C     ]\!]    \Theta_{\theta_{0,\infty}}= [\![   C     ]\!]\downarrow_{\wpd(\infty)}   ( \Theta )$.

Without loss of generality,  let $sp([\![   C     ]\!]  \Theta_{\theta_{0,\infty}} ) = \{\theta_{1,\infty}, \theta_{1,0},\theta_{1,1},\ldots,  \} $, in which  $\theta_{1,i} \models \wpd(i)  $ and $\theta_{1,\infty} \models  \wpd(\infty)$.

Then by repeating the reasoning on the cases of  $\theta_{0,i}$, we know \\

$  [\![   \mathbb{P}(\phi)]\!]_{ [\![ WL  ]\!]   [\![   C     ]\!]    \Theta_{\theta_{0,\infty}}} = $

$$  [\![    \mathbb{P}(\wpd(\infty)  )    ]\!]_{[\![   C     ]\!]    \Theta_{\theta_{0,\infty}} } [\![   \mathbb{P}(\phi  ) ]\!]_{ [\![WL  ]\!] (   \Theta_{\theta_{1,\infty }}    ) } + [\![ SUM ]\!]_{[\![   C     ]\!]    \Theta_{\theta_{0,\infty}} }  =$$

$$  [\![    \mathbb{P}(\wpd(\infty)  )    ]\!]_{[\![   C     ]\!]  \downarrow_{\wpd(\infty)}  (\Theta)  } [\![  \mathbb{P}(\phi  ) ]\!]_{ [\![WL  ]\!] (   \Theta_{\theta_{1,\infty }}    ) } + [\![ SUM ]\!]_{[\![   C     ]\!]    \downarrow_{\wpd(\infty)} (\Theta)}  $$

By induction hypothesis we know 
$$  [\![ \mathbb{P}(\wpd(\infty)  )    ]\!]_{[\![   C     ]\!]  \downarrow_{\wpd(\infty)}  (\Theta)  }  = [\![\pt(C, \mathbb{P}(\wpd(\infty)  ))    ]\!]_{ \downarrow_{\wpd(\infty)}  (\Theta)}$$
$$=  [\![\pt(C, \mathbb{P}(\wpd(\infty)  ))  /\wpd(\infty)  ]\!]_{ \Theta } $$

and $[\![ SUM ]\!]_{[\![   C     ]\!]    \downarrow_{\wpd(\infty)} (\Theta) } =  [\![\pt(C,  SUM)  /\wpd(\infty)  ]\!]_{ \Theta } $.

\ \\

It then remains to study $[\![ \mathbb{P}(\phi  ) ]\!]_{  [\![ WL ]\!] \Theta_{\theta_{1,\infty}}}$.

By repeating the reasoning on the cases of  $\theta_{0,i}$, we know \\

$[\![   \mathbb{P}(\phi  ) ]\!]_{ [\![ WL  ]\!]   [\![   C     ]\!]    \Theta_{\theta_{1,\infty}}   }   = $
  
  $$  [\![    \mathbb{P}(\wpd(\infty)  )    ]\!]_{[\![   C     ]\!]    \Theta_{\theta_{1,\infty}} } [\![   \mathbb{P}(\phi  ) ]\!]_{ [\![WL  ]\!] (   \Theta_{\theta_{2,\infty }}    ) } + [\![ SUM ]\!]_{[\![   C     ]\!]    \Theta_{\theta_{1,\infty}} }  $$
  
  in which  $ \Theta_{\theta_{2,\infty }} = \downarrow_{\wpd(\infty)} ([\![   C     ]\!] ( \Theta_{\theta_{1,\infty }} ) ) $.

\ \\

Let $ \Theta_{\theta_{i+1,\infty }} = \downarrow_{\wpd(\infty)} ([\![   C     ]\!] ( \Theta_{\theta_{i,\infty }} ) ) $.

Repeat the above procedure to infinity, we get $ [\![  \pt(WL, \mathbb{P}(\phi  ) )  ]\!]_{\Theta }=[\![   \mathbb{P}(\phi   )  ]\!]_{ [\![   WL  ]\!] (\Theta) } $.

\item (SKIP-cq) Trivial.

\item (AS-cq) $[\![ \phi [X /E] \Rightarrow Q ]\!]^I_{\Theta} = \sum\limits_{\theta\models \phi[X/E]} \Theta(\theta) [\![Q ]\!]^I_{\theta} $.\\

By what we proved in Theorem~\ref{th.precondition}, it holds that $ \theta\models^I \phi[X/E]$ iff $ [\![X\leftarrow E]\!] \theta \models^I \phi$ for each $\theta\in\mathbb{PS}$.\\

Let $\Theta'= [\![X\leftarrow E ]\!] \Theta$. Then $\Theta'(\theta') =\sum \{\Theta(\theta): [\![X\leftarrow E ]\!] \widehat{\theta} =\widehat{\theta'}\}$.\\

Note that $[\![Q]\!]_{\theta} = [\![Q]\!]_{\theta'} $ since $\theta_q= \theta'_q$.

Therefore, $  [\![ \phi \Rightarrow Q ]\!]^I_{[\![X\leftarrow E ]\!] \Theta  } =  \sum\limits_{\theta'\models \phi}  \Theta'(\theta')[\![Q]\!]^I_{\theta'}      =     \sum\limits_{[\![X\leftarrow E ]\!] \theta\models^I \phi} \Theta(\theta) [\![ Q ]\!]^I_{\theta}$ \\ $=  \sum\limits_{ \theta\models^I \phi[X/E]} \Theta(\theta) [\![ Q ]\!]^I_{\theta} = \llbracket \phi[X/E]\Rightarrow Q\rrbracket_\Theta^I $.

\item (PAS-cq) Without loss of generality, we only consider the case where $n=2$. \\

$[\![ \pt(X \xleftarrow[]{\$}  \{a_1: k_1,  ,a_2:k_2\}   ,   \phi \Rightarrow Q ) ]\!]_{\Theta}  =   [\![ a_1 (pt   (  X\leftarrow k_1,  \phi \Rightarrow Q  )  )   + a_2 (pt   (  X\leftarrow k_2,  \phi \Rightarrow Q  )  )   ]\!]_{\Theta}  $\\
 $= a_1  [\![\pt( X\leftarrow k_1,  \phi \Rightarrow Q    ) ]\!]_{\Theta}  +  a_2  [\![\pt( X\leftarrow k_2,  \phi \Rightarrow Q    ) ]\!]_{\Theta} $\\
 
 By IH, it equals $ a_1  [\![   \phi \Rightarrow Q     ]\!]_{ [\![  X\leftarrow k_1]\!]\Theta} +  a_2  [\![   \phi \Rightarrow Q     ]\!]_{ [\![  X\leftarrow k_2]\!]\Theta} $\\
 
 $=  [\![   \phi \Rightarrow Q     ]\!]_{ a_1[\![  X\leftarrow k_1]\!]\Theta} +     [\![   \phi \Rightarrow Q     ]\!]_{ a_2 [\![  X\leftarrow k_2]\!]\Theta} $\\
 
 $=  [\![   \phi \Rightarrow Q     ]\!]_{  [\![  X \xleftarrow[]{\$}  \{a_1: k_1,  a_2:k_2\} ]\!]\Theta   }$

\item (UNITARY-cq) Let $\Theta'= \llbracket U[\overline{q}]\rrbracket \Theta$. Then $\Theta'(\theta')= \sum\{ \Theta( \theta ) : \llbracket U[\overline{q}]\rrbracket \widehat{\theta} = \widehat{\theta'} \}$.\\

$[\![    \phi\Rightarrow Q       ]\!]_{  \Theta'} = \sum\limits_{\theta' \models \phi}    \Theta'(\theta')  [\![Q]\!]_{\theta'} $.\\

Note that $\theta \models [U_{\overline{q}}]\phi$ iff $\theta'\models \phi$. Then we know 
$\sum\limits_{\theta' \models \phi}    \Theta'(\theta')  [\![Q]\!]_{\theta'}   =  \sum\limits_{\theta \models [U_{\overline{q}}]\phi}    \Theta(\theta)  [\![Q]\!]_{  \llbracket U[\overline{q}]\rrbracket \theta} $.

Now we have  $[\![Q]\!]_{ \llbracket U[\overline{q}]\rrbracket \theta} = \Tr( Q  (U_{\overline{q}} \theta_q) ( U_{\overline{q}} \theta_q)^{\dagger}  )=  \Tr( Q  (U_{\overline{q}} \theta) (\theta_q^{\dagger}   U^{\dagger}_{\overline{q}}  ))  =  \Tr( U^{\dagger}_{\overline{q}} Q  (U_{\overline{q}} \theta) \theta^{\dagger}) = [\![ U^{\dagger}_{\overline{q}} Q  U_{\overline{q}} ]\!]_{\theta }  $.\\

We then know $[\![\phi\Rightarrow Q ]\!]_{ \llbracket U_{\overline{q}}\rrbracket \Theta } = \sum\limits_{\theta'\models\phi}\Theta'(\theta')\llbracket Q\rrbracket_{\theta'}=\sum\limits_{\theta\models [U_{\overline{q}}]\phi}\Theta(\theta)\llbracket Q\rrbracket_{\llbracket U[\overline{q}]\rrbracket\widehat{\theta}}=$

$\llbracket [U_{\overline{q}}]\phi\Rightarrow U_{\overline{q}}^\dagger Q U_{\overline{q}}\rrbracket_\Theta$.

%$\sum\limits_{\theta\models [U_{\overline{q}}]\phi}\Theta(\theta)\llbracket U_{\overline{q}}^\dagger Q U_{\overline{q}} \rrbracket_{\theta}$

%$[\![\phi\Rightarrow   U^{\dagger}[\overline{q}] Q  U[\overline{q}]]\!]_{   \Theta }$.

\item (MEASURE-cq) 
We first want to show that, given an arbitrary $\theta\in\mathbb{PS}$, $\llbracket (\phi[X/0]\Rightarrow P_j^0 Q P_j^0)+(\phi[X/1]\Rightarrow P_j^1 Q P_j^1)\rrbracket_{\widehat{\theta}}^I=\llbracket \phi\Rightarrow Q\rrbracket _{\llbracket X\leftleftarrows q_j\rrbracket\widehat{\theta}}$.\\
%$[\![ \phi \Rightarrow Q   ]\!]_{ [\![ X \leftleftarrows q_j  ]\!] \theta   } $.\\

We know that $[\![ X \leftleftarrows q_j  ]\!] \widehat{\theta} = p_0 \widehat{\theta_0} + p_1 \widehat{\theta_1}$, where 

$p_0 = \sqrt{ \Tr(  P^0_j \theta_q \theta_q^{\dagger} ) }$, $p_1 = \sqrt{ \Tr(  P^1_j \theta_q \theta_q^{\dagger} ) }$,  $\theta_{0,c} =  \theta_c[X\mapsto 0] $,  $\theta_{1,c} = \theta_c\theta_c[X\mapsto 1] $, $\theta_{0,q} =  \frac{P^0_j \theta_q}{\sqrt{p_0}}$, $\theta_{1,q} =  \frac{P^1_j \theta_q}{\sqrt{p_1}}$.\\

By what we have proved beforehand, $[\![ \phi[X/0] \Rightarrow  P^0_j Q P^0_j ]\!]_{ \widehat{\theta}}= [\![ \phi  \Rightarrow  P^0_j Q P^0_j ]\!]_{ [\![ X\leftarrow 0 ]\!]  \widehat{  \theta }  }  = [\![ \phi  \Rightarrow  P^0_j Q P^0_j ]\!]_{    \widehat{(\theta_{0,c}, \theta_q)}          }    $.

 Note that $\Tr( P^0_j Q P^0_j  \theta_q \theta_q^{\dagger} )=  \Tr(  Q P^0_j  \theta_q \theta_q^{\dagger} P^0_j )   = p_0 [\![Q]\!]_{ \frac{ P^0_j \theta_q}{\sqrt{p_0}}  }  = p_0 [\![Q]\!]_{ \theta_{0,q} }$.\\
 
 If $\theta \models \phi[X/0]$, then $\theta_{0,c} \models \phi $ and $[\![ \phi[X/0] \Rightarrow  P^0_j Q P^0_j ]\!]_{ \widehat{\theta}}= \Tr( P^0_j Q P^0_j  \theta_q \theta_q^{\dagger} ) = p_0 [\![Q]\!]_{ \theta_{0,q} } =   [\![  \phi \Rightarrow   Q]\!]_{p_0 \widehat{\theta_{0}}} $.

 If $\theta \not\models \phi[X/0]$, then $\theta_{0,c} \not\models \phi $ and $[\![ \phi[X/0] \Rightarrow  P^0_j Q P^0_j ]\!]_{ \widehat{\theta}}= 0 =   [\![  \phi \Rightarrow   Q]\!]_{ p_0 \widehat{\theta_{0} }}$.\\

 Similarly, we can prove that $[\![ \phi[X/1] \Rightarrow  P^1_j Q P^1_j ]\!]_{ \widehat{\theta}}  =   [\![  \phi \Rightarrow   Q]\!]_{ p_1 \widehat{\theta_{1}} }$.\\

Therefore, we have $[\![  \pt(X\leftleftarrows q_j, \phi \Rightarrow Q ) ]\!]_{\widehat{\theta}} = [\![ \phi \Rightarrow Q   ]\!]_{ [\![ X \leftleftarrows q_j  ]\!] \widehat{\theta}   } $, which implies that $[\![  \pt(X\leftleftarrows q_j, \phi \Rightarrow Q ) ]\!]_{\Theta} = [\![ \phi \Rightarrow Q   ]\!]_{ [\![ X \leftleftarrows q_j  ]\!] \Theta   } $.

\item (SEQ-cq) Trivial.
  
\item (IF-cq) $ [\![   \phi \Rightarrow Q  ]\!]_{  [\![  \texttt{if}\ B\ \texttt{then}\ C_{1}\ \texttt{else}\ C_{2}     ]\!] \Theta  }   =    [\![   \phi \Rightarrow Q  ]\!]_{  [\![  C_1     ]\!]\downarrow_{B} \Theta +   [\![  C_2     ]\!]\downarrow_{\neg B} \Theta  }     $\\
  
  $= [\![ \pt(C_1, \phi \Rightarrow Q)   ]\!]_{\downarrow_{B} \Theta  }  +    [\![ \pt(C_2, \phi \Rightarrow Q)   ]\!]_{\downarrow_{\neg B} \Theta  } $\\
  
  $= [\![  \pt(C_1, \phi \Rightarrow Q)  /B ]\!]_{\Theta}  +  [\![  \pt(C_2, \phi \Rightarrow Q)  /\neg B ]\!]_{\Theta} $\\
  
  $= [\![   (\pt(C_1, \phi \Rightarrow Q)  /B)  +(  \pt(C_2, \phi \Rightarrow Q)  /\neg B    ) ]\!]_{\Theta} $.

\item (WHILE-cq) The proof is similar to the proof of $pt(WHILE,  \mathbb{P} (\phi))$.

\end{enumerate}

\end{proof}

\begin{theorem*}\label{proof:wp-PA}
 
   $\mu \models^I \WP(C, \Phi)$ iff $ [\![C]\!]\mu \models^I \Phi$.

\end{theorem*}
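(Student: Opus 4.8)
The plan is to prove both directions of the biconditional simultaneously by a straightforward structural induction on the probabilistic formula $\Phi$, pushing all the real work down to the weakest-preterm correctness lemma (Lemma~\ref{preterm lemma}) at the atomic level. No induction on the command $C$ is needed here, since the command-structure reasoning has already been absorbed into $\pt$ and into that lemma.

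First I would settle the base case $\Phi = (r_1 \, rop \, r_2)$. Unfolding Definition~\ref{def.weakestpreconditionprobabilistic} gives $\WP(C, r_1 \, rop \, r_2) = \pt(C, r_1) \, rop \, \pt(C, r_2)$, so by the semantics of probabilistic formulas (Definition~\ref{def.semanticsprobabilisticformulas}) we have $\mu \models^I \WP(C, r_1 \, rop \, r_2)$ iff $\sem{\pt(C, r_1)}^I_\mu \, rop \, \sem{\pt(C, r_2)}^I_\mu = \top$. Lemma~\ref{preterm lemma} supplies $\sem{\pt(C, r_i)}^I_\mu = \sem{r_i}^I_{\sem{C}\mu}$ for $i = 1, 2$, so this condition is equivalent to $\sem{r_1}^I_{\sem{C}\mu} \, rop \, \sem{r_2}^I_{\sem{C}\mu} = \top$, which by Definition~\ref{def.semanticsprobabilisticformulas} is exactly $\sem{C}\mu \models^I r_1 \, rop \, r_2$. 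This chain of equivalences is precisely the claim at the atomic level.

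The two inductive steps are then purely propositional and use only the definitions together with the induction hypothesis. For negation, $\WP(C, \neg\Phi) = \neg\WP(C, \Phi)$ yields $\mu \models^I \WP(C, \neg\Phi)$ iff $\mu \not\models^I \WP(C, \Phi)$ iff (by the induction hypothesis) $\sem{C}\mu \not\models^I \Phi$ iff $\sem{C}\mu \models^I \neg\Phi$. For conjunction, $\WP(C, \Phi_1 \wedge \Phi_2) = \WP(C, \Phi_1) \wedge \WP(C, \Phi_2)$, and I would split the satisfaction of the conjunction on both sides and apply the induction hypothesis to each conjunct.

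Since the inductive steps are mechanical, the entire burden rests on the base case, and hence on Lemma~\ref{preterm lemma}. The one point that needs care — and the main obstacle — is that the atomic expressions $r_1, r_2$ occurring in $\Phi$ may contain cq-conditionals $\phi \Rightarrow Q$ as well as the infinite summations generated by the while-loop clause of $\pt$. The lemma as stated in the main text is restricted to real expressions \emph{excluding} cq-conditionals, so I would instead invoke the full version established in Appendix~\ref{proof:lemma-pt}, whose induction explicitly treats the cq-conditional and while cases. With that stronger version in hand, the induction on $\Phi$ above goes through verbatim for arbitrary probabilistic formulas.
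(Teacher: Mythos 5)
Your proposal is correct and follows essentially the same route as the paper's own proof: a structural induction on $\Phi$ in which the atomic case $r_1 \ rop\ r_2$ is discharged by the weakest-preterm lemma and the $\neg$ and $\wedge$ cases are mechanical. Your observation that one must invoke the appendix's unrestricted version of the preterm lemma (covering cq-conditionals and the infinite sums produced by the while clause), rather than the main-text statement that excludes them, is a legitimate refinement of a point the paper's proof silently glosses over.
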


\begin{proof}
We prove by structural induction:

\begin{enumerate}
\item   $\mu \models^I \WP(C, r_1  \mbox{ }  rop \mbox{ }  r_2)$ iff  $\mu \models^I \pt(C, r_1) \mbox{ }  rop \mbox{ }  \pt(C, r_2)$ iff \\
 $[\![ \pt(C, r_1)  ]\!]^I_{  \mu }  \mbox{ }  rop \mbox{ }   [\![  \pt(C, r_2) ]\!]^I_{   \mu }  = \top$ iff\\
 $[\![  r_1  ]\!]^I_{ [\![ C]\!] \mu }   \mbox{ }  rop \mbox{ }    [\![  r_2  ]\!]_{ [\![ C]\!]^I \mu }  = \top$ iff $ [\![ C]\!] \mu \models^I r_1 \mbox{ }  rop \mbox{ } r_2$.

\item $\mu \models^I  \WP(C, \neg \Phi)$ iff $ \mu \models^I \neg \WP(C, \Phi) $ iff $ \mu   \not\models^I   \WP(C, \Phi) $ iff $[\![C]\!]\mu \not\models^I \Phi$ iff $[\![C]\!]\mu  \models^I \neg \Phi$.

\item $\mu \models^I \WP(\Phi_1 \wedge \Phi_2)$ iff  $\mu \models^I \WP(C,\Phi_1) \wedge \WP(C,\Phi_2)$ iff $\mu \models^I \WP(C,\Phi_1)  $ and $\mu \models^I   \WP(C,\Phi_2)$ iff $[\![C]\!]\mu  \models^I   \Phi_1$ and $[\![C]\!]\mu  \models^I   \Phi_2$ iff $[\![C]\!]\mu  \models^I   \Phi_1 \wedge \Phi_2$.

\end{enumerate}

\end{proof}

\begin{lemma*}\label{proof:WP-syntactic}
   For an arbitrary command $C$ and an arbitrary probabilistic formula $\Phi$, it holds that $\vdash \{\WP(C,\Phi)\}C\{\Phi\}$.
\end{lemma*}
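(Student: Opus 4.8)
The plan is to induct on the structure of the command $C$, exploiting the fact that the proof system has been engineered so that almost every command constructor carries a dedicated axiom already in the target form $\vdash\{\WP(C,\Phi)\}C\{\Phi\}$.

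First I would dispatch the cases $C \in \{X\leftarrow E,\ X\xleftarrow{\$}\{\ldots\},\ U[\overline{q}],\ X\leftleftarrows q_j,\ \texttt{if}\ B\ \texttt{then}\ C_1\ \texttt{else}\ C_2,\ \texttt{while}\ B\ \texttt{do}\ C'\}$. For each of these the corresponding rule (AS, PAS, UNITARY, MEASURE, IF, WHILE) is stated precisely as $\vdash\{\WP(C,\Phi)\}C\{\Phi\}$, so the claim is immediate and requires no appeal to the induction hypothesis. In particular, the $\texttt{if}$ and $\texttt{while}$ cases need no recursion into their subcommands, since their rules are axiomatic in the $\WP$-form.

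For $C = \texttt{skip}$ I would first observe that $\pt(\texttt{skip}, r) = r$ for every real expression $r$, by a routine induction on the structure of $r$ (each defining clause for $\pt(\texttt{skip}, \cdot)$ is the identity), whence $\WP(\texttt{skip},\Phi) = \Phi$ by induction on $\Phi$ through the three clauses of Definition~\ref{def.weakestpreconditionprobabilistic}. The SKIP axiom $\vdash\{\Phi\}\texttt{skip}\{\Phi\}$ then yields the claim directly, or via CONS using $\models\WP(\texttt{skip},\Phi)\leftrightarrow\Phi$.

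The only case with genuine content is $C = C_1;C_2$. Here I would establish the composition identity $\WP(C_1;C_2,\Phi) = \WP(C_1,\WP(C_2,\Phi))$, which reduces to the definitional clause $\pt(C_1;C_2,r) = \pt(C_1,\pt(C_2,r))$ together with a short induction on $\Phi$: the $r_1\ rop\ r_2$ case follows by applying the clause to both components, and the $\neg$ and $\wedge$ cases by pushing $\WP$ through the connectives. With this identity in hand, the induction hypothesis supplies both $\vdash\{\WP(C_2,\Phi)\}C_2\{\Phi\}$ and $\vdash\{\WP(C_1,\WP(C_2,\Phi))\}C_1\{\WP(C_2,\Phi)\}$; applying the SEQ rule with intermediate assertion $\WP(C_2,\Phi)$ gives $\vdash\{\WP(C_1,\WP(C_2,\Phi))\}C_1;C_2\{\Phi\}$, and rewriting the precondition via the composition identity (through CONS if one prefers to treat $\WP$ semantically) completes the case. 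The main obstacle is therefore not any deep argument but simply verifying this composition identity for $\WP$, which is itself a mechanical consequence of how $\pt$ distributes over sequential composition.
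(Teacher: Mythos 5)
Your proposal is correct and follows essentially the same route as the paper's proof: base cases discharged directly by the axioms of the probabilistic proof system, with sequential composition as the only substantive case, handled via the identity $\WP(C_1;C_2,\Phi)=\WP(C_1,\WP(C_2,\Phi))$, the induction hypothesis, and the SEQ rule. You in fact supply slightly more detail than the paper, which asserts the composition identity directly from Definition~\ref{def.weakestpreconditionprobabilistic} and folds \texttt{skip} into the axiomatic cases without spelling out that $\pt(\texttt{skip},r)=r$ forces $\WP(\texttt{skip},\Phi)=\Phi$.
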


\begin{proof}

We prove it by induction on the structure of command $C$.
When $C$ is \texttt{skip}, assignment, random assignment, IF or WHILE command, we can directly derive them from the proof system PHL. The only case needs to show is the sequential command, which means that we need to prove $\vdash \{\WP(C_1;C_2,\Phi)\}C_1;C_2\{\Phi\}$.

By Definition~\ref{def.weakestpreconditionprobabilistic}, we have $\WP(C_1;C_2,\Phi)=\WP(C_1,\WP(C_2,\Phi))$. By induction hypothesis,  $\vdash\{\WP(C_2,\Phi)\}C_2\{\Phi\}$ and $\vdash\{\WP(C_1,\WP(C_2,\Phi))\}C_1\{\WP(C_2;\Phi)\}$. By inference rule $SEQ$, we can conclude that $\vdash\{\WP(C_1,\WP(C_2,\Phi))\}C_1;C_2\{\Phi\}$ which implies that $\vdash \{\WP(C_1;C_2,\Phi)\}C_1;C_2\{\Phi\}$.

\end{proof}

\section{Application}
\label{sec:application}

In this section we use our QHL to verify some quantum algorithms and protocols.

\subsection{Deutsch's algorithm} \label{sec:deutsch}

Given a boolean function $f$ with one input bit and one output bit, we want to decide whether it is a constant function. Classically, we need to evaluate the function twice (i.e., for input = 0 and input = 1), but quantumly, using Deutsch’s algorithm, we only need to evaluate the function once. Recall that the following circuit describes the operations of Deutsch’s algorithm.

\begin{center}
    \includegraphics[scale=0.15]{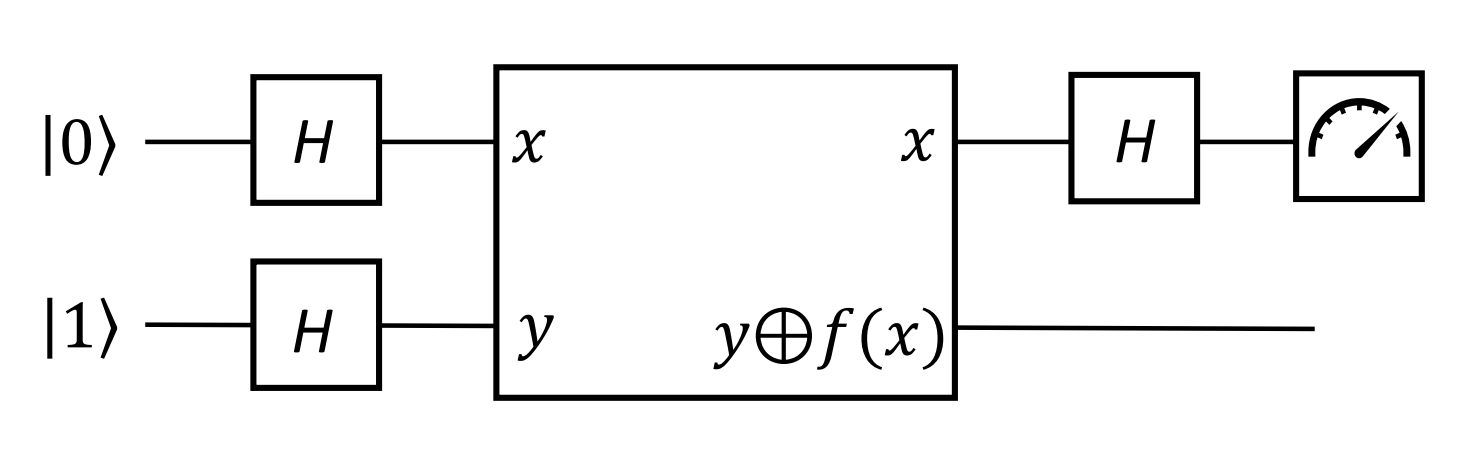}
\end{center}

\noindent 
 The middle part is $U_f$, a $2$-qubit unitary that performs $U_f\ket{x \otimes y} = \ket{ x \otimes (y \oplus f(x))}$. The algorithm conclude that $f$ is constant if and only if the measurement result is $0$. Deutsch’s algorithm is  described by the following commands in our QHL:
\[
    \text{Deutsch} := H [q_1]; H [q_2]; U_f [q_1, q_2]; H [q_1]; X \leftleftarrows q_1
\]

\noindent
We now verify Deutsch's algorithm using our QHL with deterministic assertions. Assuming $f$ is constant, we want to show   that 

\begin{center}

\htriple{$P^0_1 \land P^1_2$}{Deutsch}{$X=0$}, 

\end{center}
\noindent
i.e. if before executing Deutsch, the cq-state satisfies $q_1=\ket{0}$ and $q_2 = \ket{1}$, then after the execution of Deutsch, the cq-state satisfies $X=0$.  We will prove the above Hoare trip in a bottom-up procedure.

\tiny
\vspace{0.2cm}
\hspace*{-1.5cm}
\begin{minipage}{\textwidth}
\begin{prooftree}
\AxiomC{}
\RightLabel{(Claim1)}
\UnaryInfC{\htriple{$P^0_1 \land P^1_2$}{$H [q_1]$}{$[H_{q_1}] (P^0_1 \land P^1_2)$}}
\AxiomC{}
\RightLabel{(Claim2)}
\UnaryInfC{\htriple{$[H_{q_1}](P^0_1 \land P^1_2)$}{$H [q_2]; U_f [q_1, q_2]; H [q_1]; X\leftleftarrows q_1$}{$X=0$}}
\RightLabel{(SEQ)}
\BinaryInfC{\htriple{$P^0_1 \land P^1_2$}{$H [q_1]; H [q_2]; U_f [q_1, q_2]; H [q_1]; X\leftleftarrows q_1$}{$X=0$}}
\RightLabel{(definition)}
\UnaryInfC{\htriple{$P^0_1 \land P^1_2$}{Deutsch}{$X=0$}}
\end{prooftree}
\end{minipage}
\vspace{0.2cm}
\normalsize

\noindent
Claim 1 is due to $H$ is self-inverse, i.e. $[H_{q_1}]\,[H_{q_1}] (P^0_1 \land P^1_2) = P^0_1 \land P^1_2$, and using (UNITARY). 

\noindent
For Claim 2,

 \tiny
 \vspace{0.2cm}
\hspace*{-1.5cm}
\begin{minipage}{\textwidth}
\begin{prooftree}
\AxiomC{}
\RightLabel{(UNI)}
\UnaryInfC{\htriple{$[H_{q_1}] (P^0_1 \land P^1_2)$}{$H [q_2]$}{$[H_{q_2}]\,[H_{q_1}] (P^0_1 \land P^1_2  )$}}
\AxiomC{}
\RightLabel{(Claim3)}
\UnaryInfC{\htriple{$[H_{q_2}]\,[H_{q_1}](P^0_1 \land P^1_2)$}{$U_f [q_1, q_2]; H [q_1]; X\leftleftarrows q_1$}{$X=0$}}
\RightLabel{(SEQ)}
\BinaryInfC{\htriple{$[H_{q_1}] (P^0_1 \land P^1_2)$}{$H [q_2]; U_f [q_1, q_2]; H [q_1]; X\leftleftarrows q_1$}{$X=0$}}
\end{prooftree}
\end{minipage}
\vspace{0.2cm}
\normalsize

\noindent
For Claim 3,
 
 \tiny
 \vspace{0.2cm}
\hspace*{-2.25cm}
\begin{minipage}{\textwidth}
\begin{prooftree}
\AxiomC{}
\RightLabel{(UNI)}
\UnaryInfC{\htriple{$[H_{q_2}]\,[H_{q_1}] (P^0_1 \land P^1_2)$}{$U_f [q_1, q_2]$}{$[U^{-1}_{fq_1, q_2}]\, [H_{q_2}] \,[H_{q_1}] (P^0_1 \land P^1_2)$}}
\AxiomC{}
\RightLabel{(Claim 4)}
\UnaryInfC{\htriple{$[U^{-1}_{fq_1, q_2}]\, [H_{q_2}] \, [ H_{q_1}] (P^0_1 \land P^1_2)$}{$H [q_1]; X\leftleftarrows q_1$}{$X=0$}}
\RightLabel{(SEQ)}
\BinaryInfC{\htriple{$[H_{q_2}]\,[H_{q_1}] (P^0_1 \land P^1_2)$}{$U_f [q_1, q_2]; H [q_1]; X\leftleftarrows q_1$}{$X=0$}}
\end{prooftree}
\end{minipage}
\vspace{0.2cm}
\normalsize

\noindent
For Claim 4,

 \tiny
 \vspace{0.2cm}
\hspace*{-2.75cm}
\begin{minipage}{\textwidth}
\begin{prooftree}
\AxiomC{}
\RightLabel{(UNI)}
\UnaryInfC{\htriple{$[U^{-1}_{fq_1, q_2}]\, [H_{q_2}] \,[H_{q_1}] (P^0_1 \land P^1_2) $}{$H [q_1]$}{$[H_{q_1}] \,[U^{-1}_{fq_1, q_2}]\, [H_{q_2}] \,[H_{q_1}](P^0_1 \land P^1_2)$}}
\AxiomC{}
\RightLabel{(Claim 5)}
\UnaryInfC{\htriple{$[H_{q_1}] \,[U^{-1}_{fq_1, q_2}]\, [H_{q_2}] \,[H_{q_1}](P^0_1 \land P^1_2)$}{$X\leftleftarrows q_1$}{$X=0$}}
\RightLabel{(SEQ)}
\BinaryInfC{\htriple{$[U^{-1}_{fq_1, q_2}]\, [H_{q_2}] \,H_{q_1}](P^0_1 \land P^1_2)$}{$H [q_1]; X\leftleftarrows q_1$}{$X=0$}}
\end{prooftree}
\end{minipage}
\vspace{0.2cm}
\normalsize

\noindent
Claim 5 follows from (MEASURE) and (CONS) if we can show
 $$
 \models [H_{q_1}] \,[U^{-1}_{fq_1, q_2}]\, [H_{q_2}] \,[H_{q_1}] (P^0_1 \land P^1_2) \rightarrow (  [ \mathfrak{P}^0_1 ] (X=0)[X/0] \vee P^1_1 ) \wedge   ( [ \mathfrak{P}^1_1 ] (X=0)[X/1] \vee P^0_1 ).  $$

\noindent
Therefore, we need to show that
\[
 \sem{[H_{q_1}] \,[U^{-1}_{fq_1, q_2}]\, [H_{q_2}] \,[H_{q_1}] ( P^0_1 \land P^1_2)} \subseteq \sem{(  [ \mathfrak{P}^0_1 ] (X=0)[X/0] \vee P^1_1 ) \wedge   ( [ \mathfrak{P}^1_1 ] (X=0)[X/1] \vee P^0_1 )}
\]
The right-hand side reduces to
\[
\sem{(  [ \mathfrak{P}^0_1 ] (X=0)[X/0] \vee P^1_1 ) \wedge   P^0_1} = \sem{  [ \mathfrak{P}^0_1 ] (X=0)[X/0] \land P^0_1} = \sem{P^0_1}
\]
To reduce the left-hand side, we do a case analysis. 
If $f$ is a constant function which maps all input to $0$, then $U_f$ is the identity operator. The left-hand side reduces to 
\[
\sem{[H_{q_1}] \, [H_{q_2}] \,[H_{q_1}] (P^0_1 \land P^1_2)} = \sem{[H_{q_2}] ( P^0_1 \land P^1_2)}= \sem{P^0_1 \land [H_{q_2}]P^1_2} \subseteq \sem{P^0_1}.
\]
If $f$  is a constant function which maps all input to $1$, then $U_f = I \otimes  X$, where $I$ is identity and $ X$ is the X-gate. The left-hand side reduces to 
\[
\sem{[H_{q_1}] \, [H_{q_2}]\, [I \otimes  X]\,[H_{q_1}] (P^0_1 \land P^1_2)} = \sem{[H_{q_2}]\,[X_{q_2}] (P^0_1 \land P^1_2)}= \sem{P^0_1 \land [H_{q_2}]\,[X_{q_2}]P^1_2} \subseteq \sem{P^0_1}.
\]

This concludes the proof of
 $$
 \models [H_{q_1}] \,[U^{-1}_{fq_1, q_2}]\, [H_{q_2}] \,[H_{q_1}] (P^0_1 \land P^1_2) \rightarrow (  [ \mathfrak{P}^0_1 ] (X=0)[X/0] \vee P^1_1 ) \wedge   ( [ \mathfrak{P}^1_1 ] (X=0)[X/1] \vee P^0_1 ).  $$
The case when $f$ is balanced is similar.

\subsection{Quantum teleportation}

Quantum teleportation is a technique for transferring a qubit from a sender at one location to a receiver at another location. The following is the  circuit representation of quantum teleportation. The circuit consumes a qubit $\ket{\psi}$ at a location and produce $\ket{\psi}$ at another location. The sender and receiver share a Bell state, each having one qubit. The first two gates acting on $\ket{0}\otimes \ket{0}$ is to setup the Bell state $\frac{1}{\sqrt 2}(\ket{00}+\ket{11})$. Then the sender performs a controlled-not gate on $\ket{\psi}$ and a qubit from Bell state, a Hardmard gate on $\ket{\psi}$, and then measurements both qubits and sends the result of measurements to the receiver. Lastly, the receiver performs correction according to the measurement result.

\begin{center}
    \includegraphics[scale=0.25]{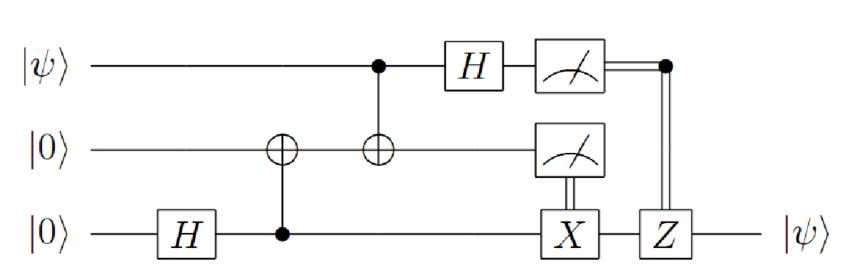}
\end{center}

The operations can be described by the following command:
\begin{equation*} \label{eq2}
\begin{split}
    & \text{Tele} := H [q_3]; CX [q_3,q_2]; CX [q_1, q_2]; H [q_1]; M_1 \leftleftarrows q_1; M_2 \leftleftarrows q_2; \\
    & \text{if } M_2 =1  \text{ then } X[q_3] \text{ else skip}; \text{if } M_1 =1 \text{ then } Z[q_3] \text{ else skip}
\end{split}
\end{equation*}
where $X,Z,CX$ are gates. We use $\text{Tele}_1;\text{Tele}_2$ to denote the first and second line of the  command.

 Assuming $\ket{\psi} = \ket{0}$, we now use our proof system to prove  
 \begin{center}
  \htriple{$P^0_1 \land P^0_2 \land P^0_3$}{Tele}{$P^0_3$}
 
 \end{center}
in a bottom-up procedure.

\newpage 

\begin{prooftree}
\AxiomC{}
\RightLabel{(Claim1)}
\UnaryInfC{\htriple{$P^0_1 \land P^0_2 \land P^0_3$}{$\text{Tele}_1$}{$V$}}
\AxiomC{}
\RightLabel{(Claim2)}
\UnaryInfC{\htriple{$V$}{$\text{Tele}_2$}{$P^0_3$}}
\RightLabel{(SEQ)}
\BinaryInfC{\htriple{$P^0_1 \land P^0_2 \land P^0_3$}{$\text{Tele}_1;\text{Tele}_2$}{$P^0_3$}}
\RightLabel{(definition)}
\UnaryInfC{\htriple{$P^0_1 \land P^0_2 \land P^0_3$}{Tele}{$P^0_3$}}
\end{prooftree}
\normalsize

For Claim 2, 

\tiny
 \vspace{0.2cm}
\hspace*{-2.5cm}
\begin{minipage}{\textwidth}
\begin{prooftree}
\AxiomC{}
\RightLabel{(2a)}
\UnaryInfC{\htriple{$V \land (M_2 =1)$}{$X[q_3]$}{W}}
\AxiomC{}
\RightLabel{(2b)}
\UnaryInfC{\htriple{$V \land \lnot (M_2 =1)$}{$skip$}{W}}
\RightLabel{(IF)}
\BinaryInfC{\htriple{$V$}{$\text{if } M_2 =1 \text{ then } X[q_3] \text{ else skip}$}{$W$}}
\AxiomC{}
\RightLabel{(2c)}
\UnaryInfC{\htriple{$W \land (M_1 =1)$}{$Z[q_3]$}{$P^0_3$}}
\AxiomC{}
\RightLabel{(2d)}
\UnaryInfC{\htriple{$W \land \lnot (M_1 =1 )$}{$skip$}{$P^0_3$}}
\RightLabel{(IF)}
\BinaryInfC{\htriple{$W$}{$\text{if } M_1 =1 \text{ then } Z[q_3] \text{ else skip}$}{$P^0_3$}}
\RightLabel{(SEQ)}
\BinaryInfC{\htriple{$V$}{$\text{if } M_2 =1 \text{ then } X[q_3] \text{ else skip}; \text{if } M_1 =1 \text{ then } Z[q_3] \text{ else skip}$}{$P^0_3$}}
\end{prooftree}
\end{minipage}
\vspace{0.2cm}
\normalsize

\noindent
where $W=( (M_1 =1) \land [Z_{q_3}]P_3^0)\lor (\lnot (M_1 =1 ) \land P_3^0)$, and $  V = ((M_2 =1) \land [X_{q_3}]W)\lor (\lnot ( M_2 =1) \land W)$. Here 2b and 2d is by (SKIP) and (CON), 2a and 2c is by  (CON) and (UNITARY).

For Claim 1, 

\tiny
 \vspace{0.2cm}
\hspace*{-2cm}
\begin{minipage}{\textwidth}
\begin{prooftree}
\AxiomC{}
\RightLabel{(UNI,SEQ)}
\UnaryInfC{\htriple{$P^0_1 \land P^0_2 \land P^0_3$}{$H [q_3]; CX [q_3,q_2]; CX [q_1, q_2]; H [q_1]$}{$[H_{q_1}][CX_{q_3,q_2}][CX_{q_1,q_2}][H_{q_3}] ( P^0_1 \land P^0_2 \land P^0_3 )$}}
\AxiomC{}
\RightLabel{(Claim3)}
\UnaryInfC{H3}
\RightLabel{(SEQ)}
\BinaryInfC{\htriple{$P^0_1 \land P^0_2 \land P^0_3$}{$H [q_3]; CX [q_3,q_2]; CX [q_1, q_2]; H [q_1]; M_1 \leftleftarrows q_1; M_2 \leftleftarrows q_2$}{$V$}}
\end{prooftree}
\end{minipage}
\vspace{0.2cm}
\normalsize

where $H3 =$  \htriple{$[H_{q_1}][CX_{q_3,q_2}][CX_{q_1,q_2}][H_{q_3}]  ( P^0_1 \land P^0_2 \land P^0_3 )$}{$M_1 \leftleftarrows q_1; M_2 \leftleftarrows q_2$}{$V$}. For Claim 3,

\tiny
\begin{prooftree}
\AxiomC{}
\RightLabel{(Claim4)}
\UnaryInfC{H4}
\AxiomC{}
\RightLabel{(MEAS)}
\UnaryInfC{\htriple{$([ \mathfrak{P}^0_2 ] V[M_2/0] \vee P^1_2 ) \wedge   ( [ \mathfrak{P}^1_2 ] V[M_2/1] \vee P^0_2 )$}{ $M_2 \leftleftarrows q_2$}{$V$} }
\RightLabel{(SEQ)}
\BinaryInfC{\htriple{$[H_{q_1}][CX_{q_3,q_2}][CX_{q_1,q_2}][H_{q_3}] ( P^0_1 \land P^0_2 \land P^0_3)$}{$M_1 \leftleftarrows q_1; M_2 \leftleftarrows q_2$}{$V$}}
\end{prooftree}
\normalsize
where $H4 =$  \htriple{$H[q_1]CX[q_1,q_2]CX[q_3,q_2]H[q_3] ( P^0_1 \land P^0_2 \land P_3^0 )$}{$M_1 \leftleftarrows q_1$}{$([ \mathfrak{P}^0_2 ] V[M_2/0] \vee P^1_2 ) \wedge   ( [ \mathfrak{P}^1_2 ] V[M_2/1] \vee P^0_2 )$}.
For Claim4, in order to use (MEAS), we need to prove

$$ \models  [H_{q_1}][CX_{q_1,q_2}][CX_{q_3,q_2}][H_{q_3}] (P^0_1 \land P^0_2 \land P^0_3) \rightarrow ([ \mathfrak{P}^0_1 ] \phi [M_1/0] \vee P^1_1 ) \wedge   ( [ \mathfrak{P}^1_1 ] \phi[M_1/1] \vee P^0_1 ) $$
 
where 
\begin{equation*} \label{eq3}
\begin{split}
    \phi &= ([ \mathfrak{P}^0_2 ] V[M_2/0] \vee P^1_2 ) \wedge   ( [ \mathfrak{P}^1_2 ] V[M_2/1] \vee P^0_2 )\\
    &= ([ \mathfrak{P}^0_2 ] (((M_2 =1) \land [X_{q_3}]W)\lor (\lnot (M_2=1) \land W))[M_2/0] \vee P^1_2 ) \wedge   ( [ \mathfrak{P}^1_2 ] V[M_2/1] \vee P^0_2 )\\
    &= ([ \mathfrak{P}^0_2 ] W \vee P^1_2 ) \wedge   ( [ \mathfrak{P}^1_2 ] [X_{q_3}]W \vee P^0_2 )
\end{split}
\end{equation*}

Note that here we abuse the notation $=$ to represent logical equivalence. Claim 4 follows from the calculation below,

$ ([ \mathfrak{P}^0_1 ] \phi [M_1/0] \vee P^1_1 ) \wedge   ( [ \mathfrak{P}^1_1 ] \phi[M_1/1] \vee P^0_1 )  $

$= ([ \mathfrak{P}^0_1 ] (([ \mathfrak{P}^0_2 ] W \vee P^1_2 ) \wedge   ( [ \mathfrak{P}^1_2 ] [X_{q_3}]W \vee P^0_2 )) [M_1/0] \vee P^1_1 ) \wedge   ( [ \mathfrak{P}^1_1 ] \phi[M_1/1] \vee P^0_1 ) $

$  =([ \mathfrak{P}^0_1 ] (([ \mathfrak{P}^0_2 ] (( ( M_1 =1) \land [Z_{q_3}]P_3^0)\lor (\lnot ( M_1 =1) \land P_3^0)) \vee P^1_2 )   \wedge   ( [ \mathfrak{P}^1_2 ] [X_{q_3}]W \vee P^0_2 )) [M_1/0] \vee P^1_1 ) \wedge   ( [ \mathfrak{P}^1_1 ] \phi[M_1/1] \vee P^0_1 ) $

$=  ([ \mathfrak{P}^0_1 ] (([ \mathfrak{P}^0_2 ] (( (0 =1) \land [Z_{q_3}]P_3^0)\lor (\lnot ( 0 =1) \land P_3^0)) \vee P^1_2 )   \wedge   ( [ \mathfrak{P}^1_2 ] [X_{q_3}]W \vee P^0_2 )) [M_1/0] \vee P^1_1 ) \wedge   ( [ \mathfrak{P}^1_1 ] \phi[M_1/1] \vee P^0_1 ) $

 $=  ([ \mathfrak{P}^0_1 ] (([ \mathfrak{P}^0_2 ] (       P_3^0        \vee P^1_2 )   \wedge   ( [ \mathfrak{P}^1_2 ] [X_{q_3}]W \vee P^0_2 )) [M_1/0] \vee P^1_1 ) \wedge   ( [ \mathfrak{P}^1_1 ] \phi[M_1/1] \vee P^0_1 ) $

  $     =  ([ \mathfrak{P}^0_1 ] (([ \mathfrak{P}^0_2 ] (       P_3^0        \vee P^1_2 )   \wedge   ( [ \mathfrak{P}^1_2 ] [X_{q_3}] (   ( (M_1 =1) \land [Z_{q_3}]P_3^0)\lor (\lnot (M_1 =1 ) \land P_3^0))                                                               
  \vee P^0_2 )) [M_1/0] \vee P^1_1 ) \wedge   ( [ \mathfrak{P}^1_1 ] \phi[M_1/1] \vee P^0_1 )        $

  $     =  ([ \mathfrak{P}^0_1 ] (([ \mathfrak{P}^0_2 ] (       P_3^0        \vee P^1_2 )   \wedge   ( [ \mathfrak{P}^1_2 ] [X_{q_3}] (   ( (0 =1) \land [Z_{q_3}]P_3^0)\lor (\lnot (0 =1 ) \land P_3^0))                                                               
  \vee P^0_2 ))  \vee P^1_1 ) \wedge   ( [ \mathfrak{P}^1_1 ] \phi[M_1/1] \vee P^0_1 )        $

    $     =  ([ \mathfrak{P}^0_1 ] (([ \mathfrak{P}^0_2 ] (       P_3^0        \vee P^1_2 )   \wedge   ( [ \mathfrak{P}^1_2 ] [X_{q_3}]        ( P_3^0 \vee P^0_2  )))  \vee P^1_1 ) \wedge   ( [ \mathfrak{P}^1_1 ] \phi[M_1/1] \vee P^0_1 )        $

$ =([ \mathfrak{P}^0_1 ] (([ \mathfrak{P}^0_2 ] P_3^0 \vee P^1_2 ) \wedge   ( [ \mathfrak{P}^1_2 ] [X_{q_3}] (P_3^0 \vee P^0_2) )) \vee P^1_1 ) \wedge   ( [ \mathfrak{P}^1_1 ] (([ \mathfrak{P}^0_2 ] ( (M_1 =1) \land [Z_{q_3}]P_3^0)\lor ( \lnot( M_1=1) \land P_3^0) \vee P^1_2 ) \wedge   ( [ \mathfrak{P}^1_2 ] [X_{q_3}]W \vee P^0_2 )) [M_1/1] \vee P^0_1 )  $

$  =([ \mathfrak{P}^0_1 ] (([ \mathfrak{P}^0_2 ]  (P_3^0 \vee P^1_2 )) \wedge   ( [ \mathfrak{P}^1_2 ] [X_{q_3}]  (P_3^0 \vee P^0_2) )) \vee P^1_1 ) \wedge   ( [ \mathfrak{P}^1_1 ] (([ \mathfrak{P}^0_2 ] [Z_{q_3}]  ( P_3^0 \vee P^1_2 )  ) \wedge   ( [ \mathfrak{P}^1_2 ] [X_{q_3}][Z_{q_3}]  (  P_3^0 \vee P^0_2) )) \vee P^0_1 )  $

$  =([ \mathfrak{P}^0_1 ] (([ \mathfrak{P}^0_2 ] ( P_3^0 \vee P^1_2) ) \wedge   ( [ \mathfrak{P}^1_2 ] ( P_3^1 \vee P^0_2) )) \vee P^1_1 ) \wedge  $\\ $ ( [ \mathfrak{P}^1_1 ] (([ \mathfrak{P}^0_2 ] ( P_3^0 \vee P^1_2) ) \wedge   ( [ \mathfrak{P}^1_2 ] (P_3^1 \vee P^0_2) )) \vee P^0_1 )  =: \chi $

which is implied by
$    [H_{q_1}][CX_{q_3,q_2}][CX_{q_1,q_2}][H_{q_3}] (P^0_1 \land P^0_2 \land P^0_3 ) $.
The implication holds because for all pure cq-state $\theta$,

if $\theta \in \sem{[H_{q_1}][CX_{q_3,q_2}][CX_{q_1,q_2}][H_{q_3}] (P^0_1 \land P^0_2 \land P^0_3 )}  $,

then $ H_{q_3}  CX_{q_1,q_2}    CX_{q_3,q_2}  H_{q_1}  \theta  \in \sem{ (P^0_1 \land P^0_2 \land P^0_3 )} =\{   \theta' | \theta'_q  = |000\rangle \} $,

then $      \theta  \in  \{   \theta' | \theta'_q  =    H_{q_1}  CX_{q_3,q_2}  CX_{q_1,q_2} H_{q_3} |000\rangle \} =   \{   \theta' | \theta'_q  =   \frac{1}{{2}}(\ket{000} + \ket{100} + \ket{011}+ \ket{111}) \} $.

That is, $ \sem{[H_{q_1}][CX_{q_3,q_2}][CX_{q_1,q_2}][H_{q_3}] (P^0_1 \land P^0_2 \land P^0_3 )} \subseteq    \sem{\chi } $

The case for $\ket{\phi} = \ket{1}$ is similar. By linearity, we have shown that teleportation works using QHL.

\iffalse
On the other hand,
\tiny
\begin{equation*} \label{eq6}
\begin{split}
    & [H_{q_1}][CX_{q_3,q_2}][CX_{q_1,q_2}][H_{q_1}] P^0_1 \land P^0_2 \land P^0_3\\
    & =H[q_3]CX[q_3,q_2]H[q_3] CX[q_1,q_2]P^0_1 \land P^0_2 \land P^0_3\\
    & =H[q_3]CX[q_3,q_2]H[q_3] P^0_1 \land P^0_2 \land P^0_3\\
\end{split}
\end{equation*}
\normalsize

which is $[H_{q_1}][CX_{q_3,q_2}][CX_{q_1,q_2}][H_{q_1}] P^0_1 \land P^0_2 \land P^0_3$.
\fi

\end{document}